\documentclass[%
a4paper, UKenglish, cleveref, nameinlink,  
colorlinks=true, linkcolor=blue!75!green, urlcolor=blue!75!green, citecolor=blue!75!green, autoref, thm-restate]{lipics-v2021}

\usepackage{apptools,thmtools} 
\let\origrestatable=\restatable
\renewcommand{\restatable}[1][]{%
  \origrestatable[%
    \ifstrempty{#1}%
      {$\star$}%
      {\IfAppendix{\hyperref[#1]{$\star$}}{\hyperref[#1*]{$\star$}}}%
  ]%
}

\graphicspath{{./lipics/}}%

\title{Upward-Planar Drawings with Bounded Span}

\author{Patrizio Angelini}{John Cabot University, Italy}{pangelini@johncabot.edu}{https://orcid.org/0000-0002-7602-1524}{}
\author{Sabine Cornelsen}{University of Konstanz, Germany}{sabine.cornelsen@uni-konstanz.de}{https://orcid.org/0000-0002-1688-394X}{}
\author{Giordano Da Lozzo}{Roma III, Italy}{giordano.dalozzo@gmail.com}{https://orcid.org/0000-0003-2396-5174}{}
\author{Fabrizio Frati}{Roma III, Italy}{frati@dia.uniroma3.it}{https://orcid.org/0000-0001-5987-8713}{}
\author{Philipp Kindermann}{Trier University, Germany}{kindermann@uni-trier.de}{https://orcid.org/0000-0001-5764-7719}{}
\author{Ignaz Rutter}{University of Passau, Germany}{rutter@fim.uni-passau.de}{https://orcid.org/0000-0002-3794-4406}{}
\author{Johannes Zink}{TU Munich, Germany}{johannes.zink@tum.de}{https://orcid.org/0000-0002-7398-718X}{}

\authorrunning{P. Angelini, S. Cornelsen, G. Da Lozzo, F. Frati, P. Kindermann, I. Rutter, J. Zink}

\Copyright{Patrizio Angelini, Sabine Cornelsen, Giordano Da Lozzo, Fabrizio Frati, Philipp Kindermann, Ignaz Rutter, and Johannes Zink} %

\Crefname{observation}{Observation}{Observations}
\Crefname{algorithm}{Algorithm}{Algorithms}
\Crefname{section}{Sect.}{Sects.}
\Crefname{observation}{Observation}{Observations}
\Crefname{lemma}{Lemma}{Lemmas}
\Crefname{claim}{Claim}{Claims}
\Crefname{figure}{Fig.}{Figs.}
\Crefname{figure}{Fig.}{Figs.}
\Crefname{invariant}{Inv.}{Invs.}
\Crefname{enumi}{Condition}{Conditions}
\Crefname{property}{Property}{Properties}
\Crefname{assumption}{Assumption}{Assumptions}

\ccsdesc[500]{Theory of computation~Design and analysis of algorithms}
\ccsdesc[500]{Mathematics of computing~Graph theory}

\keywords{graph drawing, directed graphs, upward planarity, span, level drawings}

\category{} %

\relatedversion{} %

\acknowledgements{This work was initiated at the Annual Workshop on Graph and Network Visualization (GNV~2025), Chania, June 2025.}

\hideLIPIcs  
\nolinenumbers %

\DeclareTextFontCommand{\emph}{\color{blue!75!green}\em}

\usepackage{caption}
\usepackage{subcaption}
\usepackage{dsfont}
\usepackage{todonotes}

\newcommand{\spn}{\ensuremath{\mathrm{span}}}
\newcommand{\vc}{\ensuremath{\mathrm{vc}}}
\newcommand{\vi}{\ensuremath{\mathrm{vi}}}
\newcommand{\td}{\ensuremath{\mathrm{td}}}
\newcommand{\att}{\ensuremath{\mathrm{att}}}

\begin{document}

\maketitle

\begin{abstract}
   We consider upward-planar layered drawings of directed graphs, i.e., crossing-free drawings in which each edge is drawn as a y-monotone curve going upward from its tail to its head, and the y-coordinates of the vertices are integers. The span of an edge in such a drawing is the absolute difference between the y-coordinates of its endpoints, and the span of the drawing is the maximum span of any edge. The span of an upward-planar graph is the minimum span over all its upward-planar drawings.

   We study the problem of determining the span of upward-planar graphs and provide both combinatorial and algorithmic results. On the combinatorial side, we present upper and lower bounds for the span of directed trees. On the algorithmic side, we show that the problem of determining the span of an upward-planar graph is NP-complete already for directed trees and for biconnected single-source graphs. Moreover, we give efficient algorithms for several graph families with a bounded number of sources, including $st$-planar graphs and graphs where the planar or upward-planar embedding is prescribed. Furthermore, we show that the problem is fixed-parameter tractable with respect to the vertex cover number and the treedepth plus the span.
\end{abstract}

\section{Introduction}

The visualization of directed graphs is a central topic in Graph Drawing and has sparkled extensive research efforts spanning several decades. A natural objective to effectively convey the hierarchical information is to construct crossing-free visualizations in which edges ``flow upward'' according to their direction. This has been formalized within the two classic notions of upward-planar drawings~\cite{dibattista/tamassia:wg87,BertolazziBMT98,DBLP:journals/comgeo/LozzoBF20,DBLP:journals/tcs/Frati24,GargTamassia1995,DBLP:journals/siamcomp/HuttonL96,JansenKKLMS23} and layered drawings~\cite{angelini_etal:tcs15,DBLP:journals/algorithmica/BannisterDDEW19,DBLP:journals/jocg/BlazejKKS0024,DBLP:journals/tcs/BrucknerR25,DBLP:journals/tsmc/BattistaN88,JungerLM98,DBLP:journals/talg/KlemzR19}.
An \emph{upward-planar (layered)} drawing of a directed acyclic graph, for short DAG, maps each vertex to a point in the plane (with integer y-coordinates) and each edge to a strictly y-monotone Jordan arc from its tail to its head, so that no two edges intersect except at common endpoints. By \cite{DBLP:journals/algorithmica/EadesFLN06,pach/toth:04}, we may assume that the edges are drawn as straight-line segments. A DAG is \emph{upward-planar} if it has an  upward-planar drawing. 

The \emph{span} of an edge $e$ in an upward-planar layered drawing~$\Gamma$ is the difference between the y-coordinate of its head and the y-coordinate of its tail. It is denoted by~$\spn_{\Gamma}(e)$ or simply by~$\spn(e)$, if~$\Gamma$ is clear from the context.  The \emph{span} of drawing~$\Gamma$, denoted by~$\spn(\Gamma)$, is the maximum span of any of its edges. The \emph{span}~$\spn(G)$ of an upward-planar graph~$G$ is the minimum span of any of its upward-planar layered drawings. See \cref{fig:intro-example} for an example of a DAG with span~three.  In this paper, we consider the problem $k$-\textsc{span upward planarity}, which takes as input an upward-planar graph and asks whether it has span at most $k$. Furthermore, if the graph comes with a prescribed (upward-)planar embedding, then such an embedding must be preserved in the produced drawing.

\begin{figure}[t]
    \centering
    \includegraphics[page=2]{figures/algorithms.pdf}
    \caption{A directed graph with span three. The drawing has span three and there is a directed $s$-$t$-path of length 8 and a directed $s$-$t$-path of length 3. So the span has to be at least $8/3 > 2$.}
    \label{fig:intro-example}
\end{figure}

\subparagraph*{Motivations and related work.} The problem of constructing layered drawings with bounded span has previously been studied for undirected graphs~\cite{bekos_etal_span:gd24, GiacomoDLMMW25}. The authors study the problem of computing acyclic orientations of the edges of a graph such that the resulting DAG has span at most $k$. 
A significant motivation to study drawings with bounded span lies in their connection with the \emph{edge-length ratio}, that is, the maximum ratio between the Euclidean lengths of any two edges in the drawing~\cite{BlazjFL21,BorrazzoF20,GiacomoDLMMW25}. 
This relationship, in the directed case, is: If an upward-planar layered drawing with span $k$ exists, then an upward-planar layered drawing with edge-length ratio at most $k(1+\epsilon)$ exists, for any $\epsilon>0$, see~\cite{GiacomoDLMMW25}.

Undirected graphs that admit leveled-planar drawings with span 1 are also known as
\emph{(proper) leveled planar}.
Heath and Rosenberg showed that recognizing graphs that admit such drawings is
NP-complete~\cite{heath/rosenberg:jc92}. On the contrary, we observe that deciding whether an upward-planar DAG admits an upward-planar layered drawing of span $1$, i.e.,  solving $1$-\textsc{span upward planarity},  can be done in linear time, as all layers are fixed and thus the problem reduces to level-planarity testing~\cite{JungerLM98}.
If the number of allowed distinct y-coordinates (layers) is bounded by a constant, then the problem can be solved in polynomial time for both directed and undirected graphs~\cite{dujmovic_etal:algorithmica08}.

Drawing edges with small span is also a goal in the layer-assignment step of the celebrated Sugiyama framework~\cite{sugiyama_etal:81} for drawing directed graphs. There, however, edges are allowed to cross.
Minimizing the sum of the span of all edges in the Sugiyama framework can be done in polynomial time using an ILP-approach with a totally unimodular constraint matrix~\cite{gansner_etal:93}. This approach can be extended to decide in polynomial time whether there is a (not necessarily planar) upward layered drawing in which the maximum span is at most~$k$.
In this case, the formulation is simpler, as it does not require a global objective.

\subparagraph*{Our Results.}%
We first provide upper and lower bounds for the span of directed trees in terms of the number $n$ of vertices, the maximum degree $d$, and the maximum length $\ell$ of a directed path. Paths and caterpillars have span 1 (\cref{th:path}) 
unless an upward-planar embedding is given; in this case, even paths require a span that is linear in $n$ (\cref{prop:Pn}). If no planar embedding is given and each vertex is a source or a sink, i.e., $\ell=1$, 
then the span of directed trees is exactly $\lfloor d/2 \rfloor + 1$ (\cref{th:ub-degree-ss,cor:Tk}). In this case, the span is also lower bounded by $\log_2n/(2\log_2\log_2 n)$ 
(\cref{cor:Tk_boundn}).  If there are longer directed paths, 
there exist binary directed trees with span at least $(\ell+1)/2$ and $(n-1)/6$ (\cref{prop:binaryTree}).  In general, the span for directed trees is in $\mathcal O((d-1)^{\ell}) \cap \mathcal O(\ell \cdot n^{\log_2\varphi})$ with $\varphi=(1+\sqrt{5})/2$ (\cref{th:ub-path-degree,th:ub-longest-path}) and lower bounded by $(d-1)^\ell/2\ell$ (\cref{cor:Tkell}).

The $k$-\textsc{span upward planarity} problem is NP-complete, even for biconnected single-source graphs (\cref{thm:nph-single-source}) and for directed trees (\cref{thm:nph-trees}).
On the positive side, 
we show that 
it 
can be solved in linear time for planar st-graphs (\cref{thm:st-planar}) and single-source outerplanar graphs (\cref{th:outerplanar}), %
and in XP time with respect to the number of sources for planar DAGs with a prescribed planar embedding (\cref{cor:xp}). Finally, we show that $k$-\textsc{span upward planarity} is in FPT
if parameterized by the vertex cover number (\cref{thm:fpt-vc}), or by the span plus the %
treedepth (\cref{thm:fpt-para+span}).

\subparagraph*{Preliminaries.} 
For a directed edge $(u,v)$, we call $u$ its \emph{tail} and $v$ its \emph{head}. A vertex is a \emph{source} (\emph{sink}) if it is not the head (tail) of any edge. The \emph{underlying graph} of a directed graph~$G$ is the undirected graph obtained from $G$ by ignoring the edge directions. The \emph{degree} of a vertex is the number of its incident edges.
The \emph{degree} of a graph is the maximum degree of any of its vertices. We denote by $y(v)$ the y-coordinate of a vertex~$v$ in a drawing.

A \emph{planar embedding} of a connected planar graph fixes the outer face and the cyclic order of the edges around each vertex. 
A \emph{plane graph} is a planar graph equipped with a planar embedding. 
In an upward-planar drawing of a DAG (a) the cyclic order of the edges around each vertex is such that  all outgoing edges are consecutive (a planar embedding satisfying this condition is called \emph{bimodal}) and (b) exactly one angle between consecutive edges around each sink or source is greater than $\pi$ (this is called the \emph{large angle} of the source or sink). A \emph{source-switch} (\emph{sink-switch}) for a face $f$ in an upward-planar drawing of a DAG $G$ is a source (resp.\ sink) for the subgraph of $G$ composed of the vertices and edges on the boundary of $f$. An \emph{upward-planar embedding} is a bimodal embedding together with an assignment of large angles to sources and sinks such that each face $f$ has a number of large angles equal to the number of source-switches minus $1$, if $f$ is an internal face, or plus $1$, if $f$ is the outer face. An \emph{upward-plane graph} is a directed graph equipped with an upward-planar embedding. 

An \emph{st-graph} is a directed graph with a single source $s$ and a single sink $t$. An st-graph is \emph{planar} if it admits an \emph{st-planar embedding}, i.e., a planar embedding in which $s$ and $t$ are incident to the outer face. A planar st-graph equipped with an st-planar embedding is a \emph{plane st-graph}. An st-planar embedding uniquely determines an upward-planar embedding (the only large angles are those incident to $s$ and $t$ in the outer face)~\cite{BattistaT88}.  %

We use $[k]$ as an abbreviation for $\{1, 2, ..., k\}$.

\section{Lower Bounds}\label{sec:lowerBounds}

We prove lower bounds for the span of some graph classes.
The span of a graph with $n$ vertices is at most $n-1$. 
In fact, in the worst case, the span is in $\Theta(n)$ for paths with a fixed upward-planar embedding, even if there is no directed subpath of length two.

\begin{proposition}\label{prop:Pn}
    Let $P_n=\langle v_1,\dots,v_n\rangle$ be a path that is oriented such that $v_i$ is a source if~$i$ is odd and a sink if $i$ is even (i.e. $P_n$ contains no directed path of length two).  
    If we fix the upward-planar embedding of $P_n$ so that all large angles are on the same side of $P_n$ (see \cref{fig:spiralT} for an illustration of $P_{11}$), 
    then $P_n$ has span at least $n/2$. 
\end{proposition}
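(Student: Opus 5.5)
The plan is to show that the drawing is forced to be a spiral, and that a spiral forces the vertical extents of the edges to grow. Fix an upward-planar layered drawing $\Gamma$ of $P_n$ respecting the prescribed embedding, and write $e_i$ for the edge between $v_i$ and $v_{i+1}$. Let $I_i=[y(v_i),y(v_{i+1})]$ or $[y(v_{i+1}),y(v_i)]$ be its y-interval, so $\spn(e_i)=|I_i|$ as a length. The target is a monotonicity statement: along the path the intervals are nested, i.e., either $I_1\subseteq I_2\subseteq\dots\subseteq I_{n-1}$ or the reverse, and the inclusion is strict at least every second step. Since endpoints are integers, strictness means the length grows by at least $1$ per two steps, which gives some edge of span at least about $(n-2)/2+1\geq n/2$ after checking the parity bookkeeping.

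\textbf{Step 1 (local nesting).} Consider three consecutive edges $e_{i-1},e_i,e_{i+1}$, and say $v_i$ is a sink and $v_{i+1}$ a source (the other case is symmetric). The large angles at $v_i$ and $v_{i+1}$ lie on the same side of the path. The two edges at the sink $v_i$ form a ``$\wedge$'' whose small angle is on the side opposite to the large angle, and likewise the edges at $v_{i+1}$ form a ``$\vee$''. Since both large angles are on the same side, the region bounded by $e_i$ and the continuations at its two ends behaves like a wedge that the path keeps turning into. Using y-monotonicity and planarity, I would argue that the head of $e_{i+1}$ cannot lie strictly inside $I_i$ on the wrong side without $e_{i+1}$ or a later edge crossing $e_{i-1}$ or $e_i$. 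Concretely, one derives from the face/angle structure (each face's large-angle count from the definition of an upward-planar embedding) that all large angles on one side means the outer face is on that side for every vertex. Hence every $v_j$ with $j>i+1$ must stay visible from the outer face, which forces the path to spiral outward.

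\textbf{Step 2 (global monotonicity).} I then turn the local statement into an induction along the path. The key consequence is that sources have strictly decreasing y-coordinates and sinks have weakly increasing y-coordinates, or a variant in which at least one of the two sequences is strict. Then $y(v_{i+2})<y(v_i)$ for every source $v_i$, which gives $|I_{i+2}|\ge|I_i|+1$.

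\textbf{Step 3 (counting).} Iterating from $|I_1|\ge1$ over roughly $n/2$ double steps gives $\max_i|I_i|\ge n/2$.

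I expect Step 1 to be the main obstacle: rigorously excluding an inward spiral, or non-nested configurations, from the embedding data. I would first try to use that all large angles lie in the outer face. This is plausible because the face count $\#\text{large}=\#\text{switches}+1$ for the outer face of a tree, which is the only face, forces exactly this. Every vertex then must be reachable from infinity on that side, and an inward-turning step would enclose later vertices. If the outward direction cannot be pinned down, I would settle for nesting in either direction; the counting in Step 3 is symmetric in that case.
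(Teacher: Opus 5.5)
There is a genuine gap in Steps~1--2. The single-direction nesting you aim for is false, so neither it nor your fallback (``nesting in either direction'') can be proved. Having all large angles on the same side only says that the path turns in the same rotational sense at every internal vertex; it does not force the path to wind outward. A drawing can spiral outward from $v_1$ to some vertex $v_\ell$ and then spiral back inward to $v_n$, running in the channel between consecutive turns of the first half.

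Concretely, for $n=5$ place $v_1,\dots,v_5$ at $(0,0),(1,1),(2,-1),(-1,1),(-\frac12,0)$ and route $v_3v_4$ through $(-\frac{7}{10},0)$. This drawing is upward-planar, every turn is clockwise (all large angles are on the left), and the intervals are $[0,1]\subseteq[-1,1]=[-1,1]\supseteq[0,1]$. The sources sit on levels $0,-1,0$, so your Step~2 claim $y(v_{i+2})<y(v_i)$ fails at $i=3$. Your intended justification through the outer face is vacuous here: a tree has only one face, so every angle lies in the outer face under any assignment, and an inward turn encloses nothing.

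The missing idea is the coil decomposition that the paper imports from Frati (Lemma~2 there). For some $\ell$, $\langle v_1,\dots,v_\ell\rangle$ is a coil in which every source lies strictly below, and every sink strictly above, all \emph{earlier} vertices. Likewise $\langle v_\ell,\dots,v_n\rangle$ is a coil with the same property with respect to all \emph{later} vertices. Inside a coil all vertices are on distinct levels (sources below sinks), and the two vertices at the outer end are the extremes. So the span grows by one per edge, not per two: the edge $v_{\ell-1}v_\ell$ has span at least $\ell-1$, and $v_\ell v_{\ell+1}$ has span at least $n-\ell$. The factor $\frac12$ comes from the unknown split point $\ell$, not from strict growth every other step.

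Your ``parity bookkeeping'' also cannot succeed for odd $n$. The quantity $\max(\ell-1,n-\ell)$ is only guaranteed to be at least $(n-1)/2$, and the $n=5$ drawing above has span $2<5/2$ (and $P_3$ trivially has span~$1$). The bound $n/2$ is valid for even $n$; for odd $n$ the correct bound is $(n-1)/2$. The paper's last inequality $\max(\ell-1,n-\ell)\ge n/2$ carries the same off-by-one slip.
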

    \begin{figure}[h]
    	\centering
    \begin{minipage}[t]{0.33\textwidth}
		\centering
		\includegraphics[page=19]{figures/highDegreeHighSpan.pdf}
		\subcaption{path $P_{11}$, fixed embedding}
        \label{fig:spiralT}
	\end{minipage}%
	\begin{minipage}[t]{0.33\textwidth}
		\centering
		\includegraphics[page=18]{figures/highDegreeHighSpan.pdf}
		\subcaption{3-connected graph $G_7$}
        \label{fig:spiralG}
	\end{minipage}%
	\begin{minipage}[t]{0.33\textwidth}
		\centering
		\includegraphics[page=6]{figures/binary.pdf}
		\subcaption{binary tree $T$}
        \label{fig:binaryTree}
	\end{minipage}
    		\caption{Graphs with large span.
            }
    	\label{fig:spiral}
    \end{figure}
\begin{proof}
    In an upward-planar drawing of $P_n$ with the given embedding, path $P_n$
    can be partitioned into two subpaths $\langle v_1,\dots,v_\ell \rangle$ and $\langle v_\ell, \dots , v_n\rangle$ with $1 \leq \ell \leq n$ that are so-called {\em coils}   %
    with the following property: For any $i$ and $j$ with $1 \leq j < i \leq \ell$ or $\ell \leq i < j \leq n$, if $v_i$ is a source then $y(v_i) < y(v_j)$ and if $v_i$ is a sink then $y(v_i) > y(v_j)$; see \cite[Lemma~2]{frati:08}. This implies  that the edge between $v_{\ell-1}$ and $v_\ell$ has span $\ell-1$ and the edge between $v_\ell$ and $v_{\ell+1}$ has span $n - \ell$ (if these edges exist). So the span is at least $\max(\ell-1,n-\ell) \geq n/2$.  
\end{proof}

By combining three copies of $P_n$ (see \cref{fig:spiralG}), we can also consider variable embeddings: 
\begin{corollary}
\label{cor:3-connected}
    For each $n$ there is a 3-connected upward-planar graph with $3n$ vertices, 
    degree at most four, maximum length of a directed path at most five, and span at least $n/2$.
\end{corollary}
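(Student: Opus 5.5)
The idea is to build a graph $G_n$ from three copies $P^1,P^2,P^3$ of the path $P_n$ of \cref{prop:Pn}, arranged as nested spirals (as in \cref{fig:spiralG}), and to add connecting edges. The connecting edges serve two purposes: they make the graph $3$-connected with degree at most four, and they force every upward-planar embedding of $G_n$ to induce, on at least one copy, the ``all large angles on one side'' embedding of \cref{prop:Pn}. Once this is shown, \cref{prop:Pn} applied to that copy gives span at least $n/2$, since the span of $G_n$ is at least the span of any subgraph drawn with its induced embedding.

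\textbf{Construction.} Take $P^j=\langle v^j_1,\dots,v^j_n\rangle$ for $j=1,2,3$, each oriented alternately source/sink. Join $v^1_i$ to $v^2_i$ and $v^2_i$ to $v^3_i$ by \emph{rungs}, and add a few edges closing the three paths at their ends so that the result is a wheel-like triangulated ladder. Path vertices get degree at most $2+2=4$. Orient each rung so that no long directed paths arise: every directed path alternates between path edges and rungs only a bounded number of times, which gives length at most five. Then give an explicit upward-planar drawing to certify that $G_n$ is upward-planar. The spiral/coil drawing itself serves, with the three coils drawn in parallel.

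\textbf{3-connectivity and uniqueness of embedding.} Check 3-connectivity directly: deleting any two vertices leaves each path connected through the rungs, since a ladder with three rails is 3-connected away from the ends, and the closing edges handle the ends. By Whitney's theorem, the planar embedding is then unique up to the choice of the outer face and a mirror. The faces are essentially the quadrilaterals or triangles between consecutive rungs, plus the end faces.

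\textbf{Main obstacle.} The key step is to show that, for every choice of the outer face, the upward-planar embedding forces the large angles of the sources and sinks of some copy $P^j$ to lie on a common side. The plan is the following.
\begin{itemize}
\item In each inner small face, the counting condition (number of large angles $=$ number of source-switches $-1$) leaves essentially no freedom, so the large angle of $v^j_i$ must point into a specific face.
\item The faces bounded between $P^1$ and $P^2$ receive the large angles of $P^1$'s switches consistently on one side. By the rung orientation this is the side away from $P^2$.
\item The outer face can destroy this consistency for at most one of the three copies, namely the copy adjacent to it. A middle copy, or at least one copy not incident to the outer face, therefore retains the required embedding.
\end{itemize}
If the face-counting proves too loose, the fallback is to pick the rung orientations so that each inner face has exactly one source-switch. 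In that case no inner face gets a large angle, which forces all large angles of the inner copy into the faces on a single side. Finally, the vertex count is $3n$, plus $O(1)$ if closing vertices are needed; these can be absorbed by adjusting $n$.
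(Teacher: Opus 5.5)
Your overall plan matches the paper's: three copies of $P_n$ joined by rungs, 3-connectivity fixing the rotation system, then \cref{prop:Pn} applied to one copy. However, the step you call the main obstacle is never carried out, and the mechanism you sketch for it does not work as stated.

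You never fix the rung orientations, so neither the bound of five on directed paths nor any forcing of large angles is actually established. More fundamentally, the face-counting condition of an upward-planar embedding only constrains large angles at sources and sinks of $G_n$. \cref{prop:Pn} instead needs the large angles of the \emph{induced} drawing of a copy $P^j$. At a vertex of $P^j$ that is not a switch of $G_n$, the counting says nothing directly about them.

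Several intermediate claims are also unsupported or wrong:
\begin{itemize}
\item Your sentence about the faces between $P^1$ and $P^2$ contradicts itself: those faces lie on the side of $P^1$ facing $P^2$, not away from it.
\item The end faces (e.g.\ the triangle through the three first vertices) meet all three copies, so ``the outer face affects at most one copy'' fails.
\item You do not know which face is outer, so your fallback (``each inner face has exactly one source-switch, so no inner face gets a large angle'') gives no conclusion independent of that choice. It also does not say on which side the path-angle of a non-switch vertex lies.
\end{itemize}

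The missing idea is local and makes face counting unnecessary. Orient the rungs so that every middle vertex is the head of one rung and the tail of the other, with the rung pointing ``against'' the path edges always on the same rail:
\begin{itemize}
\item $u_i\to v_i\to w_i$ for odd $i$ (path sources);
\item $w_i\to v_i\to u_i$ for even $i$ (path sinks);
\item close the ends with edges between $u_1,w_1$ and between $u_n,w_n$.
\end{itemize}
In any upward-planar drawing, at a path source $v_i$ the incoming rung arrives from below while both path edges leave upward. So the rung lies in the large angle of the induced drawing of $\langle v_1,\dots,v_n\rangle$. Symmetrically, at a path sink the outgoing rung lies in the large angle.

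By 3-connectivity the rotation at every $v_i$ is fixed up to one global mirror, so the rung to $u_i$ is always on the same side of the middle path. Hence all its large angles lie on one side regardless of the outer face, and \cref{prop:Pn} applies directly. This is exactly the paper's argument. The same orientation also gives the length bound at once: a longest directed path looks like $u_1\to v_1\to w_1\to w_2\to v_2\to u_2$. Finally, since only edges are added at the ends, the vertex count is exactly $3n$ without adjusting $n$.
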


\begin{proof}
    We consider three copies of the path from \cref{prop:Pn} with alternating sources and sinks: $\langle u_1,\dots,u_n\rangle$, $\langle v_1,\dots,v_n\rangle$, $\langle w_1,\dots,w_n\rangle$. For $1 \leq i \leq n$ odd, we add the edges $(u_i,v_i)$ and $(v_i,w_i)$ and for  $1 \leq i \leq n$ odd, we add the edges $(w_i,v_i)$ and $(v_i,u_i)$. Finally, we add the edge $(u_1,w_1)$ and the edge $(u_n,w_n)$ or $(w_n,u_n)$, depending on whether $n$ is odd or even. See \cref{fig:spiralG} for an illustration. The resulting graph $G_n$ is upward-planar, has degree at most four, and no directed path has more than five edges.  Since $G_n$ is 3-connected, it has a unique planar embedding up to a flip and the choice of the outer face. 
    Let $P_n$ be the middle path $\langle v_1,\dots,v_n\rangle$. Since each vertex in $P_n$ is the head and the tail of some edge not in $P_n$, the upward-planar embedding of $P_n$ within $G_n$ is fixed; more precisely, the large angles are all on the same side of $P_n$. Thus, by \cref{prop:Pn}, $P_n$ has span at least $n/2$. Thus, also $G_n$ has span at least $n/2$.
\end{proof}

Next, we consider trees without fixed embedding. 
If the length of a directed path is unbounded then even the span of a binary tree can be linear in the number of its vertices.

\begin{proposition}\label{prop:binaryTree}
    For each $\ell$ and each $n\geq n_0 =3\ell+4$, there is an $n$-vertex directed tree, with degree at most three, maximum length of a directed path at most $\ell$, and span at least $\lceil (\ell+1)/2 \rceil = \lceil (n_0-1)/6 \rceil$.
\end{proposition}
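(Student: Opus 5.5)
The plan is to force, in every upward-planar layered drawing, two vertices $x,y$ joined by a short undirected path (two edges) to satisfy $|y(x)-y(y)|\ge \ell+1$. This gives an edge of span at least $\lceil(\ell+1)/2\rceil$. The vertex count $3\ell+4$ suggests a \emph{spider} built from a centre $c$ of degree three and three legs with $\ell+1$ vertices each. I will choose orientations on the legs so that every directed path has length at most $\ell$. For $n>n_0$ I will pad the tree by attaching a directed path (or alternating path) to an extremal vertex. The padding must keep the degree at most three and add no directed path longer than $\ell$. Since attaching vertices cannot decrease the span, the bound carries over.

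\textbf{Construction (tentative).} Let $c$ have in-neighbours $a$ and $b$ and out-neighbour $d$.
\begin{itemize}
\item From $a$, hang a directed path of length $\ell-1$ that ends in $a$, i.e., $a$ is reached from below.
\item Symmetrically from $d$, hang a directed path of length $\ell-1$ starting at $d$.
\item The third leg at $b$ alternates orientation, so that $b$ is a sink of an alternating segment.
\end{itemize}
After fixing the orientations I will check the two side conditions directly: the degree is at most three, and the longest directed path has length exactly $\ell$. This is plausible because any directed path passes $c$ at most once and uses at most one long leg.

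\textbf{Key lower-bound step.} Using only monotonicity, any directed path of length $\ell$ forces a $y$-difference of at least $\ell$ between its endpoints. The difficulty is that a tree imposes no enclosure, so an edge of span about $(\ell+1)/2$ can never be forced by monotonicity alone. Planarity must interact with the branching at $c$. The argument I would try is as follows. Consider the two edges at $c$ that lie on the undirected path between the far ends of two legs. Both legs leave $c$ on the same side in the rotation, since bimodality at $c$ keeps the incoming edges consecutive. Hence the directed leg hanging from one neighbour of $c$ must pass ``around'' the other edge. Because edges are $y$-monotone and the drawing is crossing-free, the other edge must then reach past the extreme level of that leg. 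This forces the two edges at $c$ together to cover at least $\ell+1$ levels.

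\textbf{Main obstacle.} Making this planarity step rigorous is the hard part. Trees are flexible: any leg can be drawn on either side of $c$, and it is not obvious that a leg cannot simply be placed outward where it blocks nothing. If the spider does not work, I would try a binary tree like the one in \cref{fig:binaryTree}. It would have two long directed paths sharing a short connector and be oriented so that each path's large-angle side is forced to contain the other path, in the spirit of the coil argument of \cref{prop:Pn}.

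I would first test small cases, $\ell=1,2$, by exhaustive reasoning. This confirms which orientation pattern actually forces span $\lceil(\ell+1)/2\rceil$. I would then generalise the crossing argument by induction along the legs.
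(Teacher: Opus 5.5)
There is a genuine gap: the tentative construction fails, and the key lower-bound step is not a valid argument.

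\textbf{Why the spider fails.} Since $a,b$ are in-neighbours and $d$ is the out-neighbour of $c$, the directed leg ending in $a$, the edges $(a,c)$ and $(c,d)$, and the directed leg starting at $d$ form one directed path of length $2\ell$. This violates the path-length condition. Worse, your tree has span $1$. Put that directed path on a vertical line with unit spans, and draw the leg at $b$, whatever its orientation, as an $x$-monotone unit-span zigzag to the right of the line.

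\textbf{Why the key step and the fallback do not help.} Bimodality cannot rescue the argument. At a vertex of degree three it is automatic, since any two of three edges are consecutive in the rotation, so it places no constraint on where the legs go. The fallback you sketch, two long directed paths joined by a short connector, also has no forcing mechanism. With only two branches the drawer can place them side by side, and nothing needs to be nested.

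\textbf{The missing idea.} Use three branches at a common sink, so that one branch must be in the middle of the rotation. Also make every neighbour of that sink carry both a descending wall and an ascending path. Concretely, take a sink $r$ with in-neighbours $u_m,v_m,w_m$, where $m=\lceil \ell/2\rceil$. Each of these is the middle vertex of its own directed path $u_0\to\dots\to u_\ell$, $v_0\to\dots\to v_\ell$, $w_0\to\dots\to w_\ell$. This tree has exactly $3\ell+4$ vertices and degree three, and for $\ell\ge 2$ it has no directed path longer than $\ell$.

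\textbf{The lower-bound argument.} In any drawing, say $(v_m,r)$ is the middle in-edge at $r$. The paths $r,u_m,\dots,u_0$ and $r,w_m,\dots,w_0$ are $y$-monotone walls reaching at least $m+1$ levels below $r$. Suppose $(v_m,r)$ had span less than $\lceil(\ell+1)/2\rceil\le m+1$. Then $v_m$ would lie between the walls and strictly above $u_0$ and $w_0$. The ascending path $v_m,\dots,v_\ell$ could then neither cross a wall, nor leave downwards, nor reach level $y(r)$, because the region closes at $r$. Its $\ell-m+1=\lceil(\ell+1)/2\rceil$ vertices would therefore occupy distinct levels in $[y(v_m),y(r)-1]$, contradicting the assumed span.

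This forces a single long edge at the centre, so you do not need your two-edge path. It is exactly the degree-three neighbours of the centre that a spider lacks. Your padding for $n>n_0$ is fine, since span cannot decrease when passing to a supertree.
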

\begin{proof}
    Let $T$ be any directed tree %
    that contains three disjoint directed paths $\langle u_0,\dots,u_\ell\rangle$, $\langle v_0,\dots,v_\ell\rangle$, $\langle w_0,\dots,w_\ell\rangle$ and an additional vertex $r$ %
    with three edges $(u_m,r)$, $(v_m,r)$, and $(w_m,r)$ with $m=\lceil \ell/2 \rceil$. See \cref{fig:binaryTree}. 
    Consider an upward-planar drawing of $T$. W.l.o.g.\ let the incoming edges of $r$ be $(u_m,r)$, $(v_m,r)$, and $(w_m,r)$ from left to right. 
    If the edge $(v_m,r)$ had span smaller than $\lceil (\ell+1)/2 \rceil$ then $u_0$ and $w_0$ are below $v_m$, so the $\lceil (\ell+1)/2 \rceil$ vertices $v_{m},\dots,v_\ell$ would have to be drawn on distinct layers between the paths $\langle u_0,\dots,u_m,r\rangle$ and $\langle w_0,\dots,w_m,r\rangle$. But there are less than $\lceil (\ell+1)/2 \rceil$ layers. 
\end{proof}

We continue with trees with arbitrary degrees, but bounded directed path length.
We recursively define for each positive $d \in \mathds Z$ an undirected tree~$T'_d$ rooted at a vertex~$r$ of degree~$d$. $T'_1$ is a single edge and $T'_2$ is a path of length two.
If $d>2$ then the subtree rooted at each neighbor of~$r$ is $T'_{d-2}$ if $d$ is odd and $T'_{d-3}$ otherwise. See \cref{fig:Tk}. Let ${T_d}$ be an orientation of $T'_d$ in which all vertices are sources or sinks.

\begin{figure}[h]
		    \centering
	\begin{minipage}[t]{0.15\textwidth}
		\centering
		\includegraphics[page=1]{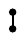}
		\subcaption{$T'_1$}
	\end{minipage}
	\begin{minipage}[t]{0.15\textwidth}
		\centering
		\includegraphics[page=2]{figures/highDegreeHighSpan.pdf}
		\subcaption{$T'_2$}
	\end{minipage}
	\begin{minipage}[t]{0.15\textwidth}
		\centering
		\includegraphics[page=3]{figures/highDegreeHighSpan.pdf}
		\subcaption{$T'_3$}
	\end{minipage}
	\begin{minipage}[t]{0.15\textwidth}
		\centering
		\includegraphics[page=4]{figures/highDegreeHighSpan.pdf}
		\subcaption{$T'_4$}
	\end{minipage}
		\begin{minipage}[t]{0.15\textwidth}
		\centering
		\includegraphics[page=21]{figures/highDegreeHighSpan.pdf}
		\subcaption{$T_4$, $r$ sink}
	\end{minipage}
	\begin{minipage}[t]{0.15\textwidth}
		\centering
		\includegraphics[page=22]{figures/highDegreeHighSpan.pdf}
		\subcaption{$T_4$, $r$ source}
	\end{minipage}\\[1em]
\begin{minipage}[t]{0.4\textwidth}
		\centering
		\includegraphics[page=7]{figures/highDegreeHighSpan.pdf}
		\subcaption{$T'_d$, with $d$ odd}
	\end{minipage}
	\begin{minipage}[t]{0.4\textwidth}
	\centering
	\includegraphics[page=6]{figures/highDegreeHighSpan.pdf}
	\subcaption{$T'_d$, with $d$ even}
\end{minipage}
	\caption{$T'_d$ has maximum degree $d$ and span at least $\lceil d/2 \rceil$ if all vertices are sources or sinks.}
	\label{fig:Tk}
\end{figure}

\begin{lemma}
\label{lem:Tk}
	The tree ${T_d}$ has maximum degree $d$. If $d \geq 7$, then the number of vertices of $T_d$ is greater than $2^d$ and less than  $d!$. The span of $T_d$ is at least $\lceil d/2 \rceil$.  
	Moreover, if $d$ is even then let $v_1,\dots,v_{d/2},$ $v'_{d/2},\dots,v'_1$ be the neighbors of the root in the cyclic order in which they appear in some upward-planar layered drawing of $ T_d$. Then the span of the edge between the root and $v_i$ and $v_i'$, respectively, is at least $i$ for $i=1,\dots, d/2$.
\end{lemma}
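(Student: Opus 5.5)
We handle the four parts of \cref{lem:Tk} separately: degree, vertex count, the general span bound, and the refined statement for even $d$. For the degree we use induction on $d$. The root of $T'_d$ has degree $d$. A child of the root is the root of a copy of $T'_{d-2}$ or $T'_{d-3}$, so its degree is at most $(d-2)+1<d$. All deeper vertices have degree at most $d$ by the inductive hypothesis.

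For the vertex count, let $N(d)$ be the number of vertices of $T'_d$. Then $N(1)=2$, $N(2)=3$, and
\[
N(d)=1+d\,N(d-2)\ \text{for odd } d, \qquad N(d)=1+d\,N(d-3)\ \text{for even } d .
\]
We would compute $N(3),\dots,N(9)$ explicitly and check $2^d<N(d)<d!$ for $d=7,8,9$. The induction step then goes through because $N(d)\ge d\,N(d-3)>d\cdot 2^{d-3}\ge 2^d$ for $d\ge 8$. In the other direction, $N(d)\le 1+d\,N(d-2)<1+d\,(d-2)!\le d!$.

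For the span, by symmetry (reflecting the drawing vertically) we may assume that the root $r$ is a sink. All its neighbors are then sources drawn below $r$; name them $v_1,\dots,v_d$ in left-to-right order of their edges at $r$. We aim to prove the following claim by induction on $d$:
\[
y(r)-y(v_i)\ \ge\ \min(i,\,d+1-i) \qquad \text{for every } i .
\]
For even $d$ this is exactly the refined statement, after renaming the right half as $v'_{d/2},\dots,v'_1$. For odd $d$, the middle neighbor gives span at least $\lceil d/2\rceil$. The geometric fact to use is a trapping argument. Consider a neighbor $v_i$ with $1<i<d$. The upward edges out of $v_i$ (other than $(v_i,r)$) and everything reachable above them must stay in the region bounded by the edges $(v_{i-1},r)$ and $(v_{i+1},r)$ and lying below $y(r)$. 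Hence all sinks adjacent to $v_i$ lie strictly between $y(v_i)$ and $y(r)$, and in turn their source neighbors are confined inside the region enclosed by those edges.

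The nesting we want is: $v_i$, viewed as a source root, carries a copy of $T_{d-2}$ or $T_{d-3}$. Applying the inductive claim to it (vertically mirrored) shows that its extreme upward edges must be short while its inner ones climb. Counting layers then gives that $v_i$ lies at least one layer below $v_{i-1}$ when $i\le d/2$ (symmetrically from the right), which yields the bound $\min(i,d+1-i)$ by induction on $i$.

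This step is the main obstacle: making precise why the position-$i$ neighbor must sit strictly lower than the position-$(i-1)$ one. The first thing to try is to use the edge $(v_{i-1},r)$ as a barrier that $v_i$'s subtree cannot cross. Then argue that if $y(v_i)\ge y(v_{i-1})$, the sinks of $v_i$'s subtree, which need their own children below them and inside the same strip, force the layers between $y(v_{i-1})$ and $y(r)$ to hold the subtree's deepest nested edge. By the inductive claim, that edge needs more layers than are available. If this direct comparison fails, the fallback is to strengthen the induction hypothesis to state explicitly how many distinct layers below the root the $j$-th outermost subtree occupies.
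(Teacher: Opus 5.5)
Your degree and vertex-count parts are essentially fine. One small point: the steps $N(d)\ge d\,N(d-3)$ for odd $d$ and $N(d)\le 1+d\,N(d-2)$ for even $d$ silently use $N(d-3)\le N(d-2)$; using each parity's own recurrence avoids this.

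The span part, however, has a genuine gap. Your trapping fact is false. The region below $r$ between $(v_{i-1},r)$ and $(v_{i+1},r)$ is not closed from below. An upward edge out of $v_i$ can go around $(v_{i-1},r)$ whenever $v_{i-1}$ lies higher than $v_i$. For example, take $T_4$ with $r$ a sink and place $r=(0,2)$, $v_1=(-1,1)$, $v_2=(0,0)$, $v_3=(1,0)$, $v_4=(2,1)$. Put the leaf children of $v_2,v_1,v_3,v_4$ at $(-4,3)$, $(-1.5,2)$, $(0.3,1)$, $(2,2)$, respectively. This is a valid straight-line upward-planar layered drawing with neighbors of $r$ in order $v_1,\dots,v_4$, yet the sink adjacent to $v_2$ lies above $r$.

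Your intended intermediate conclusion, that $v_i$ lies strictly below $v_{i-1}$ for $i\le d/2$, is also false: nothing prevents drawing $v_1$ many levels below $v_2$. The lemma only asserts lower bounds $\mathrm{span}(v_i,r)\ge i$, which cannot come from a monotone chain of heights. Since you leave this step open with an unspecified fallback, the span bound is not established.

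What is missing is the correct form of the geometric fact, plus a way to get the non-middle bounds without comparing heights of consecutive neighbors. The correct fact is a \emph{dichotomy}. Suppose an edge $e'\neq(v_i,r)$ at $v_i$ has its head on a level $\ge y(r)$. Then just below $r$ it lies outside the whole bundle of edges entering $r$. So every neighbor of $r$ on one side of $v_i$ is sandwiched between $e'$ and $(v_i,r)$, and hence lies strictly between $y(v_i)$ and $y(r)$.

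The paper then inducts on $d$ alone. For even $d$, deleting the subtrees of the two middle neighbors leaves a copy of $T_{d-2}$ at $r$. The induction hypothesis then directly gives $\mathrm{span}(r,v_i),\mathrm{span}(r,v'_i)\ge i$ for $i\le d/2-1$. For $v_{d/2}$, apply the hypothesis to a copy of $T_{d-2}$ rooted at $v_{d/2}$; your copy of $T_{d-3}$ below $v_{d/2}$ works equally well. This yields an edge $e'\neq e=(v_{d/2},r)$ at $v_{d/2}$ with span $\ge d/2-1$. If $\mathrm{span}(e)\le d/2-1$, then $e'$ reaches level $y(r)$. By the dichotomy, $v_{d/2-1}$ or $v'_{d/2-1}$, each of span $\ge d/2-1$, lies strictly between $v_{d/2}$ and $r$. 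This forces $\mathrm{span}(e)\ge d/2$, a contradiction. The odd case is the same, using $T_{d-1}$ at $r$ and the middle child.

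Your own scheme can also be repaired along these lines. In your indexing $v_1,\dots,v_d$, induct on $m=\min(i,d+1-i)$ and use $v_{m-1}$ and $v_{d+2-m}$ as the sandwiched neighbors, rather than a monotone chain of heights.
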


\begin{proof}
    Obviously, the degree of $T_d$ is at most $d$. Let $n_d$ be the number of vertices of $T_d$. We show by induction on $d$ that $2^d < n_d < d!$ for $d \geq 5$ odd and for $d \geq 8$ even. For $d=5$, we have $n_5 = 5 \cdot 7 + 1 = 36 < 120 = 5!$ and $36 > 32 = 2^5$. For $d>5$ odd we get $n_d = d\cdot n_{d-2} + 1 < d(d-2)! + 1 = \frac{d!}{d-1}+1 = d! - \frac{d!(d-2)-d+1}{d-1}<d!$ and $n_d > d \cdot 2^{d-2} + 1 > 2^d$. For $d\geq 8$ even, we get $n_d = d\cdot n_{d-3} + 1 < d \cdot (d-3)! + 1 < d!$ and $n_d > d \cdot 2^{d-3} + 1 > 2^d$. 
    
	For the span, we first consider the case that $d$ is even. We use induction on $d$.
    The trees~$T_1$ and~$T_2$ have span at least one. Assume now that~$d>2$ is even and consider an upward-planar drawing of~${T_d}$ where the neighbors of the root~$r$ appear in the order  $v_1,\dots,v_{d/2},v'_{d/2},\dots,v'_1$. Consider first a subtree of~${T_d}$ isomorphic to~$T_{d-2}$ rooted at~$r$ and containing the children  $v_1,\dots,v_{d/2-1},v'_{d/2-1},\dots,v'_1$ of~$r$. See the black vertices in \cref{fig:firstCase}. By the inductive hypothesis, the edges between~$r$ and~$v_i$ and~$v_i'$ have span at least $i$ for $i=1,\dots,d/2-1$. 
	
	Assume now that no edge has span at least~$d/2$. Consider a tree isomorphic to~$T_{d-2}$ rooted at~$v_{d/2}$; it contains the subtree of~${T_d}$ rooted at~$v_{d/2}$, the root~$r$ of~${T_d}$ and parts of~$d-5$ subtrees of~${T_d}$ rooted at children of~$r$ other than~$v_{d/2}$. See the black vertices in \cref{fig:secondCase}. By the inductive hypothesis applied to this~$T_{d-2}$ there must be at least two edges with span at least~$d/2 - 1$ incident to~$v_{d/2}$, one of which might be the edge~$e$ between~$r$ and~$v_{d/2}$. But if the edge~$e$ had span exactly~$d/2 - 1$ then there was no space to also draw the second edge~$e'$ of span~$d/2 - 1$, a contradiction. See the dashed edge in \cref{fig:thirdCase}. Recall that the edges between~$r$ and~$v_{d/2-1}$ and~$v'_{d/2-1}$, respectively, had span at least~$d/2-1$. Analogously, we conclude that the edge between~$r$ and~$v'_{d/2}$ has span at least~$d/2$. 
    
    The proof for odd~$d$ is analogous. Let $v_1,\dots,v_{\frac{d-1}2},v_{\frac{d+1}2},v'_{\frac{d-1}2},\dots,v'_1$ be the children of~$r$. The subgraph of~${T_{d}}$ without the subtree rooted at~$v_{\frac{d+1}2} $ contains a~${T_{d-1}}$ rooted at~$r$ and so the span of the edges between~$r$ and~$v_i$ or~$v_i'$ is at least~$i$ for $i=1,\dots,\frac{d-1}2$. Further, $T_{d}$ contains a subtree isomorphic to~$T_{d-1}$ rooted at~$v_{\frac{d+1}2}$. Thus, by the same argument as above, the edge between~$v_{\frac{d+1}2}$ and~$r$ has span $(d-1)/2 + 1= \lceil d/2 \rceil$.
\end{proof}

\begin{figure}[h]
		\centering
        \begin{subfigure}[t]{0.3\textwidth}
			\centering
			\includegraphics[page=9]{figures/highDegreeHighSpan.pdf}
			\subcaption{\label{fig:firstCase}}
		\end{subfigure}
        \begin{subfigure}[t]{0.3\textwidth}
			\centering
			\includegraphics[page=11]{figures/highDegreeHighSpan.pdf}
			\subcaption{\label{fig:secondCase}}
		\end{subfigure}
        \begin{subfigure}[t]{0.35\textwidth}
			\centering
			\includegraphics[page=13]{figures/highDegreeHighSpan.pdf}
			\subcaption{\label{fig:thirdCase}}
		\end{subfigure}
		\caption{Illustration of the proof \cref{lem:Tk}. 
        The drawing in (a) and (b) is not upward. }
		\label{fig:proofTk}		
	\end{figure}

\begin{corollary}\label{cor:Tk_boundn}
    For infinitely many $n$ there is a tree with $n$ vertices and longest directed path length one that has span greater than $\log_2 n / (2 \log_2 \log_2 n)$.  
\end{corollary}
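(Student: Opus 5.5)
We take the trees $T_d$ from \cref{lem:Tk} for $d \geq 7$; by construction every vertex of $T_d$ is a source or a sink, so its longest directed path has length one. Let $n = n_d$ be its number of vertices. This gives infinitely many values of $n$, since $n_d$ is strictly increasing in $d$: indeed $n_d > 2^d$ and $n_d$ is finite for each $d$. By \cref{lem:Tk}, the span of $T_d$ is at least $\lceil d/2\rceil \geq d/2$. It therefore suffices to show
\[
d/2 > \log_2 n / (2\log_2\log_2 n), \qquad \text{that is,} \qquad d\,\log_2\log_2 n > \log_2 n .
\]

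We get this from the two size bounds in \cref{lem:Tk}, namely $2^d < n < d!$.

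The lower bound $n > 2^d$ gives $\log_2 n > d$, hence $\log_2\log_2 n > \log_2 d$. So the left-hand side satisfies $d\log_2\log_2 n > d\log_2 d$.

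The upper bound $n < d! \leq d^d$ gives $\log_2 n < d\log_2 d$.

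Chaining the two inequalities yields $\log_2 n < d\log_2 d < d\log_2\log_2 n$, which is the desired inequality. It remains to check that the quantities involved are positive. For $d \geq 7$ we have $n > 2^7$, so $\log_2\log_2 n > \log_2 7 > 0$, and dividing by it is valid.

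The only delicate point is direction-keeping in the chain: the upper bound on $n$ must be used for the numerator and the lower bound on $n$ for the denominator. Both inequalities are strict and come directly from \cref{lem:Tk}, so no real obstacle remains.
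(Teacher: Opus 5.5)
Your proof is correct and follows essentially the paper's argument: take $T_d$ with $d\ge 7$, use $2^d<n_d<d!\le d^d$ from \cref{lem:Tk} to get $\log_2 n_d< d\log_2 d< d\log_2\log_2 n_d$, and combine this with the span bound $\lceil d/2\rceil\ge d/2$. One small quibble: $n_d>2^d$ shows that the values $n_d$ are unbounded, not that they are strictly increasing, but unboundedness is all you need to get infinitely many $n$.
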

\begin{proof}
    Let $d \geq 7$ and let $n_d$ be the number of vertices of $T_d$. Then $\log_2 n_d < \log_2 d! < \log_2 d^d = d \log_2 d < d \log_2 \log_2 n_d$.
    Thus, the span of $T_d$ is $d/2 > \log_2 n_d / (2 \log_2 \log_2 n_d)$.
\end{proof}

For $d$ even, we can improve the lower bound of \cref{prop:Pn} by one. In \cref{sec:upperBounds}, we will see that this bound is tight.

\begin{theorem}
\label{cor:Tk}
		For each~$d>2$ there is a tree where each vertex has degree at most~$d$, each directed path has length at most one, and the span is at least~\mbox{$\lfloor d/2 \rfloor + 1$}. 
\end{theorem}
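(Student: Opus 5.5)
The plan is to split on the parity of $d$. For odd $d$ the statement follows directly from \cref{lem:Tk}. For even $d$ I would build a slightly larger tree whose neighbourhoods contain many copies of $T_d$, and then use the refined part of \cref{lem:Tk} (the spans of the individual root edges) to derive a contradiction with span $d/2$.

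\emph{Odd $d$.} Take the tree $T_d$. By \cref{lem:Tk} it has maximum degree $d$, every vertex is a source or a sink (so every directed path has length at most one), and its span is at least $\lceil d/2\rceil = \lfloor d/2\rfloor+1$. Nothing more is needed.

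\emph{Even $d$, construction.} Let $T^\ast$ consist of a root $r$ with $d$ neighbours, where each neighbour $w$ is the root of a copy of $T_{d-1}$ taken with its root degree reduced to $d-1$. Concretely, $w$ gets $d-1$ further children, each carrying a subtree that is large enough (for instance $T_{d-3}$, or $T_{d-1}$ minus one root branch) that the following containment holds. Orient $T^\ast$ so that every vertex is a source or a sink. Then the degree is at most $d$ and no directed path has length two. The key property is that $T^\ast$ contains a copy of $T_d$ rooted at $r$, and also a copy of $T_d$ rooted at every neighbour $w$ of $r$. The second copy uses $w$'s own $d-1$ branches together with the branch through $r$, exactly as in the proof of \cref{lem:Tk}.

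\emph{Even $d$, contradiction.} Suppose $T^\ast$ has a drawing of span at most $d/2$, and assume without loss of generality that $r$ is a sink. Apply \cref{lem:Tk} to the copy of $T_d$ at $r$. In the rotation $v_1,\dots,v_{d/2},v'_{d/2},\dots,v'_1$ the edges to $v_i$ and $v'_i$ have span at least $i$, hence exactly $i$ when $i=d/2$. So the two consecutive middle neighbours $v_{d/2}$ and $v'_{d/2}$ both lie on layer $y(r)-d/2$. The edges $rv_{d/2-1}$ and $rv'_{d/2-1}$ have span $d/2-1$ or $d/2$ and enclose them. Now apply \cref{lem:Tk} to the copy of $T_d$ rooted at $v_{d/2}$, which is a source. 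Two consecutive edges at $v_{d/2}$ must have span exactly $d/2$, so at least one edge of span $d/2$ other than $v_{d/2}r$ leaves $v_{d/2}$ and reaches layer $y(r)$. The same holds at $v'_{d/2}$.

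\emph{Main obstacle.} The hardest step is the planarity "no space" argument, in the style of the dashed-edge argument in \cref{lem:Tk}. One must show that these extra span-$d/2$ edges from $v_{d/2}$ and $v'_{d/2}$ cannot be placed. I would first show that they must lie inside the wedge bounded by $rv_{d/2-1}$ and $rv'_{d/2-1}$, using the fixed cyclic order at $v_{d/2}$ together with the fact that the two neighbours of $v_{d/2}$ that must carry span $d/2$ are consecutive around it. Inside that wedge, their heads would have to sit on layer $y(r)$ strictly between the edges $rv_{d/2}$ and $rv'_{d/2}$, or else cross one of the edges of $r$. If this case analysis does not close, the fallback is to enlarge the branches of $T^\ast$ further, so that $T_d$-copies also occur at the vertices $v_{d/2-1}$, and to iterate the squeeze one level deeper.
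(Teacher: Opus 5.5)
Your odd case is exactly the paper's argument. For even $d$, however, there is a genuine gap. It is not only that the ``no space'' step is hard: the confinement you claim is false, and $T^\ast$ does not have span larger than $d/2$.

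The extra span-$d/2$ edge $e$ at $v_{d/2}$ ends on layer $y(r)$. Its head cannot lie to the right of $r$, since then $e$ would cross $v'_{d/2}r$ (the vertices $v_{d/2}$ and $v'_{d/2}$ share a layer). So it lies to the left of $r$. Since $v_{d/2-1}r$ ends at $r$, planarity then forces $v_{d/2-1}$ onto layer $y(r)-d/2+1$, \emph{inside} the pocket bounded by $e$ and $v_{d/2}r$. In other words, $e$ runs \emph{outside} the wedge of $rv_{d/2-1}$ and $rv'_{d/2-1}$. That pocket is open at the top, between the head of $e$ and $r$, so no crossing is forced.

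Concretely, take $d=4$ with your $T_{d-3}$ choice: $r$ has neighbours $w_1,\dots,w_4$, each with three children that each have one further child. This tree has a span-$2$ drawing:
\begin{itemize}
\item Put $r$ on layer $0$, $w_2,w_3$ on layer $-2$, and $w_1,w_4$ on layer $-1$.
\item $w_2$ sends a span-$2$ edge to a child $y$ on layer $0$, left of $r$, wrapping around $w_1$. Its other two children go on layer $-1$, one far left and one in the triangle between $w_2r$ and $w_3r$.
\item $w_1$ sends two span-$2$ edges through the gap between $y$ and $r$ into the empty region above $r$. Its third child goes in that gap on layer $0$.
\item Mirror this on the right, and attach every grandchild one layer away.
\end{itemize}

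The paper uses this very pocket as its first step. Every $v_{i+1}$ carries such a long edge passing left of $r$ around $v_i$, so inductively $y(v_i)=y(v'_i)=y(r)-i$ for all $i$. The contradiction then needs a second level of forced structure that $T^\ast$ lacks: three neighbours of $r$ each get two further neighbours that are again roots of copies of $T_d$.
\begin{itemize}
\item By pigeonhole, one of these three is some $v_i$ (or $v'_i$) with $i<d/2$.
\item Its two $T_d$-rooted neighbours $u,v$ lie on layers at most $y(r)+d/2-i$. They sit between the span-$d/2$ edge at $v_{i+1}$ and the span-$d/2$ edge at $v'_1$.
\item A short case distinction (one of $u,v$ strictly below layer $y(r)+d/2-i$, or both on it) shows that one of them cannot realize its second span-$d/2$ edge.
\end{itemize}

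Your fallback of adding $T_d$-copies at $v_{d/2-1}$ does not provide this, because $T^\ast$ already has such copies at every neighbour of $r$. The extra copies must sit at grandchildren of $r$, combined with the pinning of all $v_i$ and $v'_i$.
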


\begin{proof}
	The statement follows directly from \cref{lem:Tk}, if~$d$ is odd. If~$d$ is even, consider the tree~$T$ shown in \cref{fig:proofCorTk} for the case~$d=4$. 
There is a central sink~$r$ of degree~$d$. Each neighbor of~$r$ is the  root of a subtree containing a~${T_d}$. Among the neighbors of~$r$ there are three  where two more neighbors are again a root of a~${T_d}$.  The maximum degree of~$T$ is~$d$. Assume~$T$ has an upward-planar layered drawing with span~$d/2$. Assume without loss of generality that~$r$ has y-coordinate~0. Let $v_1,\dots v_{d/2},v'_{d/2},\dots,v'_1$ be the neighbors~$r$ in the cyclic order in which they appear around~$r$. 

By \cref{lem:Tk}, $v_i$ and~$v_i'$, respectively, have y-coordinate at most~$-i$ for $i=1,\dots,d/2$. By our assumption, the span of the edge between~$r$ and~$v_{d/2}$ and~$v'_{d/2}$ is at most~$d/2$, i.e., their y-coordinate is exactly~$-d/2$. Together with a (second) edge incident to $v_{i+1}$ ($v'_{i+1}$) of span at least~$i+1$ this implies that~$v_i$ and~$v_i'$, respectively have y-coordinate exactly~$-i$ for $i=1,\dots,d/2-1$. Since among $v_1,\dots v_{d/2},v'_{d/2},\dots,v'_1$ there are three that are the root of a~${T_d}$ there are $1 \leq i < j \leq d/2$ such that~$v_i$ and~$v_j$ or~$v_i'$ and~$v_j'$ are both the root of a~${T'_d}$. We assume the first. Let~$u$ and~$v$ be the neighbors of~$v_i$ other than~$r$ that are roots of a~$T'_d$. Then~$u$ and~$v$ have y-coordinate at most~$d/2-i$. Observe that both,~$v_{i+1}$ and~$v'_1$ are incident to an edge~$e$ and~$e'$, respectively, of span~$d/2$ and its end vertices have y-coordinate  $d/2-i-1$ and~$d/2-1$, respectively. We distinguish two cases. If $u$ or~$v$, say~$u$, has y-coordinate $y < d/2-i$ then~$u$ is between the intersection point of $e$ and~$e'$ with the horizontal line with y-coordinate~$y$. Thus, the second edge with span~$d/2$ incident to~$u$ cannot be drawn without a crossing. If both, $u$ and~$v$ have y-coordinate~$d/2-i$, assume that $u$, $v$, and the intersection point of~$e'$ with layer~$d/2-i$ is in this order. Then the second edge with span~$d/2$ incident to~$v$ cannot be drawn without a crossing. 	
\end{proof}

		\begin{figure}[h]
		\centering
        \begin{minipage}[t]{\textwidth}
        \centering
        \includegraphics[page=2]{figures/minimalDeg4SourceSinkSpan3.pdf}
        \end{minipage}\\[1em]
		\begin{minipage}[t]{\textwidth}
			\centering
			\includegraphics[page=12]{figures/highDegreeHighSpan.pdf}
		\end{minipage}%
		\caption{Tree~$T$ used in the proof of \cref{cor:Tk} in the case~$d=4$. Each vertex with some label is the root of a~$T'_d$ contained in~$T$.
        }
		\label{fig:proofCorTk}		
	\end{figure}

Finally, we consider also trees with longer directed paths.

\begin{corollary}\label{cor:Tkell}
	For each~$d>2$
    and each~$\ell$ there is a tree where each vertex has degree at most~$d$, each directed path has length at most~$\ell$, and the span is at least~\mbox{$(d-1)^\ell/2\ell$}.
\end{corollary}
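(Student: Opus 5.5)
Following the pattern of \cref{cor:Tk}, the aim is a tree in which a long directed path must be drawn upward and monotone while being squeezed between few layers. Concretely, we take a tree that has a high-degree center and many branches that are long directed paths of length $\ell$. Any upward-planar drawing then forces many such paths to be nested inside one another. The corollary should follow by combining a counting argument, in the spirit of \cref{prop:binaryTree}, with the degree structure used in \cref{lem:Tk}.

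For the construction, let $B_\ell$ be the complete $(d-1)$-ary branching of depth $\ell$ in which every root-to-leaf path is directed. Equivalently, $B_\ell$ is the rooted tree with root of degree $d-1$ and all internal vertices of degree $d$ (one parent and $d-1$ children), oriented from the root toward the leaves, so every directed path has length at most $\ell$. It has $(d-1)^\ell$ leaves. We attach $B_\ell$ so that the relevant vertices are trapped: take a vertex $r$ and two directed ``walls'' as in \cref{prop:binaryTree}, that is, three copies of $B_\ell$ hung below a common sink $r$ via edges of the form $(\cdot ,r)$. This keeps the maximum degree at most $d$ and directed paths of length at most $\ell$ (after adjusting $\ell$ by a constant if needed).

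For the lower bound, fix a drawing of span $k$. The root of the middle copy lies within $k$ layers of $r$. Every leaf of the middle copy is reachable from its root by a directed path of at most $\ell$ edges, so all leaves lie in a vertical window of height at most $\ell k$ above that root. The middle copy is enclosed by the outer two copies together with $r$, which bound a region, and upward planarity forces the $(d-1)^\ell$ leaves to be separated by edges of the walls in this region. The key step is to show that each layer inside the window can contain only $O(1)$ leaves of the middle copy, or more precisely at most two per unit of span. Leaves that are sinks on the same layer, with pairwise disjoint root paths, must interleave with edges of span at most $k$. I would then argue via the level-planarity ordering that the number of leaves on a horizontal line is bounded by twice the number of layers available, giving $(d-1)^\ell \le 2\ell k\cdot k$ or similar. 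That is weaker than the target, so the counting should instead be: $(d-1)^\ell$ leaves in at most $\ell k$ layers gives one layer with at least $(d-1)^\ell/(\ell k)$ leaves. That layer must then be crossed by at most $2$ boundary edges of the enclosing region, each of span at most $k$, yielding $k \geq (d-1)^\ell/(2\ell k)$.

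The main obstacle is exactly this per-layer bound: making precise why many leaves on one layer force an edge of large span. Rather than using walls, I would try the approach of \cref{lem:Tk}. Make each leaf of $B_\ell$ the root of a copy of $T_d$ (or simply a long alternating structure), so that the leaves must be separated horizontally in a way that forces the edges from their parents to spread over distinct layers. This gives a span of about (number of leaves on a layer)/2 directly, and hence $k\ge (d-1)^\ell/(2\ell)$ by averaging the leaves over the $\ell$ depth levels of $B_\ell$. I would first try to establish this last version, since it matches the stated bound $(d-1)^\ell/2\ell$ exactly.
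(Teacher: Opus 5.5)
\textbf{There is a genuine gap.} Neither construction you sketch can reach the bound, and the counting step everything rests on is not valid.

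In the wall version, nothing limits how many leaves share a layer, because layers are unbounded horizontal lines. An out-tree like $B_\ell$ can be drawn with span~$1$ with all its leaves on one layer, and you never exhibit an edge that is forced to pass many of them. The step ``that layer must then be crossed by at most $2$ boundary edges \dots\ yielding $k\ge (d-1)^\ell/(2\ell k)$'' does not follow from anything. Even if it held, it would only give $k\ge\sqrt{(d-1)^\ell/(2\ell)}$.

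The fallback, hanging copies of $T_d$ below the leaves of a single $B_\ell$, fails outright: that tree has span $O(d)$ independently of $\ell$. To see this, root it at the root of $B_\ell$ and run the construction from the proof of \cref{th:ub-path-degree}. Every subtree has a longest incoming directed path into its root of length at most $1$, so all lower heights, and hence all spans, are at most about $d+1$. (Identifying a leaf of $B_\ell$ with the root of $T_d$ also creates a vertex of degree $d+1$.) The forcing argument of \cref{lem:Tk} works only because the subtree at each neighbor of the high-degree root again contains $T_{D-2}$ or $T_{D-3}$ for the \emph{same} large $D$. A degree-$d$ gadget at the leaves forces only span about $d/2$ locally.

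The missing idea is to apply \cref{lem:Tk} with $D=(d-1)^\ell$ and simulate \emph{every} vertex of $T_D$ by a bounded-degree tree. Replace each vertex $v$ of $T_{(d-1)^\ell}$, together with the edges to its children, by a perfect $(d-1)$-ary tree of height $\ell$. Direct it away from $v$ if $v$ is a source and towards $v$ if $v$ is a sink, and identify the children of $v$ with distinct leaves. All degrees are then at most $d$. Since the original vertices remain sources or sinks, every directed path stays inside one such tree and has length at most $\ell$. In a drawing of span $s$, each edge of $T_{(d-1)^\ell}$ corresponds to a directed path of $\ell$ edges. Contracting these paths gives an upward-planar drawing of $T_{(d-1)^\ell}$ with the original vertices on the same layers and span at most $s\ell$. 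By \cref{lem:Tk} this span is at least $(d-1)^\ell/2$, so $s\ge (d-1)^\ell/(2\ell)$. No per-layer counting is needed: the factor $1/\ell$ comes from contracting paths of length $\ell$, not from averaging leaves over depth levels.
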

\begin{proof}
	Consider the tree~${T_{(d-1)^{\ell}}}$. For each vertex~$v$ of~$T_{(d-1)^\ell}$, replace~$v$ and the edges to $v$'s children by a perfect $(d-1)$-ary tree~$T_v$  of height~$\ell$. All edges are directed from the root~$v$ to the leaves if~$v$ was a source and from the leaves to~$v$ if~$v$ was a sink. 
    Merge the children of~$v$ with distinct leaves of~$T_v$. 
    An upward-planar drawing~$\Gamma$ with span~$s$ of the thus constructed tree $T$ induces an upward-planar drawing~$\Gamma'$ of~${T_{(d-1)^\ell}}$ in which the vertices of~${T_{(d-1)^\ell}}$ are on the same layers as in~$\Gamma$.
    The span of~$\Gamma'$ is at most~$s\cdot \ell$ by construction, but at least~\mbox{$(d-1)^\ell / 2$} by \cref{lem:Tk}. Thus, $s \geq (d-1)^\ell/2\ell$.	
\end{proof}

\section{Upper Bounds}\label{sec:upperBounds}

In this section, we show upper bounds on the span of $n$-vertex directed trees. By \cref{prop:binaryTree}, such a span might be $\Omega(n)$; hence, our upper bounds apply to families of directed trees or rely on additional parameters beyond $n$. We begin with the following observation which, when compared with \cref{prop:Pn}, shows that the possibility to choose an upward-planar embedding plays an important role in the construction of drawings with bounded span.

\begin{observation} \label{th:path}
The span of any directed tree whose underlying graph is a path or a caterpillar is $1$.
\end{observation}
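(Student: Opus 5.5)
The plan is to give an explicit layer assignment in which every edge joins two consecutive layers, and to place vertices left to right along the spine so that no crossings can occur. The lower bound is trivial: any drawing of a tree with at least one edge has span at least $1$, because the two endpoints of an edge lie on distinct integer y-coordinates. So everything rests on the construction for span exactly $1$.

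First, the path case. Let the path be $\langle v_1,\dots,v_n\rangle$. Set $y(v_1)=0$, and for $i\ge 1$ set $y(v_{i+1})=y(v_i)+1$ if the edge is $(v_i,v_{i+1})$, and $y(v_{i+1})=y(v_i)-1$ if it is $(v_{i+1},v_i)$. Put $x(v_i)=i$ and draw every edge as a straight segment. Each edge then has span exactly $1$ and points upward from tail to head. The segments occupy pairwise interior-disjoint x-intervals $[i,i+1]$, so they can meet only at shared endpoints, and the drawing is upward-planar.

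Second, the caterpillar case. Let the spine be $\langle v_1,\dots,v_m\rangle$, and lay it out as above but with $x(v_i)=i$ scaled so that consecutive spine vertices are far apart, say $x(v_i)=iD$ for $D$ larger than every degree. Each leaf $w$ adjacent to $v_i$ goes at $y(w)=y(v_i)\pm1$, with the sign chosen by the direction of the edge $vw$. Give the leaves of $v_i$ distinct x-coordinates strictly inside the open interval $((i-1)D,iD)$, and keep them away from the spine segments: put each leaf on the side of the spine edges where there is room. The simplest choice is $x$ values in $(iD-D/2,\,iD)$, and the crossing check for this choice is the main obstacle, discussed next.

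That check is the only real work. A leaf edge from $v_i$ to a point at height $y(v_i)\pm1$ lies in the strip between two consecutive layers and within the x-range $[(i-1)D+D/2,\,iD]$. The only other edge that can meet this region is the spine edge $v_{i-1}v_i$, which also ends at $v_i$. The two segments share the endpoint $v_i$ and, being straight lines, cross nowhere else unless they are collinear. Collinearity happens exactly when the leaf sits at the same height as $v_{i-1}$ on the spine segment itself. To avoid this, place the leaves of $v_i$ at $x$-coordinates greater than $iD$ instead, in the interval $(iD,\,iD+D/2)$, and similarly shrink the spine edge $v_iv_{i+1}$'s region. Even in that interval a leaf at the same height as $v_{i+1}$ could lie on the edge $v_iv_{i+1}$, so the cleanest fix is different: put leaves of $v_i$ at x-coordinates in $(iD-\epsilon, iD+\epsilon)$ and avoid exact collinearity with the spine edges by perturbation. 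Since the edges are straight segments from a common endpoint $v_i$ to distinct points, they meet only at $v_i$. Any other edge lies in an x-range disjoint from $(iD-\epsilon,iD+\epsilon)$ except for the two spine edges at $v_i$, which also share only $v_i$ given no collinearity.

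Thus every edge has span $1$, and the span of the tree is $1$.
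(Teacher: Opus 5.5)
Your proposal is correct and follows the same approach as the paper: you draw the spine as an $x$-monotone zigzag between consecutive layers and put each leaf on the layer directly above or below its spine vertex, according to the direction of its edge. Your collinearity worries are unfounded, though. A spine edge at $v_i$ has span $1$, so it meets layer $y(v_i)\pm1$ only at its far endpoint, which is at horizontal distance $D$. Hence your first placement in $(iD-D/2,\,iD)$ already works, and for your final placement you only need $\epsilon<D/2$, with no perturbation.
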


\begin{proof}
We draw a directed tree whose underlying graph is a path as an $x$-monotone curve that goes up and down between consecutive layers, according to the edge directions. For a directed caterpillar, we draw its spine as above and then embed the leaves connected to a spine vertex~$v$ on the \mbox{layers above and below $v$, according to the direction of their incident edges.} 
\end{proof}

We next consider bounded-degree directed trees in which the longest directed path has length $1$. These are the trees that we use in \cref{lem:Tk} in order to prove lower bounds on the span. We get the following upper bound, which by \cref{cor:Tk} is tight.

\begin{theorem}
\label{th:ub-degree-ss}
Let~$T$ be a directed tree in which the longest directed path has length~$1$, and let~$d$ be the degree of~$T$. Then~$T$ has span at most~$\lfloor \frac{d}{2}\rfloor +1$.
\end{theorem}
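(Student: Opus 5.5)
The plan is to build the drawing top-down by induction, rooting $T$ at an arbitrary vertex $r$. Every vertex of $T$ is a source or a sink, so each edge goes from a source up to a sink. I will place each vertex so that its parent edge has span in $\{1,\dots,h\}$ with $h=\lfloor d/2\rfloor+1$, where sinks lie above their source neighbours. The key invariant, maintained for every non-root vertex $v$ with parent $p$, is the following. The subtree $T_v$ is drawn inside a \emph{wedge region} bounded by two y-monotone curves leaving $p$ and going toward $v$'s side. If $p$ is a sink, the region lies below $p$; this region must be free of other vertices. Moreover, $T_v$ uses only the horizontal strip available to it. Concretely, I intend to draw $T_v$ inside a strip of the plane in which every layer on $v$'s side of $p$ is available and the left and right boundaries are monotone.

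The construction at a vertex $v$ (say $v$ is a sink at height $y$) goes as follows. Its children are sources with at most $d-1$ of them, or $d$ if $v=r$. I will place them below $v$ in a ``tent'' pattern, in the same cyclic arrangement as in Lemma~\ref{lem:Tk}. The children are ordered left to right as $c_1,\dots,c_k$, and child $c_j$ is placed at height $y-\min(j,\,k+1-j)$, but capped so that it lies no lower than $y-h$. This makes the outer children high and the middle children low, so each child $c_j$ sees free space below it between the edges $vc_{j-1}$ and $vc_{j+1}$. I then recurse into the child's subtree, whose vertices are sinks placed above $c_j$. Those sinks must fit between the neighbouring edges $vc_{j\pm1}$, and also sit below $v$ or even above $v$ outside the edge fan.

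The crucial point is that a source child $c$ with at most $d-1$ sink children needs room above it. I give each child $c$ one "escape direction": the layers above $c$ extend freely beyond $v$'s height on the outside of the fan, for the two extreme children. For inner children, the free layers are between $y(c)+1$ and $y(v)$, which has length $y-y(c)\ge \min(j,k+1-j)$. Counting $h$ slots, each child needs its own children placed on layers $y(c)+1,\dots,y(c)+h$, each reachable by a straight segment within its wedge. Inside that wedge, the children are again arranged in a tent, but opened toward the wedge. Since the wedge widens as we go away from $c$ (boundary edges diverge), infinitely many points are available on each layer strictly inside. Thus the real constraint is only vertical: a layer must lie within span $h$ of $c$ and within the wedge's vertical extent. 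The vertical extent becomes unbounded once we pass $v$'s height on the outside, or via the tent shape.

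The main obstacle is making the inner (deep) children of the tent work: a child $c_j$ in the middle lies at depth about $d/2$ below $v$. Its sink children sit above it, trapped between edges $vc_{j-1}$ and $vc_{j+1}$, whose endpoints are one level higher. I would resolve this by arguing that the wedge between those two edges above $c_j$ is nonempty on all layers $y(c_j)+1,\dots,y-1$. That is at least $\min(j,k+1-j)-1$ layers, while on layer $y(c_j)+1$ itself we can place arbitrarily many points near $c_j$. Each grandchild then needs its own wedge (with span-$1$ placements always available), so the recursion is sound if every vertex can place all its children at span exactly $1$ when needed, falling back on the tent otherwise. I expect that one has to choose carefully which children get span $1$ (stacked into the narrow wedge) versus larger spans, and to verify the bound $\lfloor d/2\rfloor+1$ precisely. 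The extra $+1$ accounts for the parent edge occupying one side of the cyclic order, so the tent is unbalanced by one. The counting is that of Lemma~\ref{lem:Tk} in reverse: with $d$ edges split as evenly as possible around $v$, the deepest one needs span $\lceil (d-1)/2\rceil+1\le\lfloor d/2\rfloor+1$.
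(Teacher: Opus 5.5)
\textbf{There is a genuine gap, and it is exactly at the step you flag as the main obstacle.} In your downward tent at a sink $v$, child $c_j$ sits at depth $\delta_j=\min(j,k+1-j)$. You confine each inner child's subtree to the layers strictly between $c_j$ and $v$, between $vc_{j-1}$ and $vc_{j+1}$. Then a child $c$ at depth $2$ has only the single layer $y(v)-1$ for all of its sink children, so all of them have span $1$.

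Now take two consecutive upward edges at the source $c$, either both of span $1$, or one of span $1$ next to the span-$2$ edge $cv$. Between heights $y(c)$ and $y(c)+1$ they bound a strip that closes at $c$. Let $w$ be a sink child of $c$ whose edge $cw$ is not extreme in the rotation at $c$, and let $x$ be a source child of $w$. Then $x$ must lie at height at most $y(c)$. But the edge $xw$, traced downward from $w$, can never leave the strip on its side of $cw$, so it would have to reach $c$. Hence such a $w$ cannot have any children.

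A concrete failure with $d=4$ (so the target span is $3$): take a sink $v$ with four source children, each with three sink children, each of which has three source children. At $c_2$ there are four upward edges, so at least one sink child is non-extreme, and the construction fails. Since there is only one layer available, ``choosing carefully which children get span $1$'' has nothing to choose from. More generally, your wedge/strip invariant carries no quantitative bound, so the span bound is never actually proved by induction; the final counting sentence is a heuristic.

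\textbf{The missing idea is a one-sided invariant under which nothing can be trapped.} Root $T$ at a \emph{leaf}, so every vertex has at most $d-1$ children. Then prove, for the subtree of every source $s$, that there is a drawing in which
\begin{itemize}
\item $s$ is top-accessible, and
\item at most $\lfloor d/2\rfloor+1$ layers at or above $y(s)$ are used, with no restriction at all downward.
\end{itemize}
To build it, place the $k\le d-1$ sink children of $s$ in a tent on the $\lceil k/2\rceil$ layers above $s$: half to the left with increasing heights, half to the right with decreasing heights. Hang the recursively drawn subtrees of each sink child's children side by side below it, with their bounding-box tops one layer lower than the sink. Place the blocks of higher (inner) sinks farther out, so that the edges from $s$ pass over the blocks of lower sinks.

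This is precisely the ``wrapping'' your confinement forbids: inner children's subtrees go around the outer ones, and every subtree has unlimited room on its free side. The tent is used only at sources; at a sink, the children are simply hung side by side. The spans then work out as follows. A tent edge has span at most $\lceil k/2\rceil\le\lfloor d/2\rfloor$. A sink-to-child edge spans at most the child's upper height, which is where the $+1$ comes from. Your counting intuition is right in one respect: the parent edge leaves through the top-accessible gap in the middle of the child's tent, so it is the innermost edge.
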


\begin{proof}
We can assume that~$d\geq 3$, as if~$d=2$ then~$T$ is a path, hence it has span~$1$ by \cref{th:path}, and the theorem follows. We root~$T$ at any leaf~$r$. This ensures that every vertex has at most~$d-1$ children. Since the longest directed path in~$T$ has length~$1$, every vertex of~$T$ is either a source or a sink. Assume that~$r$ is a source, the case in which it is a sink is symmetric. Let~$s$ be any source of~$T$ and let~$S$ be the subtree of~$T$ rooted at~$s$. We describe a recursive construction, depicted in \cref{fig:boundedDegreepathlengthOne-app},  of an upward-planar drawing~$\Gamma$ of~$S$ satisfying the following properties:

    \begin{figure}[h]
	\centering
	\includegraphics[page=6, scale =1.2]{figures/constructions.pdf}
	\caption{Illustration for the proof of \cref{th:ub-degree-ss}. Here, $s$ has $k=5$ children. %
        }
	\label{fig:boundedDegreepathlengthOne-app}
\end{figure}

\begin{itemize}
	\item~$s$ is \emph{top-accessible}, i.e., it is possible to draw a~$y$-monotone line that starts at~$s$, that cuts the top-side of the bounding box of~$\Gamma$, and that does not intersect~$\Gamma$ other than at~$s$;
	\item the \emph{upper height} of~$\Gamma$, which is the number of layers that intersect~$\Gamma$ and that are not below the layer of~$s$, is at most~$\lfloor \frac{d}{2}\rfloor +1$; note that the layer of~$s$ counts towards the value of the upper height; and 
	\item the span of every edge of~$S$ in~$\Gamma$ is at most~$\lfloor \frac{d}{2}\rfloor +1$. 
\end{itemize}   
When~$s=r$ and~$S=T$, this gives us the desired drawing of~$T$. Let~$n$ be the number of vertices of~$S$. If~$n=1$, then~$\Gamma$ is trivially constructed. Otherwise, let~$s_1,\dots,s_k$ be the children of~$s$ and note that such vertices are all sinks. Also, for~$i=1,\dots,k$, let~$t_{i,1},\dots,t_{i,f(i)}$ be the children of~$s_i$ and note that such vertices are all sources. Recursively construct a drawing~$\Gamma_{i,j}$ of each subtree~$T_{i,j}$ rooted at a vertex~$t_{i,j}$. Embed~$s$ at any layer. Then embed the~$\lfloor \frac{k}{2}\rfloor$ children~$s_1,\dots,s_{\lfloor \frac{k}{2}\rfloor}$ of~$s$ to the left of~$s$ on the~$\lfloor \frac{k}{2}\rfloor$ layers right above~$s$, one child per layer, with bottom-to-top order~$s_1,\dots,s_{\lfloor \frac{k}{2}\rfloor}$; embed the remaining~$\lceil \frac{k}{2}\rceil$ children~$s_{\lfloor \frac{k}{2}\rfloor+1},\dots,s_k$ of~$s$ to the right of~$s$ on the~$\lceil \frac{k}{2}\rceil$ layers right above~$s$, one child per layer, with top-to-bottom order~$s_{\lfloor \frac{k}{2}\rfloor+1},\dots,s_k$. Draw the edges incident to~$s$ as~$y$-monotone curves in such a way that their clockwise order in~$\Gamma$ is~$(s,s_1),\dots,(s,s_k)$. Place the drawings~$\Gamma_{i,j}$ with~$i=1,\dots,\lfloor \frac{k}{2}\rfloor$ and~$j=1,\dots,f(i)$ side-by-side to the left of~$s$, so that~$\Gamma_{i,j}$ is to the right of~$\Gamma_{i',j'}$ whenever~$i<i'$, and so that the top side of the bounding box of~$\Gamma_{i,j}$ is one layer lower than the layer of~$s_i$. Symmetrically, place the drawings~$\Gamma_{i,j}$ with~$i=\lfloor \frac{k}{2}\rfloor+1,\dots,k$ and~$j=1,\dots,f(i)$ side-by-side to the right of~$s$, so that~$\Gamma_{i,j}$ is to the left of~$\Gamma_{i',j'}$ whenever~$i<i'$, and so that the top side of the bounding box of~$\Gamma_{i,j}$ is one layer lower than the layer of~$s_i$. Connecting each vertex~$s_i$ to its children via~$y$-monotone curves completes the construction of~$\Gamma$. 

It is easy to see that~$\Gamma$ is a top-accessible upward-planar drawing of~$S$. The upper height of~$\Gamma$ is equal to~$1$, for the layer of~$s$, plus the number of children of~$s$ that are placed to the right of~$s$, hence it is equal to~$1+ k-\lfloor \frac{k}{2}\rfloor=\lceil \frac{k}{2}\rceil+1\leq \lceil \frac{d-1}{2}\rceil+1=\lfloor \frac{d}{2}\rfloor +1$. The span of each edge in a subtree~$T_{i,j}$ is at most~$\lfloor \frac{d}{2}\rfloor +1$ by induction. Also, the span of any edge incident to~$s$ is at most the upper height of~$\Gamma$ minus one, hence at most~$\lfloor \frac{d}{2}\rfloor$. Finally, the span of any edge~$(t_{i,j},s_i)$ is smaller than or equal to the upper height of~$\Gamma_{i,j}$; indeed, the edge~$(t_{i,j},s_i)$ has span at most the upper height of~$\Gamma_{i,j}$ minus one from~$t_{i,j}$ up to the top side of the bounding box of~$\Gamma_{i,j}$, and then span one from the top side of the bounding box of~$\Gamma_{i,j}$ up to~$s_i$. Hence, the span of~$(t_{i,j},s_i)$ is at most~$\lfloor \frac{d}{2}\rfloor +1$. This completes the induction, and hence the proof of the theorem.
\end{proof}

Next, we study trees in which the length of the longest path has a fixed upper bound~$\ell$.

\begin{theorem}
\label{th:ub-path-degree}
Let~$T$ be a directed tree, let~$d$ be the degree of~$T$, and let~$\ell$ be the length of the longest directed path in~$T$. Then $T$ has span in~$O\left((d-1)^{\ell}\right)$.
\end{theorem}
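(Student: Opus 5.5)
We generalize the recursive construction in the proof of \cref{th:ub-degree-ss}. Root $T$ at a leaf $r$, so that every vertex has at most $d-1$ children. The tree then decomposes into \emph{monotone blocks}. For a vertex $v$, let $B(v)$ be the maximal subtree that contains $v$ and is reachable from $v$ by going to children along edges whose direction agrees with the direction of the edge from $v$ to its children (all edges directed away from $v$ downward in the rooted tree, or all toward $v$). Since the longest directed path has length $\ell$, each such monotone piece has depth at most $\ell$, and so it has at most $\sum_{i\le \ell}(d-1)^i = O((d-1)^\ell)$ vertices. The plan is to draw each block so that it occupies $O((d-1)^\ell)$ layers and to hang the recursively drawn subtrees off its vertices, as in \cref{th:ub-degree-ss}.

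We prove by induction on the number of vertices a statement analogous to the three invariants used there. For a subtree $S$ rooted at $s$ with $s$ a source of its block (the sink case is symmetric), we construct an upward-planar drawing $\Gamma$ of $S$ such that $s$ is top-accessible, the upper height of $\Gamma$ is at most $c(d-1)^\ell$, and every edge has span at most $c(d-1)^\ell$, for a suitable constant $c$. For the inductive step, take the out-block $B$ of $s$: all vertices reachable from $s$ by directed paths that stay in the subtree. Draw $B$ above $s$ within $|B|\le c'(d-1)^\ell$ layers, giving each vertex of $B$ its own layer in a DFS order consistent with the edge directions, and place the vertices in the left/right fashion of \cref{th:ub-degree-ss} so that every vertex of $B$ keeps an accessible side.

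Every subtree hanging off a vertex $u\in B$ is attached by an edge that enters $u$ from below, since $B$ is maximal. Its root is therefore a source of its own block, and the induction gives a top-accessible drawing of bounded upper height. We place such a drawing beside the current drawing with the top of its bounding box just below $u$'s layer. The connecting edge then has span at most (upper height) $+\;|B|$, which is $O((d-1)^\ell)$. The upper height of $\Gamma$ is $|B|$ plus $1$, which is still $O((d-1)^\ell)$, as required.

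The main obstacle is the upper-height invariant combined with planarity. Subtrees hanging from a vertex $u$ high up in $B$ must reach $u$ from below without crossing the edges of $B$. We must also make sure that the recursive drawings never need to extend above $s$ beyond the claimed bound. More importantly, hanging subtrees might themselves contain in-blocks that have to lie above their roots. To address this, I would first try to set up the invariant symmetrically: each recursive drawing lies entirely below its attachment point, with total height measured only above its root. I would order the children so that every vertex of $B$ is visible from the outer face on the correct side, using a left-to-right DFS placement of $B$ and nesting the hanging drawings in the resulting staircase pockets. This is the same arrangement that \cref{fig:boundedDegreepathlengthOne-app} uses for the case $\ell=1$.
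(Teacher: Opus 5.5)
Your construction is the paper's proof turned upside down. The paper bounds the \emph{lower} height of \emph{left-anchored} drawings (root on the left side of the bounding box): it stacks the in-children's drawings below the root in a staircase, with each $s_i$ one layer below the bottom of $\Gamma_{i-1}$, and places the out-children's drawings above with bottoms one layer above the root. So its lower height $h(\Gamma)=1+\sum h(\Gamma_i)$ is exactly the size of the root's in-block, the mirror of your $|B|\le\sum_{i\le\ell}(d-1)^i$, and left-anchoring is what makes the staircase you only sketch planar.

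The one detail to add is the case $d=2$, where $\sum_{i\le\ell}(d-1)^i=\ell+1$ is not $O(1)$. As in the paper, handle it separately via \cref{th:path}.
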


\begin{proof}
We can assume that~$d\geq 3$, as if~$d=2$ then~$T$ is a path, hence it has span~$1$ by \cref{th:path}, and the theorem follows. We root~$T$ at any leaf~$r$. Let~$d_+$ be the maximum indegree of any vertex excluding a possible incoming edge from its parent, i.e., $d_+$ is the maximum number of edges incoming into any vertex~$v$ from children of~$v$; note that~$d_+\leq d-1$. 
For a subtree~$S$ of~$T$ rooted at a vertex~$s$, let~$\ell_{\uparrow}(S)$ be the length of the longest directed path incoming into~$s$ in~$S$. Fix an upward-planar embedding for~$T$ such that, for each vertex~$v$, in clockwise order around~$v$ we have the edge connecting~$v$ with its parent (this does not exist if~$v=r$), then all the edges connecting~$v$ with its children and outgoing from~$v$, and finally all the edges connecting~$v$ with its children and incoming into~$v$.

\begin{figure}[h]
    \centering
    \includegraphics[page=5]{figures/constructions.pdf}
    \caption{Illustration for the proof of \cref{th:ub-path-degree}. The lower heights are measured by arrows.}
    \label{fig:algorithm-label-app}
\end{figure}

An upward-planar drawing of a rooted tree is \emph{left-anchored} if the root of the tree lies on the left side of the bounding box. We describe a recursive construction, depicted in \cref{fig:algorithm-label-app}, of a left-anchored upward-planar drawing~$\Gamma$ of a subtree~$S$ of~$T$ rooted at a vertex~$s$. When~$S=T$, this gives us the desired drawing of~$T$. Let~$n$ be the number of vertices of~$S$. If~$n=1$, then~$\Gamma$ is trivially constructed. Otherwise, let~$s_1,\dots,s_k$ be the children of~$s$ in~$S$ such that the edges between~$s$ and~$s_1,\dots,s_h$ are outgoing from~$s$ and those between~$s$ and~$s_{h+1},\dots,s_k$ are incoming into~$s$. Recursively construct a drawing~$\Gamma_i$ of each subtree~$S_i$ rooted at a child~$s_i$ of~$s$. We construct~$\Gamma$ by placing the drawings~$\Gamma_1,\dots,\Gamma_k$ side-by-side, in left-to-right order~$\Gamma_1,\dots,\Gamma_k$, so that the following properties are satisfied:
\begin{itemize}
\item the left side of the bounding box of~$\Gamma_1$ is to the right of~$s$;
\item the bottom sides of the bounding boxes of~$\Gamma_1,\dots,\Gamma_h$ are on the same layer, which is one layer above the layer of~$s$, 
\item~$s_{h+1}$ is one layer below the layer of~$s$; and
\item for~$i=h+2,\dots,k$, we have that~$s_i$ is one layer below the layer where the bottom side of the bounding box of~$\Gamma_{i-1}$ lies. 
\end{itemize}

It is easy to see that, since $\Gamma_i$ is a left-anchored upward-planar drawing for $i=1,\dots,k$, then so is $\Gamma$. We define the \emph{lower height}~$h(\Gamma)$ of~$\Gamma$ as the number of layers that intersect~$\Gamma$ and that are not above the layer of~$s$; note that the layer of~$s$ counts towards the value of~$h(\Gamma)$. We now bound~$h(\Gamma)$ in terms of~$d_+$ and~$\ell_{\uparrow}(S)$. If $d_+=0$, then all the drawings~$\Gamma_1,\dots,\Gamma_k$ lie above the layer on which~$s$ lies, hence $h(\Gamma)=1$. Assume next that~$d_+\geq 1$. We prove, by induction on~$\ell_{\uparrow}(S)$, that~$h(\Gamma)\leq \sum_{j=0}^{\ell_{\uparrow}(S)} (d_+)^j$. In the base case, we have~$\ell_{\uparrow}(S)=0$, hence~$s$ has no incoming edge. We then have~$h(\Gamma)=1$, as by construction all the drawings~$\Gamma_1,\dots,\Gamma_k$ lie above the layer on which~$s$ lies. Assume next that~$\ell_{\uparrow}(S)\geq 1$. By construction, we have that~$h(\Gamma)=1+\sum_{i=h+1}^k h(\Gamma_i)$. The key observation here is that~$\ell_{\uparrow}(S_i)\leq \ell_{\uparrow}(S)-1$, given that, for any~$i \in \{h+1,\dots,k\}$, any directed path incoming into~$s_i$ in~$S_i$ becomes a directed path incoming into~$s$ in~$S$ together with the edge~$(s_i,s)$. Thus, by induction, we have~$h(\Gamma)\leq 1+\sum_{i=h+1}^k \sum_{j=0}^{\ell_{\uparrow}(S)-1} (d_+)^j$. Since edges from~$s_{h+1},\dots,s_k$ to~$s$ exist, we have~$k-h\leq d_+$, and hence~$h(\Gamma)\leq 1+d_+\cdot \sum_{j=0}^{\ell_{\uparrow}(S)-1} (d_+)^j=\sum_{j=0}^{\ell_{\uparrow}(S)} (d_+)^j$, as required. Note that~$\sum_{j=0}^{\ell_{\uparrow}(S)} (d_+)^j$ is in~$O((d_+)^{\ell_{\uparrow}(S)})$.

In order to conclude the proof, two things remain to be observed. First,~$\ell_{\uparrow}(S)\leq \ell$,
since~$\ell_{\uparrow}(S)$ is the length of a directed path. Hence, the lower height of any drawing constructed by the algorithm for a subtree of~$T$ is in~$O((d_+)^{\ell})$ and hence in~$O((d-1)^{\ell})$. Second, the edges incoming into~$s$ span only layers that are not above the layer of~$s$, hence they have span smaller than~$h(\Gamma)$; also, an edge outgoing from~$s$ towards a child~$s_i$ has span~$h(\Gamma_i)-1$, as it spans all the layers that contribute to the lower height of~$\Gamma_i$, plus $1$ for the layer of~$s$. Thus, the span of every edge is at most the lower height of a drawing constructed by the algorithm for a subtree of~$T$ and thus it is in~$O((d-1)^{\ell})$.  
\end{proof}

For trees with large degree, the next theorem might give a better bound than \cref{th:ub-path-degree}.

\begin{theorem}
\label{th:ub-longest-path}
Let~$T$ be an~$n$-vertex directed tree whose longest directed path has length~$\ell$. Then~$T$ has span in~$O(\ell \cdot n^{\log_2\varphi})\in O(\ell \cdot n^{0.695})$, where~$\varphi=\frac{1+\sqrt{5}}{2}$ is the golden ratio.
\end{theorem}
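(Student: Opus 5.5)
We modify the recursive left-anchored construction from the proof of \cref{th:ub-path-degree} so that the order in which the children are stacked is chosen according to subtree sizes. The golden ratio suggests a Fibonacci-type recurrence: the lower height $h(n)$ of a drawing of an $n$-vertex subtree should satisfy $h(n)\le h(a)+h(b)$ with $a+b\le n$ and $b\le n/2$, or a similar split. Solutions of such recurrences grow like $n^{\log_2\varphi}$. The factor $\ell$ will come from how an edge to a child is routed.

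\textbf{Construction.} We root $T$ at a leaf. For a subtree $S$ rooted at $s$ with children $s_1,\dots,s_k$, we process the incoming children (those with edges into $s$) in order of nondecreasing subtree size. The drawing of the \emph{largest} child subtree is placed so that it does not add to the depth of the smaller ones. Concretely, we allow two kinds of anchored drawings: one where the root is on the left side of the bounding box and one where it is on the right side. We put the largest incoming child subtree on the far side, level with $s$'s layer minus one. The remaining incoming subtrees are stacked in a staircase as in \cref{th:ub-path-degree}, but we place them on the opposite side of $s$ (left and right), so their heights are not added to that of the largest one. Outgoing children are handled symmetrically above $s$. We would measure both an upper height and a lower height.

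\textbf{Height recurrence.} Let $H(n)$ be the maximum lower (and upper) height. The staircase of small subtrees has total size at most $n-1-|S_{\max}|$, and each of them has size at most $n/2$. The aim is to show
\[
H(n)\le \max\bigl(H(m),\,1+\textstyle\sum_i H(n_i)\bigr),
\]
where $m$ is the largest child size and $\sum_i n_i\le n-1-m$. Since $H$ is superadditive-friendly (concave powers), the sum is maximized by few pieces, which yields a Fibonacci-like worst case, $H(n)\le H(n_1)+H(n_2)$ with $n_2\le n_1$ and $n_1+n_2<n$. We then verify by induction that $H(n)\le c\,n^{\log_2\varphi}$ using $x^{\alpha}+y^{\alpha}\le (x+y)^{\alpha}\cdot$ const. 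Here $\alpha=\log_2\varphi$ is the exponent for which the balanced and unbalanced extremes equalize: $2\cdot(1/2)^\alpha=2/\varphi$ versus $\varphi^{\alpha}$-type comparisons. This calibration, choosing the right exponent, is the main obstacle. We would first test the induction on the two extreme splits $(n/2,n/2)$ and $(n-1,\text{tiny})$ and adjust how deep nesting (alternating left/right anchoring across levels) is charged.

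\textbf{Span.} Height alone does not bound span, because an edge from $s$ to the largest child may need to climb around nested staircases. We bound each edge's span by the sum of heights along a chain of at most $\ell$ nested directed levels. Edges to children only traverse staircases of the same direction, and each direction change resets the accumulation. A directed path of length $\ell$ therefore limits nesting to $\ell$ terms each $O(n^{\log_2\varphi})$, giving span $O(\ell\cdot n^{\log_2\varphi})$. Verifying this resetting claim carefully, in particular that alternating in/out edges do not compound heights, is the second delicate point.
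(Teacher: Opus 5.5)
\textbf{There is a genuine gap.} Your construction fails exactly where the bound is supposed to come from.

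\textbf{The staircase sums heights.} Stacking the non-largest incoming (or outgoing) children in a staircase makes their heights add. Nothing in your scheme bounds how many such siblings there are. Take the in-star with center $c$ and $n-1$ leaves pointing into $c$, rooted at a leaf. Then $c$ has $n-2$ incoming children of size $1$. One goes to the far side and the remaining $n-3$ form a staircase (or two staircases, if you split them left and right). So some edge into $c$ has span linear in $n$, although this tree has span $1$ by \cref{th:path}. This is why the staircase in \cref{th:ub-path-degree} only gives a degree-dependent bound.

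\textbf{The recurrence does not give $n^{\gamma}$.} Write $\gamma=\log_2\varphi<1$. Your remark that $\sum_i H(n_i)$ is maximized by few pieces is backwards. The map $x\mapsto x^\gamma$ is concave with $0^\gamma=0$, hence subadditive, and splitting a total $N$ into $k$ equal pieces gives $k^{1-\gamma}N^\gamma$. Even the two-piece recurrence $H(n)\le H(n_1)+H(n_2)$, with only $n_2\le n_1$ and $n_1+n_2<n$, fails. The split $n_1=n_2\approx n/2$ gives $2(n/2)^\gamma=(2/\varphi)\,n^\gamma>n^\gamma$, so the induction breaks and you only get linear height. The exponent is not something to calibrate. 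It is forced by an extra constraint: if one piece has size about $n/2$, the other has size at most about $n/4$, because $2^{-\gamma}+4^{-\gamma}=\varphi^{-1}+\varphi^{-2}=1$.

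\textbf{The span argument is unsupported.} Your last paragraph gives no mechanism that makes the accumulation ``reset'' at direction changes.

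\textbf{The missing idea.} Decompose along a greedy (heavy) root-to-leaf path $\pi$ instead of vertex by vertex, and place sibling subtrees side by side instead of in a staircase. The paper proceeds as follows.
\begin{itemize}
\item It draws $\pi$ as a zigzag on $\ell+1$ dedicated layers. Each maximal directed subpath of $\pi$ runs from the lowest to the highest of these layers, which is possible because it has length at most $\ell$.
\item It classifies every subtree hanging off $\pi$ as an up- or a down-subtree, according to the direction of its attaching edge.
\item It places the recursively obtained left-anchored drawings of all up-subtrees side by side, with aligned bottom sides, above the path layers. The down-subtrees go symmetrically below.
\end{itemize}
Siblings therefore share layers, so the cost is a maximum rather than a sum. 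The number of layers satisfies $L(n)\le L(\alpha)+L(\beta)+(\ell+1)$, where $\alpha$ and $\beta$ are the largest sizes of an up- and of a down-subtree.

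Greediness of $\pi$ gives, by \cref{le:root-to-leaf-partition}, that $\alpha\le n/2$ and $\beta\le(n-\alpha)/2$, or the symmetric pair of inequalities. This is precisely what Chan's inequality $\alpha^\gamma+\beta^\gamma\le n^\gamma$ needs, and induction then yields $L(n)\le(2n^\gamma-1)(\ell+1)$. The factor $\ell$ is just the $\ell+1$ path layers paid once per recursion level. The span is at most the total number of layers, so no edge-routing or ``resetting'' argument is needed.
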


\begin{proof}
Assume that $T$ is rooted at any vertex $r$. A \emph{root-to-leaf path} in $T$ is a path $\pi=(r=u_1,u_2,\dots,u_p)$ between $r$ and any leaf $u_p$ of $T$. A root-to-leaf path $\pi$ partitions the subtrees rooted at the children of the vertices in $\pi$ in \emph{up-subtrees} and \emph{down-subtrees}, where a subtree $T_{i,j}$ rooted at a vertex $v_{i,j}$ which is a child of a vertex $u_i$ of $T$ is an up-subtree if the edge $(u_i,v_{i,j})$ is directed from $u_i$ to $v_{i,j}$, and a down-subtree otherwise. A root-to-leaf path $\pi=(r=u_1,u_2,\dots,u_p)$ in $T$ is \emph{greedy} if, for $i=1,\dots,p-1$, the subtree of $T$ rooted at $u_{i+1}$ has at least as many vertices as any other subtree of $T$ rooted at a child of $u_i$. 

In order to design our drawing algorithm, we need the following structural result, which is similar, both in the statement and in the proof, to a result for (non-directed, ordered) trees by Chan \cite{DBLP:journals/algorithmica/Chan02}. The main difference with respect to the result of Chan is that the following lemma applies to trees whose degree is not necessarily bounded by three.

\begin{lemma} \label{le:root-to-leaf-partition}
Let $\pi=(r=u_1,u_2,\dots,u_p)$ be any greedy path in $T$, let $\alpha$ be the number of vertices in any up-subtree of $\pi$ and let $\beta$ be the number of vertices in any down-subtree of $\pi$. Then at least one of the following is true; (i) $\alpha\leq n/2$ and $\beta\leq (n-\alpha)/2$, or (ii) $\beta\leq n/2$ and $\alpha\leq (n-\beta)/2$.
\end{lemma}

\begin{proof}
For any $i=1,\dots,p$, let $n_i$ be the number of vertices in the subtree $T(u_i)$ of $T$ rooted at $u_i$. Let $T_{i,h}$ and $T_{j,k}$ be any up- and down-subtrees of $\pi$ with $\alpha$ and $\beta$ vertices, respectively. Assume that $i\leq j$, as the other case is symmetric\footnote{Since Chan's lemma \cite[Lemma 3.1]{DBLP:journals/algorithmica/Chan02} deals with trees whose degree is at most three, he can assume the inequality between $i$ and $j$ to be strict.}. By the definition of a greedy path, we have $\alpha\leq n_{i+1}$ and $\beta\leq n_{j+1}$. Since the subtrees $T_{i,h}$ and $T(u_{i+1})$ are vertex-disjoint, we get $\alpha + n_{i+1} \leq n$. Also, since the subtrees $T_{i,h}$, $T_{j,k}$, and $T(u_{j+1})$ are vertex-disjoint (this is true even if $j=i$), we get $\alpha + \beta + n_{j+1}\leq n$. Combining $\alpha\leq n_{i+1}$ with $\alpha + n_{i+1} \leq n$ we get $\alpha\leq n/2$. Combining  $\beta\leq n_{j+1}$ with $\alpha + \beta + n_{j+1}\leq n$ we get $\beta\leq (n-\alpha)/2$.
\end{proof}

We now prove that $T$ admits a left-anchored (see the definition on the proof of \cref{th:ub-path-degree}) upward-planar layered drawing $\Gamma$ with at most $(2n^\gamma-1)(\ell+1)$ layers, where $\gamma=\log_2 \varphi=\log_2 (1+\sqrt 5)/2$, which implies that the span of $\Gamma$ is in $O(\ell \cdot n^{\log_2 \varphi})$, as required. The proof is by induction on $n$. In the base case we have $n=1$. Then a left-anchored drawing with $(2n^\gamma-1)(\ell+1)=1$ layer is trivially constructed. In the inductive case, let $\pi=(r=u_1,u_2,\dots,u_p)$ be a greedy root-to-leaf path.

    \begin{figure}[t]
        \centering
        \includegraphics[]{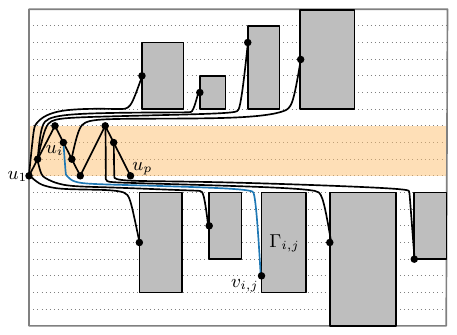}
        \caption{Illustration for the proof of \cref{th:ub-longest-path}. In this example, we have $\ell=3$. The $\ell+1$ layers on which $\pi$ is placed are in the yellow region.}
        \label{fig:ub-longest-path-app}
    \end{figure}

Let $\alpha$ and $\beta$ be the maximum number of vertices of an up- and down-subtree of $\pi$, respectively. The drawing $\Gamma$ of $T$ uses $(2\alpha^\gamma-1)(\ell+1) + (2\beta^\gamma-1)(\ell+1) + (\ell+1)$ layers and is constructed as follows (see \cref{fig:ub-longest-path-app}). 
We recursively construct a left-anchored upward-planar layered drawing $\Gamma_{i,j}$ of each up- and down-subtree $T_{i,j}$. We use the highest $(2\alpha^\gamma-1)(\ell+1)$ layers to place the recursively constructed drawings $\Gamma_{i,j}$ of the up-subtrees of $\pi$ so that that the bottom sides of their bounding boxes lie on the lowest among the $(2\alpha^\gamma-1)(\ell+1)$ layers and so that $\Gamma_{i,j}$ lies to the left of $\Gamma_{i',j'}$, for any $i<i'$ and for any $j$ and $j'$ such that $\Gamma_{i,j}$ and $\Gamma_{i',j'}$ exist. The drawings of the down-subtrees of $\pi$ are placed symmetrically on the lowest $(2\beta^\gamma-1)(\ell+1)$ layers. We place a drawing of $\pi$ on the intermediate $\ell+1$ layers. Note that $\pi=(r=u_1,u_2,\dots,u_p)$ consists of a sequence of directed paths. Each maximal directed path connecting two vertices $u_i$ and $u_j$ with $i<j$ increases in $x$-coordinates from $u_i$ to $u_j$ and increases or decreases in $y$-coordinates from $u_i$ to $u_j$ depending on whether the path is directed from $u_i$ to $u_j$ or vice versa, respectively. Each directed path starts from the lowest and ends at the highest among the $\ell+1$ layers, which ensures that $\pi$ fits on the $\ell+1$ layers, given that each directed path has length at most $\ell$. The drawing of $\pi$ is sufficiently to the left so that $r$ is the leftmost vertex of $\Gamma$. The edges from a vertex $u_i$ of $\pi$ to its children $v_{i,j}$ that are roots of up-subtrees of $\pi$ exit $u_i$ by increasing the $x$-coordinates only slightly, until reaching the highest of the $\ell+1$ layers reserved for $\pi$; from there, they reach the lowest layer among those reserved for the up-subtrees of $\pi$, in a point slightly to the left of the drawing $\Gamma_{i,j}$, so that they can then connect to $v_{i,j}$ without creating crossings. The edges from a vertex $u_i$ of $\pi$ to its children $v_{i,j}$ that are roots of down-subtrees of $\pi$ are drawn similarly. Finally, it remains to prove that $(2\alpha^\gamma-1)(\ell+1) + (2\beta^\gamma-1)(\ell+1) + (\ell+1)\leq (2n^\gamma-1)(\ell+1)$. However, such an inequality is equivalent to $\alpha^{\gamma} +  \beta^{\gamma} \leq n^\gamma$, which is verified in \cite{DBLP:journals/algorithmica/Chan02}, by exploiting the inequalities (i) $\alpha\leq n/2$ and $\beta\leq (n-\alpha)/2$, or (ii) $\beta\leq n/2$ and $\alpha\leq (n-\beta)/2$. These inequalities come from \cite[Lemma 3.1]{DBLP:journals/algorithmica/Chan02} in Chan's result and from \cref{le:root-to-leaf-partition} in our case.
\end{proof}

\section{NP-Hardness}\label{sec:hardness}
We show that the problem $2$-\textsc{span upward planarity} is NP-complete for directed graphs without fixed embedding even if restricted to trees or biconnected graphs with a single source.
We prove NP-hardness by reduction from 3-partition, which is defined as follows.

\textit{\textbf{3-partition:} Given a multi-set $A$ of $3m$ positive integers that sum to $mB$ for some positive integer $B$
    and with $B/4 < a < B/2$ for every $a \in A$, is there a partition of $A$ into $m$ multi-sets $A_1,\dots,A_m$ s.t., %
    for each $i=1,\dots,m$, $A_i$ contains three integers whose sum is~B?}

3-partition is NP-complete even if $B$ is polynomial in $m$ \cite{garey1979computers} (i.e., \emph{strongly} NP-complete).
With both restrictions, we use a similar reduction.

\begin{theorem}
    \label{thm:nph-single-source}
    $2$-\textsc{span upward planarity} is NP-complete even for biconnected single-source graphs.
\end{theorem}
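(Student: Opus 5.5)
Membership in NP is immediate: a certificate consists of an integer layer assignment $y\colon V\to\mathbb Z$ with $1\le y(v)-y(u)\le 2$ for every edge $(u,v)$, together with a total left-to-right order of the vertices on each layer and, for each pair of consecutive layers, the order of the edge segments crossing between them (edges of span two contribute a dummy point on the middle layer). After subdividing every span-two edge once, this is a proper level graph, and we can check in linear time that the given orders form a level-planar drawing~\cite{JungerLM98}. It remains to prove NP-hardness, for which we reduce from 3-partition with $B$ polynomial in $m$, so that numbers can be encoded in unary.

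The reduction builds a rigid \emph{frame} that has a single source $s$, is biconnected, and is forced to have essentially one layer assignment. We take a long chain of $m$ \emph{bins} separated by rigid walls. Each wall is a pair of directed paths from $s$ that have equal length and are joined at the top, or a stack of triangles, so that any drawing with span at most two reproduces the intended layers. The length ratios exploit the same argument as \cref{fig:intro-example}: if a directed $u$-$v$ path of length $2L$ runs parallel to an edge or path of length $L$ between the same endpoints, then every edge on the long path has span exactly one and every edge on the short path has span exactly two. Each bin is thus bounded by two such tight paths, and its interior contains exactly $B$ free \emph{slots}, i.e., positions on a fixed layer where a vertex can be placed. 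For each integer $a_i$ we create an \emph{item gadget}: a directed path, or a thin biconnected ladder, with $a_i$ vertices that must lie on a common layer, or alternatively a rigid "comb" occupying $a_i$ consecutive slot positions. We attach it to $s$ by a tight path of length forcing its base to a specific layer, so that it must sit inside some bin. Biconnectivity is preserved by attaching every gadget through two internally disjoint paths from $s$. Because there is no fixed embedding, the only freedom is the choice of the face, i.e., the bin, in which each item gadget is placed.

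For correctness, a valid 3-partition immediately yields a drawing by placing the three items of $A_j$ into bin $j$; since the item sizes sum to $B$, they fill the bin exactly. Conversely, in any drawing of span at most two, the tightness arguments fix the layers of frame and item vertices. Planarity together with the bin walls forces every item to lie entirely inside one bin, and the capacity bound $B$ per bin, combined with the total size $mB$, forces every bin to be exactly full. The condition $B/4<a<B/2$ then yields exactly three items per bin.

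The main obstacle is making the frame and items genuinely rigid with span two while using only one source and keeping biconnectivity. In particular, the item vertices must be unable to "escape" a bin by using span-two edges to slip between layers or into the outer face. I would first try to rule this out with the parallel-paths counting argument applied edge by edge, and in addition make bin walls reach the topmost used layer, so that no region outside the bins is available.
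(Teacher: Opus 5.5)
Your NP-membership argument is fine, and your rigidity device (a directed path of length $2L$ parallel to one of length $L$) is exactly the paper's fixed-length path. The gap is in the capacity mechanism, which is what the reduction hinges on. You measure a bin's capacity horizontally: $B$ ``slots'' are positions on a fixed layer, and an item is a gadget with $a_i$ vertices on a common layer. In a layered drawing there is no such capacity. Width is unbounded, so any face that meets a layer in an open interval can host arbitrarily many vertices on that layer, and walls made of $y$-monotone paths cannot limit this. Moreover, the vertices of a directed path or ladder can never share a layer, since every edge has span at least one. You also attach each item to $s$ by a \emph{tight} path that fixes the layer of its base. Together these mean nothing prevents all $3m$ items from sitting side by side in one bin, so the direction ``span-$2$ drawing $\Rightarrow$ 3-partition'' fails. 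The comb variant has the same problem: you would need to show that two combs in one bin cannot use the same layers, and with bases forced to the same layer they would. Your idea of letting walls reach the top layer so that ``no region outside the bins is available'' also cannot work, because the outer face is unbounded.

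The missing idea is to make capacity vertical and to force the items in a bin to stack. In the paper, $s$ has two children $s'$ and $s''$. For each $i=0,\dots,m$, nested pairs of fixed-length paths run from $s'$ and $s''$ to $t_i$ on level $(m+i)(B+1)$, so each pocket has exactly $B$ free levels strictly between $t_{i-1}$ and $t_i$. Each number gadget $U_a$ is attached to \emph{both} $s'$ and $s''$ by directed paths with $m(B+1)$ edges that are deliberately \emph{not} rigid. Span-$1$ and span-$2$ edges can be mixed, so $r_a$ can float to any level between $m(B+1)$ and $2m(B+1)$. Above $r_a$, the gadget carries two parallel directed paths with $a$ vertices ending at $u_a$, and two extra edges bypassing $r_a$ keep the graph biconnected. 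Each gadget, together with the edges from $s$ to $s'$ and to $s''$, forms an arch nested with the frame's arches. Hence gadgets in the same pocket are stacked vertically and occupy pairwise disjoint sets of at least $a$ levels each, and the $mB$ available levels force every pocket to hold numbers summing to exactly $B$. Escape is blocked topologically rather than by filling space. Two long paths $P^+_1,P^+_2$ meet at $t'$ far above $t_m$, so a gadget outside the outermost pocket would cross them. The length $m(B+1)$ of the attachment paths keeps every $r_a$ above $t_0$.
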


\begin{proof}
    Containment in NP follows from the fact that, given the assignment of vertices to levels, level planarity testing can be done in polynomial time \cite{JungerLM98}
    and it can be verified in linear time that the span is at most 2.
    To show NP-hardness, we reduce from 3-partition.
    Given an instance $A$ of 3-partition with $3m$ integers that sum to $mB$, we define a directed graph~$G$.
    Some parts of $G$ model on the one hand pockets and frames,
    and on the other hand gadgets of a specific height,
    each representing a number of~$A$.
    
    \begin{figure}[t]
        \begin{subfigure}[b]{.25\textwidth}
            \centering
            \includegraphics[page=2]{figures/nphard-single-source}
            \subcaption{}
            \label{fig:fixed-length-path}
        \end{subfigure}
        \hfill
        \begin{subfigure}[b]{.7\textwidth}
            \centering
            \includegraphics[page=3]{figures/nphard-single-source}
            \subcaption{}
            \label{fig:number-gadget}
        \end{subfigure}
        \caption{(a) Fixed-length path. (b) Number gadgets for the numbers 7, 2, and 1.}
    \end{figure}
    
    A key building block is a \emph{fixed-length path (of length $k$)}, 
    that is, a pair of directed parallel paths that share only their endpoints:
    the one path contains $k$ edges for some even $k \ge 2$ and is called \emph{long path},
    while the other path contains $k/2$ edges and is called \emph{short path}.
    Observe that in any drawing of span 2, the fixed-length path can only be realized with the
    long path having only edges of span~1 and the short path having only edges of span~2.
    Hence, such a path bridges $k$ levels (and has vertices on $k+1$ levels).
    For an illustration, see \cref{fig:fixed-length-path}.
    
    \begin{figure}[t]
        \centering
        \includegraphics[page=1]{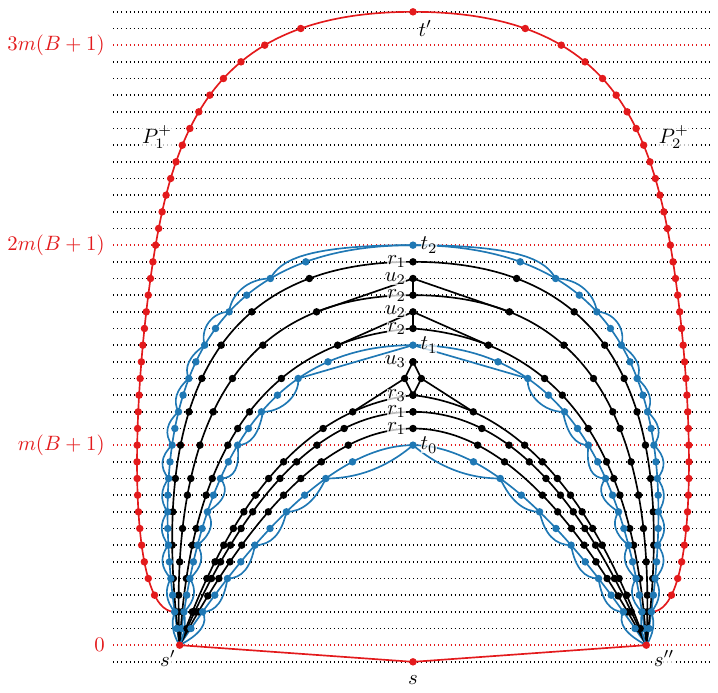}
        \caption{The biconnected single-source graph for the instance $M=\{1,1,1,2,2,3\}$ of 3-partition (where $m = 2$ and $B = 5$) and an upward-planar layered drawing that corresponds to the partition $\{1,1,3\},\{2,2,1\}$.
            Frame and pockets are red and blue, while the number gadgets are black.
        }
        \label{fig:nphard-single-source}
    \end{figure}
    
    We are now ready to define the graph~$G$.
    We start by describing the frame,
    which is shown in blue and red in \cref{fig:nphard-single-source}.
    There is a single source $s$ that has two outgoing edges leading to vertices $s'$ and $s''$.
    For $i = 0, 1, 2, \dots, m$, we add two fixed-length path.
    One starts in~$s'$, the other starts in~$s''$.
    Both fixed-length paths terminate in the same vertex $t_i$ and get length $(m + i)(B+1)$.
    To the third vertex on the long path from $s'$ to $t_m$,
    we add a directed path~$P^+_1$ of length $3m (B+1)$ terminating in a vertex~$t'$.
    Symmetrically, we add a directed path~$P^+_2$ of length $3m (B+1)$ terminating in the vertex~$t'$
    to the third vertex on the long path from $s''$ to $t_m$.
    This finishes the description of the frame.
    
    Now for every number $a \in A$,
    we use a \emph{number gadget} $U_a$;
    see \cref{fig:number-gadget}.
    It consists of
    \begin{itemize}
        \item a directed path $P_a^1$ with $m(B+1)$ edges starting in $s'$ and terminating in a vertex~$r_a$,
        \item a directed path $P_a^2$ with $m(B+1)$ edges starting in $s''$ and terminating also in~$r_a$,
        \item two parallel directed paths $P_a^3$ and $P_a^4$ with $a$ vertices (and $a-1$ edges)
        that both start in $r_a$ and both terminate in a vertex $u_a$, and
        \item two edges, one connecting the predecessor and successor of $r_a$ along $P_a^1$ and $P_a^3$, respectively, 
        one connecting the predecessor and successor of $r_a$ along $P_a^2$ and $P_a^4$, respectively.
    \end{itemize}
    
    Clearly, $s$ is the only source of~$G$.
    Moreover, note that any two edges lie
    on a cycle and, hence, $G$ is biconnected.
    Clearly, $G$ has polynomial size in~$O(mB)$
    and can be described in the same time bound.
    This is sufficient for the strongly NP-complete problem 3-partition.
    
    We will next argue that $G$
    admits an upward-planar drawing with span~2 if and only if
    the given instance $A$ of 3-partition is a yes-instance.
    
    \subparagraph*{Yes-instance of 3-partition $\Rightarrow$ span-2 drawing.}
    Let $A_1, \dots, A_m$ be a solution to the 3-partition instance~$A$.
    We show how to construct an upward planar drawing of~$G$ with span~2 using this solution.
    The main idea is to group the number gadgets according to this solution inside the frame
    to obtain a drawing like in \cref{fig:nphard-single-source}.
    We now describe this drawing in more detail.
    Place $s$ on level $-1$ and place $s'$ and $s''$ on level 0.
    Draw the frame as illustrated in \cref{fig:nphard-single-source}, that is,
    for $i \in \{0, 1, \dots, m\}$,
    the endpoint $t_i$ of the fix lengths path starting in $s'$ and~$s''$ lies on level~$(m+i)(B+1)$.
    For $i \in [m]$, draw the number gadgets of $A_i$
    in the pockets between the fixed-length paths terminating
    in $t_{i-1}$ and $t_i$.
    More precisely, for $a_1, a_2, a_3 \in A_i$,
    place $r_{a_1}$ on level $(m+i-1)(B+1)+1$,
    $u_{a_1}$ on level $(m+i-1)(B+1)+a_1$,
    $r_{a_2}$ on level $(m+i-1)(B+1)+a_1+1$,
    $u_{a_2}$ on level $(m+i-1)(B+1)+a_1+a_2$,
    $r_{a_3}$ on level $(m+i-1)(B+1)+a_1+a_2+1$,
    $u_{a_3}$ on level $(m+i-1)(B+1)+B = (m+i)(B+1)-1$.
    Observe that this allows $P_{a_\star}^3$ and $P_{a_\star}^4$ ($a_\star \in a_1, a_2, a_3)$ to be drawn.
    Furthermore, note that~$P_{a_\star}^1$ and~$P_{a_\star}^2$
    can be drawn if we combine edges of span~1 and~2.
    Recall that these paths have $m (B+1)$ edges.
    Let $U_{a_b}$ and $U_{a_t}$ be
    the bottommost and topmost number gadget, respectively.
    Then, $r_{a_b}$ is placed on level $m(B+1)+1$,
    while $r_{a_t}$ is placed strictly below level~$2m(B+1)$.
    This also allows, for each $a \in A$, that
    the predecessors/successors of each~$r_a$
    is drawn 1 level below/above~$r_a$ to draw the extra edges with span~2.
    Finally observe that $P^+_1$ and $P^+_2$
    can be drawn in the outer face of the remaining drawing
    with~$y(t') = 3m(B+1) + 2$.

    \subparagraph*{Span-2 drawing $\Rightarrow$ yes-instance of 3-partition.}
    Let an upward-planar drawing of~$G$ having span~2 be given.
    There is only one level assignment
    possible for the fixed-length paths, which bound the pockets.
    Thus, the endpoints $s'$ and $s''$ of the fixed-length paths
    lie on the same level, which is w.l.o.g.\ level~0,
    and $y(t_i) = (m+i)(B+1)$ for $i = \{0, 1, \dots, m\}$.
    Then, $s$ lies on level $-1$ or $-2$,
    and $y(t') \in \{3m(B+1)+2, 6m(B+1)+2\}$.
    In any case, since $y(t') > y(t_m)$,
    $P^+_1$ and $P^+_2$ are drawn outside the pockets
    above the fixed-length paths as illustrated in \cref{fig:nphard-single-source}.
    
    It remains to analyze the drawing of the number gadgets.
    First observe, since the number gadgets and
    the fixed-length paths forming the pockets are connected
    and terminate in the same vertices $s'$ and~$s''$,
    each number gadget is drawn either completely above
    or completely below two fixed-length paths terminating in a common vertex~$t_i$.
    Now observe that for every $a \in A$, the number gadget~$U_a$ is drawn below~$t_m$
    because otherwise $U_a$ would intersect $P^+_1$ and~$P^+_2$
    as $t'$ lies above $u_a$.
    Moreover, $r_a$ lies above~$t_0$ since the $s'$--$r_a$ path
    has $m(B+1)$ edges and $t_0$ lies on level $m(B+1)$.
    Therefore, every number gadget can be associated
    with precisely one pocket,
    where the $i$-th pocket (for $i \in [m]$) is the area between
    the fixed-length paths terminating in $t_{i-1}$ and~$t_i$.
    Let the multisets $A_1, A_2, \dots, A_m$
    be the numbers whose number gadgets are
    drawn in the 1st, 2nd, $\dots$, $m$-th pocket, respectively.
    Note that, for every $a \in A$, $U_a$
    requires $a$ levels for the paths $P_a^3$ and~$P_a^4$
    between~$r_a$ and~$u_a$
    that are distinct from those used by the other
    number gadgets and from every $t_i$ for $i \in \{0, 1, \dots, m\}$.
    For each $i \in [m]$,
    there are $B$ levels strictly between $t_{i-1}$ and $t_i$.
    Over all pockets, these are $mB$ levels,
    which precisely matches $\sum_{a\in A} a$.
    Hence, the sum of numbers in each pocket is $B$
    and $A_1, \dots, A_m$ is a 3-partition.
\end{proof}

We use a similar NP-hardness construction for directed trees,
but we cannot exploit the rigid structure of fixed-length paths;
see \cref{fig:nphard-trees}.

\begin{figure}[t]
    \centering
    \includegraphics[page=2,width=\textwidth]{figures/nphard-trees}
    \caption{Directed tree from \cref{thm:nph-trees}
        for the same 3-partition instance as in \cref{fig:nphard-single-source}.
        Again, frame and pockets are red and blue, while the number gadgets are black.
    }
    \label{fig:nphard-trees}
\end{figure}

\begin{theorem}
    \label{thm:nph-trees}
    $2$-\textsc{span upward planarity} is NP-complete for directed trees.%
\end{theorem}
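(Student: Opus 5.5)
Membership in NP is shown exactly as in \cref{thm:nph-single-source}: guess the level assignment, check that every edge has span at most $2$, and run a level-planarity test~\cite{JungerLM98}. For hardness we again reduce from strongly NP-complete 3-partition, mimicking the frame/pocket/number-gadget structure of \cref{thm:nph-single-source}. Fixed-length paths are unavailable in a tree, so rigidity has to come from another source. The planned substitute is to make the frame vertices high-degree vertices carrying many long directed paths, which force the levels of these vertices.

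\textbf{Frame.} Take a central directed path $P$ (the ``spine'') of length $L$, large compared to $mB$. A single directed path of length $L$ may use any span-1/span-2 pattern, so it only guarantees a height between $L$ and $2L$. The spine is therefore used differently: at equally spaced spine vertices $t_0,\dots,t_m$ we attach, on both sides, ``walls''. A wall is a long directed path going upward from a common sink-side structure. The idea is that pocket $i$ is bounded below by a wall hanging down from $t_{i-1}$ and above by one from $t_i$, mirroring \cref{fig:nphard-trees}.

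\textbf{Number gadgets.} For $a\in A$ the gadget consists of two directed paths with $a-1$ edges each, joined at $r_a$ and attached to the spine by a long directed path of length $m(B+1)$. The two parallel paths of the single-source construction are replaced by a vertex $r_a$ with many long out-going directed chains of length exactly $a-1$. A span-2 drawing can only make such a chain shorter in height, so length alone is not enough; we add a symmetric chain arrangement such that the gadget needs at least $a$ distinct levels strictly inside a pocket. Getting exactly $a$ levels is not needed. The counting argument only needs a lower bound of $a$ levels per gadget, together with an upper bound of $B$ free levels per pocket. Since $\sum a = mB$, all inequalities are then tight.

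\textbf{Correctness.} Completeness: place the frame in its intended layout, put the three gadgets of each $A_i$ into pocket $i$ stacked vertically as in the biconnected proof, and route the attachment paths with a mix of span-1 and span-2 edges to absorb the height differences. Soundness has three parts. (1) Every pocket has at most $B$ free levels. (2) Each gadget lies entirely within one pocket, by planarity and connectivity, exactly as before. (3) Gadgets in the same pocket occupy pairwise disjoint level sets of size at least $a$.

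\textbf{Main obstacle.} The hardest step is (1), namely forcing an upper bound on pocket height in a tree, where no cycle can pin two levels together. My first attempt is to bound each pocket below and above by the same spine vertex region. The pocket then lies between two consecutive subpaths of a single directed path, whose height differs by a factor of at most $2$. This is not tight, so I would instead use filler: attach to each $t_i$ a directed path of length $2B'$. In a span-2 drawing its height is at least $B'$, and its vertices must occupy free levels of the pocket, either exhausting or reserving them. Choosing the lengths so that the total height of frame plus fillers is forced to its minimum yields the exact count. A global counting of levels along the spine (levels used versus available) then pins everything, which is where I expect the delicate calculations.
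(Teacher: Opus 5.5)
Your outline has a genuine gap at exactly the step you flag as the main obstacle, and the fix you propose cannot work. In a tree, paths attached to a vertex (walls, fillers, ``many long directed paths'') only give \emph{lower} bounds on level differences. The only source of an \emph{upper} bound is the span-$2$ bound along a path of known length. So nothing in your construction forces ``the total height of frame plus fillers'' to its minimum, and equal spacing of $t_0,\dots,t_m$ in edges says nothing about their levels.

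The paper gets rigidity from precisely the factor-$2$ comparison you discard as ``not tight''. Every gadget and separator is an up--down--up zigzag path hanging off one central vertex $r$. The two outermost ones, $T_b$ and $T_t$, are forced outermost by the stubs $Y_b,Y_t$, and both have $L=m(B+1)$ ascending edges. So the sink of $T_b$ lies at least $L$ levels above $r$ and the sink of $T_t$ at most $2L$, and the range between them contains at most $L+1$ levels. Nesting, on the other hand, forces $L+1$ vertices onto pairwise distinct levels in that range: the $m+1$ sinks of $T_b$, $T_t$ and the separating paths, plus the $mB$ vertices of the paths $Z_a$. Hence $T_b$ uses only span-$1$ edges, $T_t$ only span-$2$ edges, and every level is filled. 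The factor $2$ is the tool, once the number of required levels is calibrated to equal the slack $2L-L$.

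Your steps (2) and (3) also do not transfer ``exactly as before''. A tree has no bounded faces, so there is no pocket a gadget must lie in by planarity and connectivity. Nothing stops two gadgets from sitting side by side on the same levels; in the single-source proof, stacking came from every gadget being attached to both $s'$ and $s''$. In the paper, both confinement and level-disjointness follow from the zigzags being nested, which is enforced by the relative lengths of their up, down and up parts.

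Even a tight global count does not give your per-pocket bound (1). A separating path with $(m+i)(B+1)/2$ ascending edges only yields $y(u_i)\le i(B+1)$. These are upper bounds on prefix sums of pocket contents, and they do not exclude pocket sums such as $B-1,B+1$. The paper needs a second ``low point'' to force every separating path to use only span-$2$ edges. There, each $T_a$ carries a path $Z'_a$ with $aB$ vertices and the pockets have $B^2$ levels. Nothing in your outline plays this role.

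Two minor errors: a directed chain with $a-1$ edges always occupies at least $a$ levels, since larger spans make it taller, not shorter. And a directed path with $2B'$ edges has height at least $2B'$, not $B'$.
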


\begin{proof}
    As for \cref{thm:nph-single-source}, the problem is in NP.
    It is NP-hard by reduction from 3-partition.
    
    \subparagraph*{Construction.}
    Let $A$ be an instance of 3-partition with $B= \frac1m\sum_{a \in A} a$ and let $m = |A| / 3$.
    We may assume that $B$ is odd.
    Otherwise replace every integer $a \in A$ by $a+1$.
    We construct a directed tree $T$ from $A$; see \cref{fig:nphard-trees}.
    Tree $T$ contains a \emph{central vertex} $r$ of degree $3m+m+1$.
    All other vertices have degree at most three.
    Vertex $r$ splits $T$ into subtrees;
    $3m$ of which are called \emph{integer trees} and
    each integer tree~$T_a$ corresponds to an integer $a \in A$ (black vertices in \cref{fig:nphard-trees}),
    one is called \emph{bottom tree}~$T_b$, one \emph{top tree}~$T_t$ (red vertices in \cref{fig:nphard-trees}),
    and the other $m-1$ trees are \emph{separating paths} $P_i$, $i=1,\dots,m-1$ (blue vertices in \cref{fig:nphard-trees}).
    Starting from $r$, the separating path $P_i$ has $(m+i)(B+1)/2$ ascending edges (ending at a sink vertex~$u_i$), followed by $B(B-1)(2m-i)/2$ descending edges (ending at a source vertex~$w_i$), and then $2mB^2$ ascending edges.
    The bottom tree~$T_b$, the top tree~$T_t$, and every integer tree~$T_a$ contain a \emph{central path} $\langle r=v_x^0,\dots,v_x^{m(3B^2+1)} \rangle$ each,
    where $x \in \{b,t\} \cup A$, with $m(B+1)$ ascending edges, followed by $mB(B-1)$ descending edges, and then $2mB^2$ ascending edges.
    The bottom tree and the top tree have in addition directed paths $Y_b$ and~$Y_t$ with two edges, called \emph{outer stub}
    (see on the left side of \cref{fig:nphard-trees}).
    The outer stubs $Y_b$ and $Y_t$ end at the second vertex~$v^1_b$ and~$v^1_t$ of the central paths of $T_b$ and $T_t$, respectively.
    In addition, $T_t$ contains a path $P^+$
    that starts at $v_t^{m(B+1)+1}$ (i.e., one vertex after the sink vertex of the central path of~$T_t$)
    has one edge directed upwards followed by
    $3mB^2$ edges directed downwards.
    For each $a \in A$,
    the integer tree $T_a$ has in addition two directed paths called $Z_a$ and $Z'_a$;
    $Z_a$ has size\footnote{Here, the size of a path is the number of its vertices.}~$a$
    and its bottommost vertex is the sink~$v_a^{m(B+1)}$,
    and $Z'_a$ has size $aB$ and its topmost vertex is the source $v_a^{m(B^2+1)}$. %
    
    \subparagraph*{Correctness.}
    We show that the tree $T$ has span 2 if and only if $A$ is a yes instance of 3-partition.
    If there is a solution to~$A$, the high-level idea is to make all subtrees of~$r$ form a nested up--down--up zigzag pattern going to the right
    as illustrated in \cref{fig:nphard-trees}.
    The subtrees rooted at~$r$ are permuted in such a way that~$T_b$ and~$T_t$
    are the outermost trees and the separating paths $P_1, \dots, P_{m-1}$ form pockets of height $B$ at the ``high point''
    and pockets of height $B^2$ at the ``low point''.
    Into each pocket, we arrange three trees that represent three numbers from~$A$ being grouped together in the solution of~$A$.
    Note that, due to the sizes of the additional paths $Z_a$ and $Z'_a$,
    these three trees need precisely $B$ levels at their ``high point'' and $B^2$ levels at their ``low point''.
    In the other direction, we are given a drawing of~$T$ with span~2.
    We will analyze how such a drawing can look like due to the structure of~$T$.
    We will arrive at the conclusion that the only way to draw~$T$ with span~2 is
    in a similar manner as in \cref{fig:nphard-trees} (potentially mirrored).
    The separating paths $P_1, \dots, P_{m-1}$ do again form pockets of height $B$ and $B^2$.
    All levels of these pockets need to be occupied by trees corresponding to numbers in~$A$.
    Since the number of levels in a pocket is~$B$ (or~$B^2$),
    this corresponds to groups of numbers in~$A$ summing up to~$B$.
    In other words, we can read off a solution for~$A$ from a drawing of~$T$ having span~2.
    
    \subparagraph*{Yes-instance of 3-partition $\Rightarrow$ span-2 drawing.}
    Let $A_1, \dots, A_m$ be a solution to the 3-partition instance~$A$.
    We show how to construct an upward planar drawing of~$T$ with span~2 using this solution.
    In the drawing, we order $r$'s subtrees at~$r$ as follows from right to left (see \cref{fig:nphard-trees}).
    The rightmost subtree is~$T_b$.
    Then, for each $i \in [m]$, the trees $T_{a_i^1}$, $T_{a_i^2}$, $T_{a_i^3}$, $P_i$ follow in this order from right to left where $a_i^1$, $a_i^2$, and $a_i^3$ are the three elements of~$A_i$.
    The leftmost subtree is~$T_t$.
    Since $T_b$ and $T_t$ are the outermost trees, we can draw $Y_b$, $Y_t$, and $P^+$ without crossings by attaching them to the outside;
    in particular, the edge of $P^+$ incident to~$v_t^{m(B+1)+1}$
    has span 2 such that the next vertex is one level above $v_t^{m(B+1)}$
    and the rest of $P^+$ leads downwards on the left side of the drawing.
    
    Let $r$ be placed on level $-m(B+1)$.
    Then, place the sink $v_b^{m(B+1)}$ on level~0.
    Note that this is possible by assigning every edge of the central path of~$T_b$ from $r  = v_b^0$ to $v_b^{m(B+1)}$ span~1.
    Similarly, we place the sink $v_t^{m(B+1)}$ on level~$m(B+1)$.
    Note that this is possible by assigning every edge of the central path of~$T_t$ from $r  = v_t^0$ to $v_t^{m(B+1)}$ span~2.
    In between, we place, for each $i \in [m]$, the sink~$v_{a_i^1}^{m(B+1)}$ on level~$i (B+1) - B$,
    the sink~$v_{a_i^2}^{m(B+1)}$ on level~$i (B+1) - B + a_i^1$,
    and the sink~$v_{a_i^3}^{m(B+1)}$ on level~$i (B+1) - B + a_i^1 + a_i^2$.
    Note that this is possible by combining edges of spans~1 and~2 on the central paths of~$T_{a_i^1}$, $T_{a_i^2}$, and~$T_{a_i^3}$
    from $r$ to to the sinks $v_{a_i^1}^{m(B+1)}$, $v_{a_i^2}^{m(B+1)}$, and~$v_{a_i^3}^{m(B+1)}$.
    Moreover, we draw, for each $a \in A$, every edge of~$Z_a$ with span~1;
    hence, the vertices of $Z_a$ do not share a level with any sink $v_{a'}^{m(B+1)}$ of the central path of a different tree~$T_a'$ with $a' \in A$
    (see on the top left in \cref{fig:nphard-trees}).
    It remains to assign (the first portions of) the separating paths $P_1, \dots, P_{m-1}$.
    For each $i \in [m-1]$, we place the sink $u_i$ at level $i (B+1)$,
    which is possible if every edge between $r$ and $u_i$ has span~2.
    Furthermore, observe that $u_i$ does not share a level with any sink $v_{a}^{m(B+1)}$
    of the central path of a tree~$T_a$ with $a \in A$ or with a vertex of~$Z_a$.
    
    All sinks of the central paths $v_b^{m(B+1)}$, $v_{a_1^1}^{m(B+1)}$, $v_{a_1^2}^{m(B+1)}$, $v_{a_1^3}^{m(B+1)}$, $v_{a_2^1}^{m(B+1)}$, \dots, $v_{a_m^3}^{m(B+1)}$, $v_t^{m(B+1)}$ and all sinks $u_1, \dots, u_{m-1}$
    have as their left child the next vertex on the path towards $r$ and as their right child the next vertex on the path towards
    the sources (w.r.t.\ the central paths) $v_b^{m(B^2+1)}$, $v_{a_1^1}^{m(B^2+1)}$, $v_{a_1^2}^{m(B^2+1)}$, $v_{a_1^3}^{m(B^2+1)}$, $v_{a_2^1}^{m(B^2+1)}$, \dots, $v_{a_m^3}^{m(B^2+1)}$, $v_t^{m(B^2+1)}$
    and the sources $w_1, \dots, w_{m-1}$.
    We assign the source $v_b^{m(B^2+1)}$ to level $-2mB(B-1)$.
    Note that this is possible if each edge between the sink~$v_b^{m(B+1)}$ and the source~$v_b^{m(B^2+1)}$ has span~2 since $v_b^{m(B+1)}$ is on level 0:
    \begin{equation*}
        y(v_b^{m(B^2+1)}) = 0 - 2 \cdot (m(B^2+1) - m(B+1)) = - 2 (m B^2 - mB) = -2 mB (B-1).
    \end{equation*}
    Similarly, we assign the source $v_t^{m(B^2+1)}$ to level $-m (B^2 - 2B - 1)$.
    Note that this is possible if each edge between the sink~$v_t^{m(B+1)}$ and the source~$v_t^{m(B^2+1)}$ has span~1 since $v_t^{m(B+1)}$ is on level $m(B+1)$:
    \begin{equation*}
        y(v_t^{m(B^2+1)}) = m(B+1) - (m(B^2+1) - m(B+1))
        = -m (B^2 - 2B - 1).
    \end{equation*}
    For each $i \in [m-1]$, we assign $w_i$ to level~$-2mB(B-1) + i (B^2+1)$.
    This is possible by assigning each edge between sink~$u_i$ and source~$w_i$ span~2.
    Recall that there are $B(B-1)(2m-i)/2$ edges in between and $u_i$ lies on level $i(B+1)$. Hence,
    \begin{equation*}
        y(w_i) = i(B+1) - 2 \cdot \frac{B(B-1)(2m-i)}{2}
        = -2mB (B - 1) + i (B^2+1).
    \end{equation*}
    For each $i \in [m]$, let $a_i^1$, $a_i^2$, and $a_i^3$ be the three numbers of~$A_i$.
    We assign the source~$v_{a_i^1}^{m(B^2+1)}$ to level $-2mB(B-1) + (i-1) (B^2+1) + a_i^1 B$
    and each of the $a_i^1B-1$ other vertices of $Z'_{a_i^1}$ to the next levels below.
    Similarly, we assign the sources $v_{a_i^2}^{m(B^2+1)}$ and $v_{a_i^3}^{m(B^2+1)}$
    to the levels $-2mB(B-1) + (i-1) (B^2+1) + (a_i^1 + a_i^2) B$
    and $-2mB(B-1) + (i-1) (B^2+1) + B^2$, respectively.
    The $a_i^2B-1$ and $a_i^3B-1$ other vertices of $Z'_{a_i^2}$ and $Z'_{a_i^3}$ follow on the next levels below $v_{a_i^2}^{m(B^2+1)}$ and $v_{a_i^3}^{m(B^2+1)}$, respectively.
    Observe that the levels of the latter vertices
    and $w_1, \dots, w_{m-1}$ are distinct.
    However, it remains to analyze if the downward parts of the central paths of $T_{a_i^1}$, $T_{a_i^2}$, and $T_{a_i^3}$
    can reach the desired levels if all edges have span~1 or~2.
    Recall that the sink $v_{a_i^1}^{m(B+1)}$ lies on level~$i (B+1) - B$ and the downward path to the source~$v_{a_i^1}^{m(B^2+1)}$ has $mB(B-1)$ edges.
    Indeed, for all edges having span~1, it holds that
    \begin{align*}
        y(v_{a_i^1}^{m(B^2+1)}) = -2mB(B-1) + (i-1) (B^2+1) + a_i^1 B &\le i (B+1) - B - mB(B-1) \\
        \Leftrightarrow (i-1) (B^2+1) + a_i^1 &\le mB(B-1) + iB \\
        \Leftrightarrow iB^2 + i + mB + a_i^1 &\le mB^2 + iB + B + 1 \\
        \Leftrightarrow (m-i-1) B + i + a_i^1 &\le mB^2 + 1,
    \end{align*}
    and, for all edges having span~2, it holds that
    \begin{align*}
        y(v_{a_i^1}^{m(B^2+1)}) = -2mB(B-1) + (i-1) (B^2+1) + a_i^1 B &\ge i (B+1) - B - 2mB(B-1) \\
        \Leftrightarrow (i-1) (B^2+1) + a_i^1 B &\ge iB.
    \end{align*}
    The former is true because $(m-i-1) B + i \le mB$ and $a_i^1 \le B$.
    The latter is true because if $i = 1$, then $a_i^1 B \ge B$, and otherwise $(i-1) (B^2+1) \ge iB$.
    Similar arguments apply to $v_{a_i^2}^{m(B+1)}$ and $v_{a_i^3}^{m(B+1)}$:
    \begin{align*}
        y(v_{a_i^2}^{m(B^2+1)}) &= -2mB(B-1) + (i-1) (B^2+1) + (a_i^1 + a_i^2) B \le -mB (B-1) + iB + a_i^1; \\
        y(v_{a_i^3}^{m(B^2+1)}) &= -2mB(B-1) + (i-1) (B^2+1) + B^2 \le -mB (B-1) + iB + a_i^1 + a_i^2.
    \end{align*}
    This is still true because also $a_i^1 + a_i^2 \le B$ and $B \le B$ and, clearly,
    \begin{align*}
        y(v_{a_i^2}^{m(B^2+1)}) &= -2mB(B-1) + (i-1) (B^2+1) + (a_i^1 + a_i^2) B \ge -2mB (B-1) + iB + a_i^1; \\
        y(v_{a_i^3}^{m(B^2+1)}) &= -2mB(B-1) + (i-1) (B^2+1) + B^2 \ge -2mB (B-1) + iB + a_i^1 + a_i^2.
    \end{align*}
    
    Furthermore, at the sources $v_t^{m(B^2+1)}$, $v_{a_1^1}^{m(B^2+1)}$, $v_{a_1^2}^{m(B^2+1)}$, \dots, $v_{a_m^3}^{m(B^2+1)}$, $v_t^{m(B^2+1)}$
    and $w_1, \dots, w_{m+1}$, the right outgoing edge leads away from~$r$.
    Hence, all trees can extend to the right side (using span 1 for every edge) arbitrarily far as shown in \cref{fig:nphard-trees}.
    Since only outgoing edges follow, this results in an upward-planar drawing of~$T$ that has span~2.
    
    \subparagraph*{Span-2 drawing $\Rightarrow$ yes-instance of 3-partition.}
    Let an upward-planar drawing of~$T$ having span~2 be given.
    Consider the radial order of (outgoing) subtrees of~$r$.
    We first show that the two outermost subtrees are~$T_b$ and~$T_t$.
    Suppose that one of~$T_b$ and~$T_t$, say~$T_t$ is an inner subtree.
    As $v_t^1$ lies one or two levels above~$r$,
    the bottommost vertex of~$Y_t$ lies on the same level as~$r$ or below~$r$.
    This means there is a crossing with the edges of the neighboring subtrees of~$T_t$
    since they are all rooted at $r$; a contradiction.
    
    So, w.l.o.g., let $T_t$ be the leftmost and $T_b$ be the rightmost subtree of~$r$
    (otherwise mirror the whole drawing).
    Consider $T_t$.
    Since $v_t^{m(B+1)}$ is a sink of $T_t$,
    its neighbor $v_t^{m(B+1)+1}$ must be on a lower level.
    Note that $v_t^{m(B+1) + 1}$ has two more neighbors:
    one more on the central path
    (when following it, it leads downwards and then far up), 
    and a neighbor on $P^+$
    (when following $P^+$, it leads one edge up and then far down).
    Since $P^+$ cannot cross the rest of the central path,
    one of~$P^+$ and the central path of~$T_t$ passes $r$ on the left, while the other one
    passes~$r$ on the right~--
    in particular, this means that the sink of~$P^+$
    must be on the level above the sink~$v_t^{m(B+1)}$;
    see \cref{fig:nphard-trees}.
    Moreover, the (central) paths of all other subtrees of~$r$
    must also not cross $P^+$.
    Hence, they pass $r$ on the same side as the central path of~$T_t$.
    More precisely, the sink~$v_t^{m(B+1)}$ has as its left child~$v_t^{m(B+1)-1}$ and has as its right child $v_t^{m(B+1)+1}$, and, thus,
    $P^+$ passes $r$ on the left while
    all other (central) paths of subtrees of~$r$
    pass~$r$ on the right side.
    Suppose for a contradiction that $P^+$ passes $r$ on the right and all other
    (central) paths of subtrees of~$r$ pass~$r$ on the left side.
    Since $T_b$ is the rightmost subtree of~$r$,
    the sink~$v_b^{m(B+1)}$ is on a higher level than~$v_t^{m(B+1)}$.
    This would imply a crossing between~$T_b$ and~$P^+$.
    
    The paths $P_1, \dots, P_{m-1}$
    and the central paths of~$T_b$, $T_t$, and every $T_a$ for~$a \in A$
    have nested sinks and sources as in \cref{fig:nphard-trees}.
    Suppose for a contradiction that the middle part of some central path~$X$
    (between the inner sink and the inner source)
    passes one of the sinks of another central path, say~$v_b^{m(B+1)}$,
    on the left side instead of the right side
    (which means their sinks would not nest).
    When traversing $X$ starting at $r$, then even if
    all edges between~$r$ and the first sink have span~2 and
    all edges between this sink and the next source have span~1,
    there still would be a crossing between $X$
    and an edge of~$T_b$ between~$r$ and the sink $v_b^{m(B+1)}$
    due to the lengths of the parts going first upwards and then downwards.
    By the same argument, all sources nest
    because the lengths of the final upward parts
    are sufficiently much longer than the middle downward parts.
    
    Now note that, because of this nested structure,
    all sinks of the paths $P_1, \dots, P_{m-1}$
    and the central paths of~$T_b$,~$T_t$, and every~$T_a$ for~$a \in A$
    as well as all vertices of~$Z_a$ for~$a \in A$
    need to lie on distinct levels.
    The same holds for all sources of those paths
    as well as all vertices of~$Z'_a$ for~$a \in A$.
    Hence, the former need $(m-1) + 2 + mB = m (B + 1) + 1$ distinct levels and
    the latter need $(m-1) + 2 + mB^2 = m (B^2 + 1) + 1$ distinct levels.
    Consider the nested sinks and recall that
    $v_b^{m(B+1)}$ is the bottommost and $v_t^{m(B+1)}$ is the topmost of them.
    W.l.o.g., let $v_b^{m(B+1)}$ lie on level~0.
    Then, $v_t^{m(B+1)}$ lies on level $m (B + 1)$ or higher.
    Since both the $r$--$v_b^{m(B+1)}$ path
    and the $r$--$v_t^{m(B+1)}$ path have $m(B+1)$ edges,
    this is only possible if $r$ lies on level $-m(B+1)$,
    all edges on the $r$--$v_b^{m(B+1)}$ path have span~1 and
    all edges on the $r$--$v_t^{m(B+1)}$ path have span~2.
    Then, $v_t^{m(B+1)}$ lies precisely on level $m(B+1)$.
    Based on this we can assume, w.l.o.g.,
    that the source $v_b^{m(B^2+1)}$ lies on the lowest reachable level and
    that the source $v_t^{m(B^2+1)}$ lies on the highest reachable level.
    Both, the $v_b^{m(B+1)}$--$v_b^{m(B^2+1)}$ path and
    the $v_t^{m(B+1)}$--$v_t^{m(B^2+1)}$ path have $m(B^2+1)-m(B+1) = mB(B-1)$ edges.
    Therefore, given that all edges of the
    $v_b^{m(B+1)}$--$v_b^{m(B^2+1)}$ path have span~2,
    $v_b^{m(B^2+1)}$ lies on level~$-2m(B-1)$.
    Furthermore, given that all edges of the
    $v_t^{m(B+1)}$--$v_t^{m(B^2+1)}$ path have span~1,
    $v_t^{m(B^2+1)}$ lies on level $m(B+1) - mB(B-1) = -m (B^2 - 2B - 1)$.
    
    We conclude that at the ``high point'',
    the sink~$v_t^{m(B+1)}$ of~$T_t$ lies has high as possible and
    the sink~$v_b^{m(B+1)}$ of~$T_b$ lies has low as possible
    as well as at the ``low point'',
    the source~$v_t^{m(B^2+1)}$ of~$T_t$ lies has high as possible and
    the source~$v_b^{m(B^2+1)}$ of~$T_b$ lies has low as possible,
    and, in both cases, all levels in between are occupied by
    distinct vertices of $Z_a$ and the sinks of all other central paths
    as well as distinct vertices of $Z'_a$ and the source of all other central paths.
    
    We next show that, for every $i \in [m-1]$, all edges of the separating path~$P_i$ between the source $r$ and the sink~$u_i$
    as well as all edges between the source~$w-i$ and the sink~$u_i$ have span~2.
    First suppose for a contradiction that some edge on~$P_1$ between $r$ and~$u_1$ has span~1.
    Then, there are not $B$ levels between the sinks~$u_1$ and~$v_b^{m(B+1)}$
    but only $C < B$ levels.
    Let $A' \subseteq A$ be a multiset of at most three integers
    such that the $Z_{a'}$ for all $a' \in A'$
    occupy the $C$ levels between $u_1$ and~$v_b^{m(B+1)}$.
    Clearly, $\sum_{a \in A'} = C$.
    Hence, between the sources $w_1$ and~$v_b^{m(B^2+1)}$,
    there are precisely vertices from $Z'_{a'}$ with $a' \in A'$.
    Observe that these $Z'_{a'}$ need $CB$ distinct levels.
    Therefore, since the source~$v_b^{m(B^2+1)}$ lies on level~$-2mB(B-1)$,
    $w_1$ lies on level
    \begin{align*}
        y(w_1) &= -2mB(B-1) + CB + 1 \overset{(C \le B-1)}{\le} -2mB(B-1) + (B-1)B + 1 \\
        &= - (2m-1) B (B-1) + 1.
    \end{align*}
    On the other hand, every edge on~$P_1$ between
    $u_1$ and~$w_1$ has span at most 2.
    Hence,
    \begin{equation*}
        y(w_1) \ge C + 1 - 2 \cdot (B(B-1)(2m-1)/2)
        = - (2m-1) B (B-1) + 1 + C.
    \end{equation*}
    This is a contradiction since $C > 0$.
    Hence, all edges on~$P_1$ between $r$ and~$u_1$
    have span~2 and~$u_1$ lies on level $y(u_1) = B + 1$.
    The $B$ levels between $r$ and~$u_1$ are occupied by
    three $Z_a$ such that $a \in A^\star \subseteq A$
    and $\sum_{a \in A^\star} a = B$.
    Then, $w_1$ lies on level
    \begin{equation*}
        y(w_1) \ge B + 1 - 2 \cdot (B(B-1)(2m-1)/2)
        = - 2mB(B-1) + B^2 + 1.
    \end{equation*}
    This leaves at least $B^2$ levels strictly between the
    sources~$v_b^{m(B^2+1)}$ and~$w_1$.
    Since the $Z'_a$ with $a \in A^\star$ need precisely $B^2$ levels,
    we know that $w_1$ lies precisely on level $- 2mB(B-1) + B^2 + 1$.
    
    Now that we have determined the levels occupied by $P_1$,
    we apply the same argument inductively for $P_2, P_3, \dots$.
    Hence, the separating paths $P_1, \dots, P_{m-1}$
    form bags of height~$B$ at the ``high point''
    and bags of height~$B^2$ at the ``low point''.
    These bags divide the integer trees in groups of three,
    each of which corresponds to integers summing to~$B$.
    Therefore, given a drawing of span~2,
    we can always read off a solution
    of the original 3-partition instance.
\end{proof}

\section{Algorithms for Graphs with a Bounded Number of Sources}\label{sec:algorithms}

    \begin{figure}[tb]
    \centering
    \begin{subfigure}{0.33\textwidth}
	\centering
	\includegraphics[page=4]{figures/algorithms.pdf}
    
    \subcaption{\label{fig:flowNetwork}}
    \end{subfigure}
\hfil
    \begin{subfigure}{0.3\textwidth}
	\centering
	\includegraphics[page=6]{figures/algorithms.pdf}
    
	\subcaption{\label{fig:singleSource-inner}}
	\end{subfigure} 
\begin{subfigure}{0.3\textwidth}
	\centering
	\includegraphics[page=5]{figures/algorithms.pdf}
    
    \subcaption{\label{fig:singleSource-outer}}
	\end{subfigure}
        \caption{(a) The flow network used to decide whether a plane st-graph has span at most $k$. (b)~The augmentation in the proof of \cref{th:single-source-upward} for an internal face and (c) \mbox{for the outer face.}}
        \label{fig:algorithmst-label}
    \end{figure}

In this section, we provide algorithms to solve 
{\sc $k$-span leveled planarity} for several graph classes.
For plane st-graphs, we use a network-flow with upper and lower capacity bounds on the edges, inspired by
the compaction of orthogonal drawings with rectangular shapes minimizing the total edge length \cite{gdbook}. The edges of this network are dual to those of the graph and the flow traversing them corresponds to the span of the dual edge; see \cref{fig:flowNetwork}.

\begin{lemma}
    \label{th:st-plane}
    For an $n$-vertex plane st-graph $G$ and any function \mbox{$\sigma: E \rightarrow \mathbb{N}$}, we can decide in $O(n)$ time whether $G$ has an embedding-preserving upward-planar layered drawing such that the span of each edge $e \in E$ is at most $\sigma(e)$.
\end{lemma}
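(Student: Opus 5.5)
The plan is to reduce the question to a feasibility problem on the dual of $G$, via the classical correspondence between plane st-graphs and their duals. First I will add the edge $(s,t)$ in the outer face; its span is unconstrained. Let $G^*$ be the directed dual. It has a vertex for each internal face, plus two vertices $s^*$ (left outer face) and $t^*$ (right outer face). Each primal edge $e$ is crossed by exactly one dual edge $e^*$, directed from the face to the left of $e$ to the face to the right of $e$. Then $G^*$ is again a plane st-graph with source $s^*$ and sink $t^*$.

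The central claim is the following. Embedding-preserving upward-planar layered drawings of $G$ correspond to integer assignments $y:V\to\mathbb Z$ with $y(v)-y(u)\ge 1$ for every edge $(u,v)$. Such assignments correspond to integer flows $\phi$ in $G^*$ from $s^*$ to $t^*$, with $\phi(e^*)=y(\mathrm{head}(e))-y(\mathrm{tail}(e))=\spn(e)$.

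For the forward direction, flow conservation at a face $f$ holds because the boundary of $f$ consists of two directed paths from the bottom vertex to the top vertex. Their total spans are equal, since both equal the difference of the two endpoint levels. The edges of one path enter $f^*$ and the edges of the other path leave it. Conversely, given a flow with $\phi(e^*)\ge 1$, I define $y(v)$ as the total flow along any directed $s$–$v$ path. This is well defined because the difference of any two such paths encloses faces whose conservation constraints telescope. The sufficiency of the level assignment for an actual embedding-preserving drawing is standard for plane st-graphs: every face is bounded by two monotone chains, so one can draw, for instance, via a visibility representation, or by noting that a planar st-graph with a consistent level assignment admits a straight-line level drawing respecting the embedding.

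With this correspondence, the problem becomes the following: decide whether $G^*$ admits a flow with lower bound $1$ and upper bound $\sigma(e)$ on every $e^*$, where the dual of $(s,t)$ gets capacity $\infty$. General flows with lower bounds would cost superlinear time, so I will exploit planarity and acyclicity. A feasible circulation (closing $t^*\to s^*$ through $(s,t)^*$) exists if and only if no cut violates Hoffman's condition. Because $G^*$ is a plane st-graph, the relevant cuts are $s^*$–$t^*$ cuts corresponding to directed $s$–$t$ paths in $G$. Concretely, feasibility means that for every pair of vertices $u,v$, every directed $u$–$v$ path $P$ and every directed $u$–$v$ path $Q$ satisfy $|P|\le \sum_{e\in Q}\sigma(e)$. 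Equivalently, consider the longest-path layering $L(v)$ from $s$ with unit weights, and check whether the system $1\le y(v)-y(u)\le \sigma(u,v)$ is feasible. That is a difference-constraint system whose constraint graph has edges $u\to v$ of weight $\sigma$ and $v\to u$ of weight $-1$. Feasibility means the constraint graph has no negative cycle, and because of planarity the potentially negative cycles are exactly the face-wise combinations above.

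I expect the linear-time bound to be the main obstacle. My first attempt would be to compute $y$ greedily in topological order. Set $y(v)=\max_{(u,v)}(y(u)+1)$. Then, for each face, compare the left chain with the right chain, and if the short (right) chain cannot reach the top level, propagate the required lift downward. Making this propagation linear is delicate. The alternative I would pursue is a shortest-path computation in the planar dual with the face structure: the constraints decompose along faces, and since $G^*$ is acyclic, single-source shortest/longest paths in a DAG run in linear time. I would try to show that feasibility reduces to two DAG longest-path computations. One is the minimal levels $y_{\min}$ from lower bounds. The other is the maximal levels $y_{\max}(v)=\min$ over constraints propagated backward from $t$ with $y(t)$ fixed. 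The system is then feasible if and only if $y_{\min}\le y_{\max}$ everywhere, and each computation is a DAG relaxation taking $O(n)$ time.
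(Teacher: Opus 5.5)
\textbf{Verdict: the reduction is right, but the linear-time feasibility test is wrong.} Your reduction is the same as the paper's. It uses the dual network with $1\le X(e^*)\le\sigma(e)$ and a closing arc $t^*\to s^*$ (your dual of $(s,t)$). Conservation at a face follows from its two boundary paths, and levels are recovered by summing flow along directed paths from $s$. The gap is in the step you flag as the main obstacle, and it is a correctness problem, not only an efficiency one: your characterization of feasibility is false. Hoffman's condition must hold for every set of dual vertices. Such cuts correspond to cycles of $G$ that may alternate several times between rising and falling chains. They do not reduce to $s^*$--$t^*$ cuts or to pairs of directed $u$--$v$ paths.

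\textbf{Counterexample.} Take the plane st-graph with edges $(s,x),(x,p),(s,p),(x,q),(s,a),(a,b),(b,q),(p,t),(q,t)$. Its outer face is bounded by $s\to p\to t$ on the left and $s\to a\to b\to q\to t$ on the right, and its internal faces are $sxp$, $sxqba$ and $xptq$. Let $\sigma(s,p)=2$, $\sigma(x,q)=1$, and $\sigma=100$ on every other edge.
\begin{itemize}
\item Any two directed paths with the same endpoints satisfy your inequality. The tight case is $(s,p)$ against $s,x,p$, giving $2\le 2$.
\item Yet $y(x)+1\le y(p)\le y(s)+2$ forces $y(x)=y(s)+1$, hence $y(q)\le y(s)+2$, whereas the path $s,a,b,q$ forces $y(q)\ge y(s)+3$.
\item The violated cut is $S=\{sxp,\,sxqba\}$, which contains neither $s^*$ nor $t^*$. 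Four arcs with lower bound $1$ leave $S$, while the two arcs entering it, $(s,p)^*$ and $(x,q)^*$, have total capacity $3$.
\end{itemize}

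\textbf{Consequence for your two-pass test.} The same example defeats it. With $y(s)=0$ and $y(t)=4$, the minimal levels $0,1,2,1,2,3,4$ of $s,x,p,a,b,q,t$ lie below every upper bound obtained by propagating constraints along monotone paths forward from $s$ or backward from $t$. So $y_{\min}\le y_{\max}$ holds everywhere, although no solution exists. The negative cycle $s\to p\leftarrow x\to q\leftarrow b\leftarrow a\leftarrow s$ (weight $-1$) of your constraint graph goes up and down twice. Detecting such cycles needs repeated Bellman--Ford-type propagation, not one DAG relaxation per direction.

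\textbf{What the paper does instead.} It attempts no such characterization. It cites the reduction of circulation feasibility with bounds to a single maximum-flow computation, then linear-time max flow in planar networks whose source and sink share a face. Your worry about lower bounds is reasonable. The usual reduction attaches a super-source and a super-sink to all imbalanced nodes, which need not preserve planarity or the common-face property. But the remedy is not a two-path criterion. A safe fallback is negative-cycle detection in the planar constraint graph, which is near-linear rather than linear.
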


\begin{proof}
    We show that the problem of the statement reduces in linear time to testing the existence of a feasible circulation in a network with lower and upper capacity bounds. 

    The boundary of every face $f$ of a plane st-graph consists of two directed paths connecting the \emph{face source} $s_f$ of $f$ and the \emph{face sink} $t_f$ of $f$, i.e., the unique vertices on the boundary of $f$ with no incoming edges and no outgoing edges on that boundary, respectively.
    The \emph{left path} (resp. \emph{right path}) of $f$ is the directed path, from $s_f$ to $t_f$, encountered when traversing the boundary of $f$ clockwise (resp.\ counter-clockwise) starting from the face source.
    
    Let $s$ be the source and let $t$ be the sink of $G$. We construct a directed flow network $\cal N$ on the weak dual $G^*$ of $G$ as shown in \cref{fig:flowNetwork}. The vertices correspond to the internal faces of $G$ together with two special vertices $s^\star$ and $t^\star$ corresponding to the left and right boundary of the outer face, and each edge $e\in E$ induces a dual arc $e^\star$ directed from the vertex corresponding to the face of $G$ to the left of $e$ to the vertex corresponding to the face of $G$ to the right of $e$. We impose lower and upper bounds  on the flow $X(e^\star)$ for every dual arc, so that $1\le X(e^\star)\le \sigma(e)$. Finally, we complete $\cal N$ with an edge directed from $t^*$ to $s^*$, and assign it trivial upper and lower capacity bounds (i.e., $1\leq X((t^*,s^*))\leq |E| \cdot \max_{e \in E} \sigma(e)$).  
    
    We show that $G$ admits an embedding-preserving upward-planar layered drawing such that the span of each edge $e \in E$ is at most $\sigma(e)$ if and only if $\cal N$ admits a feasible circulation~$X$. 

    If such a drawing $\Gamma$ exists, then we set $X(e^*)$ to be the span of $e$, for each edge $e$ of $G$, and we set $X((t^*,s^*))$ to be the sum of the flows assigned to the arcs of $\cal N$ incoming into $t^*$ (outgoing from $s^*$). Clearly, such an assignment satisfies the capacity constraints and the flow conservation property at $s^*$ and $t^*$. Since $G$ is an st-graph, every internal face $f$ is bounded by exactly two directed paths from its face-source $s_f$ to its face-sink $t_f$. 
    Along either paths, the sum of the spans of their edges equals the difference between the $y$-coordinate of $t_f$ and the $y$-coordinate of $s_f$. Therefore, this sum coincides for the left and the right path of $f$. By construction, this equality yields the flow conservation at the vertex of $\cal N$ corresponding~to~$f$. 

    Conversely, given a feasible circulation $X$, we fix $y(s)=0$ and define $y(v)$, for every vertex $v \in  V\setminus \{s\}$, as the sum of $X(e^*)$ over all the edges $e$ belonging to any directed $s$-$v$-path. This value is well defined since any two such paths differ by substitutions along face boundaries, and feasibility implies that the two boundary paths of every internal face have equal total span.  Clearly, the span of each edge $e$ of $G$ coincides with $X(e^*)$, and thus it is bounded by $\sigma(e)$.
    Let $P_0$ be the left boundary of the outer face of $G$. Since $G$ is a plane st-graph, it is possible 
    to
    obtain $G$ starting from $P_0$, by repeatedly adding in the right outer face
    a directed path $P_i$, such that $P_i$ is the right path of an internal face of $G$, see, e.g., \cite{AngeliniLBF17,BattistaT88,DBLP:journals/tcs/FratiGW14,Mel}. 
    We compute a layered drawing $\Gamma$ of $G$ as follows. We first draw $P_0$ as a $y$-monotone curve with each vertex placed at its assigned $y$-coordinate. Then, 
    we draw each path $P_i$, with $i>0$, as a $y$-monotone curve that lies in the right outer face of the drawing of $\bigcup^{i-1}_{j=0} P_j$ with each vertex of $P_i$ placed at its assigned $y$-coordinate. By construction, $\Gamma$ is an embedding-preserving upward-planar layered drawing of $G$ with span at most $\sigma(e)$.
    
    Feasibility of a circulation with lower and upper capacity bounds can be reduced in linear time to the problem of finding a maximum $s$-$t$-flow~\cite{DBLP:books/daglib/0015106}, which can be found in $O(n)$ time since resulting network is planar and $s$ and $t$ are incident to the same face~\cite{hassin:81,henzinger_etal:97}.
\end{proof}

By setting $\sigma(e) = k$ for each edge $e$, \cref{th:st-plane} implies the following.

\begin{theorem}\label{cor:st-planE}
$k$-\textsc{span upward planarity} can be solved in linear time for plane st-graphs.
\end{theorem}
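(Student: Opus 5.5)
The plan is to apply \cref{th:st-plane} with the constant function $\sigma(e)=k$ for every edge $e\in E$. The input is a plane st-graph $G$, and the question is whether $G$ admits an embedding-preserving upward-planar layered drawing of span at most $k$. The drawing has span at most $k$ exactly when every edge has span at most $k$, that is, at most $\sigma(e)$. So the yes-instances of $k$-\textsc{span upward planarity} on $G$ are exactly the instances for which the decision procedure of \cref{th:st-plane} answers positively.

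We must also check that the embedding is respected. An st-planar embedding uniquely determines the upward-planar embedding, as recalled in the preliminaries. Hence a drawing that preserves the planar embedding is precisely one that preserves the prescribed (upward-)planar embedding.

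For the running time, building $\sigma$ takes $O(|E|)=O(n)$, since $G$ is planar. The flow network of \cref{th:st-plane} then has arcs with lower bound $1$ and upper bound $k$, plus the arc $(t^\star,s^\star)$ with bound $|E|\cdot k$. Its size is linear in $n$ and does not depend on $k$, so its planarity still gives linear time. The only point worth checking is that $k$ enters solely as a capacity value, so the bound holds for arbitrary $k$ under the usual word-RAM assumption. If $k\ge n-1$, the answer is trivially yes, because $n-1$ bounds the span of any drawing. There is no real obstacle: the theorem is a direct specialization of \cref{th:st-plane}.
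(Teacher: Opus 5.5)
Your proposal is correct and matches the paper, which obtains the theorem in one line by instantiating \cref{th:st-plane} with $\sigma(e)=k$ for every edge $e$. One small inaccuracy in your unneeded aside on $k\ge n-1$: $n-1$ does not bound the span of \emph{every} drawing, only of \emph{some} drawing, for example one without empty layers.
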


To extend this result to the setting in which the st-planar embedding is not prescribed, we need additional definitions. Let $G$ be an st-graph graph and let $u$ and $v$ be two vertices of $G$. Then, $\{u,v\}$ is a \emph{split pair} if $(u,v)$ is an edge of $G$ or 
$\{u,v\}$ is a \emph{separation pair} of $G$, i.e.,
$G-\{u,v\}$ consists of multiple connected components $C_1,C_2,\dots,C_k$, with $k\geq 2$ . 
A \emph{split component} of $G$ with respect to a split pair $\{u,v\}$ is either the edge $(u,v)$, if any, or 
the subgraph induced by $V(C_i) \cup \{u,v\}$, minus the edge $(u,v)$ if any, for $i = 1,\dots,k$.
A \emph{component} of $G$ with respect to a split pair $\{u,v\}$ is the union of any number of split components of $G$ with respect to $\{u,v\}$.

\begin{theorem}
\label{thm:st-planar}
$k$-\textsc{span upward planarity} can be solved in linear time for planar st-graphs.
\end{theorem}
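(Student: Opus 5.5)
The plan is to show that, for st-graphs, the existence of a drawing with span at most $k$ does not depend on which st-planar embedding is chosen. Once this is established, the theorem follows from \cref{cor:st-planE}: compute any st-planar embedding of $G$ and run the linear-time test on the resulting plane st-graph. An st-planar embedding can be computed in linear time by adding the edge $(s,t)$ (if it is not already present), computing a planar embedding with a linear-time planarity algorithm, choosing a face incident to $(s,t)$ as the outer face, and then removing the added edge. Since the input is a planar st-graph, the augmented graph is planar, so this succeeds.

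The core step is to reformulate the condition in the proof of \cref{th:st-plane} without reference to the embedding. I would prove the following claim. A plane st-graph $G$ has an embedding-preserving upward-planar layered drawing of span at most $k$ if and only if there is an integer function $y: V \to \mathbb{Z}$ with $1 \le y(v)-y(u) \le k$ for every edge $(u,v)$.

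The forward direction is trivial: take the $y$-coordinates of the drawing.

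For the converse, set $X(e^\star) = y(v)-y(u)$ for every edge $e=(u,v)$. The capacity bounds $1 \le X(e^\star) \le k$ hold by assumption. For every internal face $f$, the left path and the right path of $f$ both run from $s_f$ to $t_f$. Since the values $y$ form a potential, the total flow along each of the two paths equals $y(t_f)-y(s_f)$. This is exactly flow conservation at the dual vertex of $f$. Now set $X((t^\star,s^\star))$ to $y(t)-y(s)$. This value is at most $|E|\cdot k$, because any directed $s$-$t$-path has at most $|E|$ edges, each contributing at most $k$. With this choice, conservation also holds at $s^\star$ and $t^\star$. Hence $X$ is a feasible circulation in the network $\mathcal N$ built in the proof of \cref{th:st-plane}. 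The construction in that proof then yields an embedding-preserving drawing whose $y$-coordinates are exactly $y$ (up to translation), and so its span is at most $k$.

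Since the difference-constraint condition on $y$ mentions only the edge set of $G$, it is the same for every st-planar embedding. Combining this with the linear-time embedding step, a single call to the algorithm of \cref{cor:st-planE} decides $k$-\textsc{span upward planarity} for $G$.

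It remains to check that it suffices to consider st-planar embeddings, i.e., that no other upward-planar embedding of $G$ could admit a drawing with span at most $k$ when no st-planar one does. This follows from the same argument. Any upward-planar layered drawing of $G$, under any embedding, provides a function $y$ satisfying the difference constraints, and by the claim every st-planar embedding then admits a drawing with the same $y$-coordinates and hence the same span.

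The only delicate point I expect is verifying that the circulation-to-drawing direction of \cref{th:st-plane} really needs nothing beyond the potential $y$ and the face structure. Because that proof defines the $y$-values as path sums of the flow and draws the paths $P_i$ incrementally, I believe this goes through directly.
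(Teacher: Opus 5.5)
Your proof is correct, but it takes a different route from the paper.

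The paper starts from a drawing $\Gamma$ of span at most $k$ and transforms its (necessarily st-planar) embedding into any other st-planar embedding by a finite sequence of flips of components at split pairs. It relies on external results for two facts: that such a flip sequence exists, and that each flipped component is itself a planar st-graph. Each flip is realized geometrically by mirroring the component inside a region $R_i$ bounded by two $y$-monotone curves, with all $y$-coordinates kept.

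You instead observe that the realization step in the proof of \cref{th:st-plane} works for an arbitrary integer potential. For a plane st-graph, every $y$ with $y(u)<y(v)$ on all edges is realized by an embedding-preserving upward-planar layered drawing. Hence having span at most $k$ is equivalent to feasibility of the difference constraints $1\le y(v)-y(u)\le k$, and these constraints refer only to the edge set.

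Your route has several advantages. It is shorter and self-contained, with no flip machinery and no geometric redrawing inside separating regions. It also yields a stronger structural fact: the span of a planar st-graph is a purely combinatorial quantity, independent of the embedding. The embedding is needed only to make the test run in linear time through the planar flow network. You also make explicit how to compute an st-planar embedding in linear time (embed $G+(s,t)$ and take a face incident to $(s,t)$ as the outer face), which the paper leaves implicit.

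The paper's argument, by contrast, is local: it changes the drawing only inside the flipped component. That is a more geometric style of argument, but it depends on heavier cited results and on the somewhat informal existence of the regions $R_i$.
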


\begin{proof} 
We show that a planar st-graph~$G$ has span at most~$k$ if and only if it admits an upward-planar layered drawing with span at most~$k$ for {\em any} of its st-planar embeddings. 
In other words, we show that prescribing the st-planar embedding of~$G$ does not affect the span of~$G$.
Together with \cref{th:st-plane}, this implies the  statement.

Obviously, if~$G$ admits an upward-planar layered drawing with span at most~$k$ for any of its st-planar embeddings, it has span at most~$k$. For the other direction, let~$\Gamma$ be an upward-planar layered drawing of~$G$ with span at most~$k$ and let~$\cal E$ be the planar embedding of~$G$. Clearly,~$\cal E$ is st-planar, since $s$ and $t$ are bottommost and topmost vertices in $\Gamma$, and so they lie in the outer face. Let~$G_{\cal E}$
be a plane st-graph obtained by prescribing the embedding of~$G$ to be~$\cal E$. Also, let~${\cal E'}$ be an st-planar embedding of~$G$ different from~$\cal E$ and let ~$G_{\cal E'}$ be the corresponding plane st-graph.
In the following, we show that is possible to iteratively modify~$\Gamma$ to transform it into an upward-planar layered drawing of~$G_{\cal E'}$ with span at most~$k$; refer to \cref{fig:st-flips}.
To this aim, we exploit the fact that~$\cal E$ can be transformed into~$\cal E'$ by only applying a finite sequence of flips of components of~$G$~\cite{AngeliniCBP13}. Each of these components~$D_i$ is defined with respect to a split pair~$\{u,v\}$ and has the property that the edges of~$D_i$ incident to~$u$ (resp.\ to~$v$) appear consecutively around~$u$ (resp.\ around~$v$). In other words,~$D_i$ is either a split component with respect to~$\{u,v\}$ or the union of any number of split components that appear consecutively in a parallel composition with respect to~$\{u,v\}$. By \cite{BattistaT96},~$D_i$ is a planar st-graph, with~$s,t \in \{u,v\}$ and~$s\neq t$.
Let~$\phi_1,\phi_2,\dots, \phi_r$ be the sequence of flips of components~$D_1,D_2,\dots,D_r$ that turn~$\mathcal E$ into~$\mathcal E'$. Also, let~$\Gamma_0 = \Gamma$ and suppose that~$\Gamma_{i-1}$ is an upward-planar layered drawing of~$G$ with span at most~$k$.
Let~$\Gamma_{i}$ be obtained from~$\Gamma_{i-1}$ as follows. Since~$D_i$ is st-planar, the drawing of~$D_i$ in~$\Gamma_{i-1}$ lies in the interior of a compact region~$R_i$ delimited by two y-monotone curves connecting~$u$ and~$v$ that (i) does not cross any edge and (ii) contains all and only the vertices of~$D_i$. Note that, the drawing $\Gamma'_i$ of~$D_i$ in~$R_i$ is a level-planar drawing of~$D_i$. Therefore, by horizontally mirroring $\Gamma'_i$ we obtain a level-planar drawing $\Gamma^*_i$ of $D_i$ with the same span as $\Gamma'_i$. To obtain $\Gamma_i$ from $\Gamma_{i-1}$, we replace $\Gamma'_i$ in the interior of $R_i$ with a a level-planar drawing of $D_i$ constructed by assigning to each vertex its y-coordinate as in $\Gamma'_i$ (and in $\Gamma^*_i$) and its x-coordinate so that the left-to-right order of the vertices along each y-coordinate is the same as in $\Gamma^*_i$, and by drawing the edges of $D_i$ as y-monotone curves such that the order in which such edges intersect any horizontal line is the same as in $\Gamma^*_i$. Clearly, the drawing of $D_i$ in $\Gamma_i$ is upward-planar and the span of the edges of $D_i$ is the same in $\Gamma_i$ as in $\Gamma'_i$. Therefore, we have that $\Gamma_i$ satisfies the required properties. Eventually, the final drawing $\Gamma_r$ is an upward-planar layered drawing of~$G$ with span at most~$k$ that preserves~$\cal E'$. 
\end{proof}

\begin{figure}
    \centering
    \begin{subfigure}[b]{.45\textwidth}
    \centering
    \includegraphics[page=11]{figures/algorithms.pdf}
    \subcaption{}\label{fig:st-before}
    \end{subfigure}
    \hfil
    \begin{subfigure}[b]{.45\textwidth}
    \centering
    \includegraphics[page=12]{figures/algorithms.pdf}
    \subcaption{}\label{fig:st-after}
    \end{subfigure}
    \caption{
    Obtaining an upward-planar layered drawing $\Gamma_i$ (b) from an upward-planar layered drawing $\Gamma_{i-1}$ (a) by redrawing the uv-component defined by the split pair $\{u,v\}$ in the interior of a region $R$ containing the drawing of such a component in $\Gamma_i$.
    }
    \label{fig:st-flips}
\end{figure}

Next, we study upward-planar graphs that have a fixed upward-planar embedding and that still have a single source, but possibly multiple sinks.

\begin{lemma}
    \label{le:single-source-upward}
    For an $n$-vertex single-source upward-plane graph $G$ and any function \mbox{$\sigma: E \rightarrow \mathbb{N}$}, we can decide in $O(n)$ time whether $G$ has an embedding-preserving upward-planar layered drawing such that the span of each edge $e \in E$ is at most $\sigma(e)$.
\end{lemma}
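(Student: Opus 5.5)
We reduce to \cref{th:st-plane} by augmenting $G$, in linear time and in an embedding-preserving way, to a plane st-graph $G'$. The augmentation must be such that upward-planar layered drawings of $G$ with the prescribed spans correspond exactly to drawings of $G'$ satisfying suitable span bounds $\sigma'$. The figure references for the internal and outer face (\cref{fig:singleSource-inner,fig:singleSource-outer}) suggest the standard single-source augmentation: in a single-source upward-plane graph, every internal face $f$ has exactly one source-switch. Since the number of large angles of an internal face is the number of source-switches minus one, $f$ has no large angle at a source, so all its large angles are at sink-switches. The boundary of $f$ is therefore a sequence of directed paths from its unique face source $s_f$ to some sink-switches. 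Exactly one of these, the face sink $t_f$, has a small angle in $f$, namely the topmost one. The other sink-switches $t'$ have a large angle in $f$ and must lie strictly below $t_f$.

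The plan is as follows. In each internal face $f$, for every sink-switch $t'$ with a large angle in $f$, insert an edge $(t',t_f)$ routed inside $f$; alternatively, insert an edge from $t'$ to a suitable vertex on the boundary ahead of it, as in the usual saturation. For the outer face, add a new super-sink $t^\ast$ and connect every sink-switch with a large angle in the outer face to $t^\ast$, also adding an edge from the topmost outer sink. The result $G'$ is a plane st-graph whose embedding extends the one of $G$. On original edges, set $\sigma'(e)=\sigma(e)$. On augmenting edges, set $\sigma'(e)=\infty$; in practice we cap it at $n\cdot\max\sigma$, which is the largest possible height of any drawing and hence never binding. The lower bound $1$ on every edge enforces exactly the strict inequality $y(t')<y(t_f)$ that the large angles require, so the extra edges impose no other constraint.

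For correctness, we argue both directions. Given a feasible drawing of $G'$, deleting the added edges yields an embedding-preserving upward-planar layered drawing of $G$ with spans bounded by $\sigma$. Conversely, take a drawing $\Gamma$ of $G$ that respects the upward-planar embedding. In each internal face the vertex $t_f$ is strictly higher than every other sink-switch with a large angle, and each such $t'$ can see $t_f$ through a $y$-monotone curve inside $f$. Likewise, in the outer face, a point $t^\ast$ placed above everything can be reached from every outer sink-switch. Hence the added edges can be drawn without crossings, and giving them unconstrained span changes nothing. Applying \cref{th:st-plane} to $(G',\sigma')$ then decides the problem in $O(n)$ time, because the augmentation adds $O(n)$ edges and keeps $G'$ planar.

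The main obstacle is the visibility step in the converse direction. We must show that in every internal face all large-angle sinks can be connected simultaneously and without crossings to $t_f$, or to the appropriate boundary vertices, by $y$-monotone curves. This requires care when sinks nest, i.e.\ when one pocket of the face lies inside another. To avoid routing edges past each other, I would connect each such sink $t'$ to the first vertex above it along the face boundary in the appropriate direction, in the spirit of the Bertolazzi et al.\ saturation for single-source digraphs. This guarantees that each inserted edge closes off a pocket that is itself $y$-monotone. Applying the insertions iteratively, one face at a time, then gives the st-graph while preserving drawability in both directions.
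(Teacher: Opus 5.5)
Your augmentation is the one the paper uses: an edge from every large-angle sink-switch $t'$ of an internal face $f$ to the small-angle sink-switch $t_f$, a super-sink joined to the outer sinks, non-binding bounds on the new edges, and then \cref{th:st-plane}. Your cap $n\cdot\max\sigma$ works because $G$ is connected; the paper instead compresses empty layers to get height at most $n$. Two steps, however, do not hold up as written.

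First, your claim that every internal face of a single-source upward-plane graph has exactly one source-switch is false. It also contradicts what you derive from it, since one source-switch would mean zero large angles and a single sink-switch. For example, take edges $s\to a$, $s\to b$, $a\to c$, $b\to c$, $a\to t$, $b\to t$, drawn with $a,b$ on one level, $c$ between them one level higher, and $t$ above $c$. The face bounded by $a,c,b,t$ has source-switches $a$ and $b$, and its large angle is at $c$. What you actually need, and what the paper takes from Bertolazzi et al., is the following. The source $s$ is the lowest vertex of every drawing, so its large angle lies in the outer face. All other source-switches are not sources of $G$, so their angles are small. Hence the $n_f-1$ large angles of an internal face with $n_f$ source-switches all sit at sink-switches, and exactly one sink-switch, the topmost one $t_f$, has a small angle.

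Second, you leave the visibility step open and propose to replace $t_f$ by ``the first vertex above $t'$ along the face boundary''. Taken literally, this depends on the $y$-coordinates of the unknown drawing, so $G'$ could not be built before solving the problem, and the reduction breaks. If you instead mean the combinatorial saturation of Bertolazzi et al., which uses only the angle labels, then in a single-source internal face the only small-angle sink-switch is $t_f$, so every saturating edge ends at $t_f$ anyway.

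No replacement is needed. In a drawing of $G$, every point of the closure of $f$ other than $t_f$ admits a strictly upward continuation there. Interior points go straight up, and points on an edge move up along it. A vertex whose face angle contains no upward direction has both boundary edges pointing down with the small angle between them, so it is $t_f$. Hence from each large-angle $t'$ there is a $y$-monotone curve to $t_f$, which can be pushed into the interior of $f$.

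Insert the edges $(t',t_f)$ one at a time. Each insertion splits the current face into two faces. In each of them, $t'$ is no longer a switch and $t_f$ is still the topmost point and the only small-angle sink-switch, so the next edge can again be drawn. Nested pockets therefore cause no trouble. The same argument applies in the outer face, which has no small-angle sink-switch at all, since all its sink-switches are sinks of $G$ with large angles there. It connects every outer sink to a super-sink placed one level above the drawing, which also makes your extra edge from the ``topmost outer sink'' redundant.
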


\begin{proof}
Let $G$ be a single-source upward-plane graph. 
The authors of~\cite{BertolazziBMT98} show that, in any embedding-preserving upward-planar drawing $\Gamma$ of $G$, the following holds: 
\begin{itemize}
\item For each internal face $f$, at most one sink-switch (the topmost vertex of $f$ in $\Gamma$) is not a sink of $G$ and all but one sink-switches form a large angle in $f$.
\item For the outer face $h$ of $\Gamma$, all the sink-switches are sinks of $G$ and form a large angle in $h$.
\end{itemize}

\noindent
Furthermore, as a consequence of the above, they show that $G$ can be augmented to a plane st-graph $G'$ as follows:
\begin{itemize}
\item For each internal face $f$, let $t_f$ be the unique sink-switch of $f$ that is not assigned a large angle in $f$. The graph $G'$ contains an edge directed from each sink-switch that is assigned a small angle in $f$ to $t_f$.
\item The graph $G'$ contains a sink $t$ and an edge directed from each sink of $h$ to $t$. 
\end{itemize}

We show that $G$ has an embedding-preserving upward-planar layered drawing such that the span of each edge $e \in E$ is at most $\sigma(e)$ if and only if $G'$ has an embedding-preserving upward-planar layered drawing such that the span of each edge $e$ is at most $\sigma(e)$ if $e$ is also in $G$ and is at most $n-1$ otherwise.
One direction is obvious, since $G \subseteq G'$. 
For the other direction, suppose that $G$ has such a drawing $\Gamma$. We can assume w.l.o.g.\ that there is no $y$-coordinate between the lowest and the largest vertex that does not contain a vertex; otherwise, if there is no vertex on $y$-coordinate $y'$, then one can easily compress the part of the drawing between $y$-coordinates $y'-1$ and $y'+1$ to height 1, effectively reducing the height of the drawing by~1. Thus, the height of $\Gamma$ is at most $n$ (since introducing $t$ in the outer face increased the graph size by~1) and there is no edge with span greater than $n$.

We show that $\Gamma$ can be augmented to an embedding-preserving upward-planar layered drawing $\Gamma'$ of $G'$ respecting the given bound on the span of each edge. To this aim, observe first that, as shown in~\cite{BertolazziBMT98}, for each internal face $f$ of $G$, in $\Gamma$, we can connect each sink-switch of $f$ that forms a large angle in $f$ with the unique sink-switch of $f$ that forms a small angle in $f$ with a crossing-free $y$-monotone curve; refer to \cref{fig:singleSource-inner}. 
Then, since all the sinks of $G$ in the outer face of $\Gamma$ form a large angle in this face, we can place the vertex $t$ (in the outer face of $\Gamma$) one unit above the top-most sink of $G$ and connect
all the sinks of $G$ in the outer face of $\Gamma$ 
with a crossing-free $y$-monotone curve to $t$; refer to \cref{fig:singleSource-outer}. This yields an embedding-preserving upward-planar layered drawing $\Gamma'$ of $G'$. Clearly, in $\Gamma'$, all the edges of $G'$ that belong to $G$ have span at most $\sigma(e)$, by hypothesis, and all the edges of $G'$ that do not belong to $G'$ have span strictly smaller than the height of~$\Gamma$, which is at most $n$. 

Since the construction of $G'$ takes $O(n)$ time, the lemma then follows by \cref{th:st-plane}.
\end{proof}

By setting $\sigma(e) = k$ for each edge $e$, \cref{le:single-source-upward} implies the following.

\begin{theorem}\label{th:single-source-upward}
$k$-\textsc{span upward planarity} can be solved in linear time for single-source upward-plane graphs.
\end{theorem}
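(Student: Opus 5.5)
The plan is to obtain \cref{th:single-source-upward} as a direct specialization of \cref{le:single-source-upward}. Given an $n$-vertex single-source upward-plane graph $G$ and an integer $k$, I will define the constant function $\sigma(e)=k$ for every edge $e\in E$. An embedding-preserving upward-planar layered drawing of $G$ in which every edge has span at most $\sigma(e)=k$ is exactly a drawing of span at most $k$ that respects the prescribed upward-planar embedding. This is precisely what $k$-\textsc{span upward planarity} asks for when the upward-planar embedding is part of the input. So the answer to the decision problem coincides with the answer returned by the procedure of \cref{le:single-source-upward}, which runs in $O(n)$ time.

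The first thing I will check is that nothing in \cref{le:single-source-upward} needs $\sigma$ to be bounded relative to $n$. If $k\ge n-1$, the instance is trivially a yes-instance whenever $G$ admits an embedding-preserving upward-planar drawing. Such a drawing exists by assumption, and after compressing empty layers as in the proof of the lemma it has height at most $n$. In that case one can simply answer yes, or equivalently cap $\sigma(e)$ at $\min(k,n-1)$. This keeps the capacities in the flow network of \cref{th:st-plane} polynomially bounded, so the linear-time max-flow step still applies.

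The second check is that building $\sigma$ takes only $O(|E|)=O(n)$ time, which uses planarity to get $|E|\le 3n-6$. The total running time is therefore linear.

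I do not expect a real obstacle here. The only point needing care is the degenerate regime of large $k$ and the capacity bound on the arc $(t^*,s^*)$. Both are handled by the capping argument above.
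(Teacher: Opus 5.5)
Your proposal is correct and is exactly the paper's argument: the paper obtains the theorem by instantiating \cref{le:single-source-upward} with the constant function $\sigma(e)=k$ for every edge $e$. Capping $\sigma$ at $\min(k,n-1)$ is a harmless extra safeguard that the paper does not spell out.
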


Note that \cref{th:single-source-upward} can be extended to an XP algorithm for upward-plane graphs parameterized by the number of sources. We refrain from doing this here, as we will give an XP algorithm for a more general setting later in \cref{cor:xp}.

We now deal with graphs for which the planar embedding is fixed (but the upward-planar embedding is not). The next lemma is implicit in~\cite{BertolazziBMT98}, and is instrumental for the proof of the subsequent results.

\begin{lemma}[\cite{BertolazziBMT98}]
\label{le:single-source-unique-up-embedding}
For a single-source plane graph $G=(V,E)$, the upward-planar embedding---if it exists---is unique up to reflection and can be computed in $O(n)$ time.
\end{lemma}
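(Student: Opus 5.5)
Given the plane single-source graph $G$, an upward-planar embedding consists of the fixed planar embedding (which has to be bimodal) together with an assignment of large angles at the sources and sinks. The only source is $s$; we argue that the entire assignment is forced, up to reflection, by the face-counting condition.

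We work from the characterization of \cite{BertolazziBMT98} that was already used in the proof of \cref{le:single-source-upward}. In every embedding-preserving upward-planar drawing of a single-source graph:
\begin{itemize}
\item in each internal face $f$, all sink-switches except exactly one (the topmost) receive a large angle in $f$;
\item in the outer face $h$, every sink-switch is a sink of $G$ receiving a large angle in $h$;
\item the source $s$ must lie on the outer face and gets its large angle there.
\end{itemize}
So an upward-planar embedding is determined by two choices: the outer face, and for each internal face $f$ the one sink-switch $t_f$ that gets no large angle.

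First, the outer face is fixed, because the planar embedding includes it. The source $s$ must be a source-switch of $h$, and since $s$ is the only source of $G$, its large angle is assigned to $h$.

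Second, we show that $t_f$ is forced in every internal face $f$. A sink-switch of $f$ that is not a sink of $G$ has an outgoing edge leaving the boundary of $f$ into another face, so it cannot receive a large angle in $f$. If such a vertex exists in $f$, it must therefore be $t_f$, and the characterization guarantees there is at most one. The remaining case is when every sink-switch of $f$ is a sink of $G$. Then we use counting: each sink of $G$ has exactly one large angle, in exactly one of its incident faces. The total number of large angles that the faces need is fixed by their numbers of switches. Combining this with the fact that there is a single source, we argue that the assignment is determined up to the global mirror symmetry. Concretely, we compute it with the linear-time bipartite-matching/flow test of \cite{BertolazziBMT98}, which assigns sinks to faces, and show that in the single-source case this matching is unique.

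This uniqueness argument is the main obstacle. We would prove it via the face--sink incidence structure: suppose two distinct valid assignments exist. Their symmetric difference is an alternating cycle of sinks and faces. Such a cycle would enclose a region whose boundary needs a local minimum. Since $s$ lies on the outer face, that minimum would have to be a second source, contradicting that $G$ has a single source.

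Finally, for the running time, checking bimodality and computing the assignment both take $O(n)$ time by a traversal of the faces.
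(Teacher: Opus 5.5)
The paper gives no argument of its own for this lemma: it only attributes it to \cite{BertolazziBMT98}. There, uniqueness and the linear running time both come from the fact that the face--sink incidence structure of a single-source plane digraph is a forest. Your reduction to choosing, in each internal face, the one sink-switch without a large angle is fine, and the symmetric-difference idea is a viable self-contained route. However, the step that carries the lemma is not yet an argument, because two facts are missing.

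First, ``since $s$ lies on the outer face'' puts $s$ outside the closed curve only because every valid assignment makes \emph{all} sink-switches of $h$ large (your second bullet). Hence the two assignments agree at $h$, the symmetric difference never uses $h$, and a simple cycle in it can be drawn as a Jordan curve that passes once through each of its internal faces and otherwise meets $G$ only at its sinks. If the cycle could pass through $h$, $s$ might well be enclosed.

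Second, you must show that the enclosed side contains a vertex and that this forces a source. At each sink $t_i$ on the cycle the two assignments use different angles, so the curve separates the edges at $t_i$ and some edge has its tail strictly inside. The curve meets $G$ only at sinks, which are never tails, so walking backwards along incoming edges from that tail stays inside and, by acyclicity, ends at a source different from $s$. A ``local minimum of the boundary'' is not by itself a source, since it may have in-neighbours inside the region.

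The running-time part also has a gap. The linear bound in \cite{BertolazziBMT98} does not come from a bipartite matching or flow computation, which would not be linear in general. Uniqueness alone gives no $O(n)$ algorithm, and ``a traversal of the faces'' is not one. The fix is to notice that your separation argument never used that the cycle came from two assignments. Any cycle in the sink--face incidence graph (one edge per sink-switch angle at a sink) yields a curve with a source on each side, so this graph is a forest.

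Counting then shows that each tree contains exactly one face that must make all its incident sinks large: the outer face, or the internal face whose small sink-switch is not a sink of $G$. Every other face makes all but one of its sinks large. Rooting each tree at that face and assigning every sink to its parent face produces the assignment, or detects that none exists, in $O(n)$ time.

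Finally, only the \emph{face} of the large angle of $s$ is forced. If $s$ occurs several times on $h$, any of those angles can be the large one. For example, an out-star with three leaves has three upward-planar embeddings with the same rotation system, and these are not reflections of one another. So uniqueness must be read at the level of faces; your claim that the assignment is determined ``up to the global mirror symmetry'' does not cover this case.
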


\begin{lemma}\label{le:single-source-plane}
    For an $n$-vertex single-source plane graph $G=(V,E)$ and any function $\sigma: E \rightarrow \mathbb{N}$, we can decide in $O(n)$ time whether $G$ has an embedding-preserving upward-planar layered drawing such that the span of each edge $e \in E$ is at most $\sigma(e)$.
\end{lemma}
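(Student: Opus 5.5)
The plan is to reduce the statement to \cref{le:single-source-upward} by way of \cref{le:single-source-unique-up-embedding}. Let $G$ be a single-source plane graph with prescribed planar embedding $\mathcal E$. Every embedding-preserving upward-planar layered drawing of $G$ preserves $\mathcal E$, so it induces some upward-planar embedding of $G$ that refines $\mathcal E$: the same rotation system and outer face, plus an assignment of large angles. By \cref{le:single-source-unique-up-embedding}, in $O(n)$ time we either conclude that no such upward-planar embedding exists, or we compute it, and it is unique up to reflection. If none exists, we answer negatively, since no upward-planar drawing preserving $\mathcal E$ exists at all.

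Otherwise, let $\mathcal U$ be the computed upward-planar embedding. The delicate point is the phrase ``unique up to reflection''. A reflected drawing reverses all rotations and hence does not preserve $\mathcal E$ as an oriented embedding. So among the upward-planar embeddings compatible with $\mathcal E$ (with the fixed orientation), only one is relevant; the reflection corresponds to the mirror-image planar embedding. Mirroring a layered drawing horizontally keeps every y-coordinate, hence every span, unchanged. Therefore, even if one counts mirror images as the same planar embedding, $G$ admits a drawing with spans bounded by $\sigma$ in $\mathcal U$ if and only if it admits one in the reflection of $\mathcal U$. Either way, the question reduces to a single upward-plane graph $(G,\mathcal U)$.

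Finally, we apply \cref{le:single-source-upward} to the single-source upward-plane graph $(G,\mathcal U)$ with the same function $\sigma$. This decides in $O(n)$ time whether an embedding-preserving upward-planar layered drawing with $\spn(e)\le\sigma(e)$ for all $e$ exists. Any such drawing preserves $\mathcal E$. Conversely, any drawing preserving $\mathcal E$ has upward embedding $\mathcal U$ (or its mirror, handled above). So the answers coincide, and the total running time is $O(n)$.

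The main obstacle is justifying that every $\mathcal E$-preserving drawing induces exactly $\mathcal U$, i.e., that the large-angle assignment is forced. I would rely on \cref{le:single-source-unique-up-embedding} as stated rather than reprove it. If needed, one can recall the argument from~\cite{BertolazziBMT98}: with a single source, the face-source-switch counts determine, via a bipartite matching or flow between switches and faces, a unique feasible angle assignment for the given planar embedding.
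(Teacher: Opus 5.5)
Your proposal is correct and matches the paper's proof. The paper rejects if no upward-planar embedding compatible with the given planar embedding exists, otherwise fixes that embedding via \cref{le:single-source-unique-up-embedding} and invokes \cref{le:single-source-upward} with the same $\sigma$. Your added remark that horizontal mirroring keeps every y-coordinate (and hence every span) handles the ``up to reflection'' clause; the paper leaves this implicit, so it is a harmless clarification.
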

\begin{proof}
If $G$ has no embedding-preserving upward-planar drawing, we reject the instance. Otherwise, 
by \cref{le:single-source-unique-up-embedding}, the given planar embedding of $G$ determines its upward-planar embedding. Thus, by \cref{le:single-source-upward}, we can decide in $O(n)$ time whether $G$ has an embedding-preserving upward-planar layered drawing where each edge $e$ has span~at~most~$\sigma(e)$.\end{proof}

By setting $\sigma(e) = k$ for each edge $e$, \cref{le:single-source-plane} implies the following.

\begin{theorem}\label{th:single-source-plane}
$k$-\textsc{span upward planarity} can be solved in linear time for single-source plane graphs.
\end{theorem}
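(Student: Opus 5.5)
The statement follows directly from \cref{le:single-source-plane} by instantiating the span function uniformly. Given an $n$-vertex single-source plane graph $G=(V,E)$ and an integer $k$, I define $\sigma:E\rightarrow\mathbb{N}$ by $\sigma(e)=k$ for every $e\in E$. An embedding-preserving upward-planar layered drawing of $G$ in which every edge has span at most $\sigma(e)=k$ is exactly an embedding-preserving upward-planar layered drawing of span at most $k$. By the convention fixed in the introduction, the latter is precisely a witness for a yes-instance of $k$-\textsc{span upward planarity} when the planar embedding is prescribed. Hence the two decision problems coincide.

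The running time also comes from \cref{le:single-source-plane}, with no extra cost for large $k$. If $k\ge n-1$, the answer is simply whether $G$ admits an embedding-preserving upward-planar drawing at all. Any such drawing can be compacted so that no empty layer lies between its lowest and highest vertex, as argued in the proof of \cref{le:single-source-upward}, and then every span is at most $n-1$. Upward-planarity testing of a single-source plane graph, including computing its unique upward-planar embedding, takes $O(n)$ time by \cref{le:single-source-unique-up-embedding}. Replacing $k$ by $\min(k,n-1)$ keeps all capacities polynomially bounded, so the flow computation of \cref{th:st-plane} runs in linear time.

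I do not expect any real obstacle. The substantive work is already done in \cref{le:single-source-plane}. It combines \cref{le:single-source-unique-up-embedding}, the augmentation of \cref{le:single-source-upward} to a plane st-graph, and the circulation formulation of \cref{th:st-plane}. The only point to verify is that reflecting the upward-planar embedding, the one ambiguity left by \cref{le:single-source-unique-up-embedding}, does not change the answer. It does not: mirroring a drawing horizontally preserves y-coordinates and therefore all spans.
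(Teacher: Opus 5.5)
Your proof is correct and matches the paper's argument: set $\sigma(e)=k$ for every edge and apply \cref{le:single-source-plane}. Your extra remarks, capping $k$ at $n-1$ and the reflection ambiguity, are harmless but not needed for the result.
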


Next, we show that, for graphs with a fixed planar embedding, the problem is in XP when parameterized by the number of sources.

\begin{theorem}
\label{cor:xp}
$k$-\textsc{span upward planarity} can be solved in %
$O(2^{z}n^{2z+1})$
time for $n$-vertex plane graphs with $z$ sources.
\end{theorem}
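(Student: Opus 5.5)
We reduce to the single-source case of \cref{le:single-source-plane}, guessing how the $z$ sources are absorbed into one. Let $s_1,\dots,s_z$ be the sources of the plane graph $G$. In any embedding-preserving upward-planar layered drawing $\Gamma$ of $G$, every source other than the one lying lowest in its face is assigned a large angle in some face. Following the augmentation idea of \cite{BertolazziBMT98} used in \cref{le:single-source-upward} (now mirrored for source-switches), each such source $s_i$ can be connected by a crossing-free $y$-monotone curve, drawn inside the face $f$ where it has its large angle, to a vertex lying strictly below it on the boundary of $f$. After doing this for all sources except the lowest one, we obtain a single-source plane graph $G'$ that contains $G$.

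The plan is therefore to enumerate, for every source $s_i$ except a guessed global lowest source, two things: a face $f_i$ incident to $s_i$ in which $s_i$ has its large angle, and a target vertex $v_i$ on the boundary of $f_i$. We then add the edge $(v_i,s_i)$ inside $f_i$. To keep the enumeration within the claimed bound, I would charge the choices as follows. The target $v_i$ ranges over $O(n)$ vertices. The face is determined up to $O(n)$ choices as well; more precisely, choosing the angle at $s_i$ together with $v_i$ fixes where the new edge is routed. There are also $O(1)$ binary decisions per source, for instance whether the new edge attaches to the left or to the right of the boundary path, or whether $s_i$ remains a source at all. Altogether this gives at most $2^z n^{2z}$ candidate augmentations. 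For each candidate, we check that the augmented graph is plane and single-source, set $\sigma(e)=k$ on the original edges and $\sigma(e)=n$ on the new ones, and invoke \cref{le:single-source-plane} in $O(n)$ time. This gives total time $O(2^z n^{2z+1})$.

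Correctness has two directions. Soundness is immediate: a drawing of $G'$ that respects $\sigma$ restricts to a drawing of $G$ with span at most $k$. Completeness is the main obstacle. Given a drawing $\Gamma$ of $G$ with span at most $k$, we need to show that some guessed augmentation can actually be drawn into $\Gamma$ with $y$-monotone, crossing-free new edges. My approach would be to take, for each non-lowest source $s_i$, the face $f$ in which it has its large angle, and to shoot a vertical ray downward from $s_i$ inside $f$. Where the ray first hits the boundary of $f$, it meets either a vertex or an edge. If it hits an edge, we use the lower endpoint (the tail) of that edge, after routing the new edge alongside it. The delicate part is that several added edges inside the same face must not cross one another. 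Downward rays are pairwise disjoint, and rerouting along an edge toward its tail keeps the curves disjoint provided the bends are taken in a consistent order; the binary choice per source records which side is used.

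Finally, as in \cref{le:single-source-upward}, we compress empty layers so that the height of the drawing is at most $n$. This justifies the bound $\sigma=n$ on the new edges and completes the argument.
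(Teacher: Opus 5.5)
You have a genuine gap: keeping one original source as the unique source of $G'$ does not work. Your plan gives every other source an incoming edge from an existing vertex lying strictly below it, which is impossible as soon as two sources share the lowest layer. Concretely, take the plane path $a \to c \leftarrow b$ with $k=1$. It has span $1$, with $y(a)=y(b)=0$ and $y(c)=1$. Whichever source you keep, say $a$, the other must receive a new edge from $a$ or $c$. The edge $(c,b)$ closes a directed cycle. The edge $(a,b)$ forces $y(c)\ge y(b)+1\ge y(a)+2$, so the original edge $(a,c)$ gets span at least $2>\sigma((a,c))=k$. Every branch rejects, so completeness fails. Letting $b$ ``remain a source'' does not help either, since then \cref{le:single-source-plane} no longer applies. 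The same issue appears in your ray argument: in the outer face the downward ray from a source may never hit the boundary of $f$, and you do not say where such a source should be connected.

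The missing ingredient is the one the paper uses. Add a super-source $r$ in the outer face, placed one layer below the lowest vertex of the compressed drawing $\Gamma$, and branch over all $z$ sources rather than $z-1$. A source whose downward view hits the boundary of its large-angle face gets an edge from a lower visible vertex. A source on the lowest layer, or more generally an outer-face source whose ray escapes below the drawing, gets an edge from $r$. This is admissible because new edges carry $\sigma=n$ and $\Gamma$ can be compressed to height at most $n-1$. The paper also inserts the new edges one at a time, each inside a face of the current graph. This spares you the argument that simultaneously rerouted rays do not cross.

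Your counting also needs repair. A target vertex plus a left/right bit does not determine where the new edge enters the target when that vertex occurs several times on the boundary of the face. You must enumerate corners of the face instead. The factor $2^z$ then comes from the bound $|\partial f_s|\le 2n$ on the number of corners of a face, giving at most $\deg(s)\cdot|\partial f_s|\le 2n^2$ choices per source. It does not come from ad hoc binary decisions.
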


\begin{proof}
	Let $S$ be the set of sources of $G$ and assume that the planar embedding of $G$ is fixed.
	We reduce the instance to at most $2^zn^{2z}$ instances of the single-source case with fixed plane embedding.
	We introduce a new vertex $r$ (the \emph{super-source}) embedded in the outer face.
	While the graph still contains a source~$s$ different from~$r$, we branch as follows.
	First, we pick a \emph{corner} of~$s$, i.e., a pair $\langle e_1,e_2 \rangle$ of consecutive incident edges of $s$ (note that $e_1=e_2$ is possible if $\deg(s)=1$). Let $f_s$ be the \emph{associated face} of the corner $\langle e_1,e_2 \rangle$, i.e., the face between $e_1$ and $e_2$.  %
	Pick another corner $\langle e'_1,e'_2 \rangle$ with associated face $f_s$ and let $v_s$ be the common incident vertex of $e_1'$ and $e_2'$. 
	Add the edge $(v_s,s)$ and route it inside $f_s$ from $v_s$ between $e_1'$ and $e_2'$ to $s$ between $e_1$ and $e_2$. If this introduces loops, parallel edges, or directed cycles, we immediately reject the branch.  Otherwise, we continue branching until~$r$ is the only source.
	Let $G'=(V\cup\{r\},E')$ be the resulting graph.
	By construction, $G'$ is plane, acyclic, and has a single source $r$. We now use \cref{le:single-source-plane} to decide whether $G'$ has an embedding-preserving upward-planar layered drawing such that the span of each edge $e\in E'$ is at most $\sigma(e)$, where $\sigma(e)=k$ if $e\in E$, and $\sigma(e)=n$
	if $e\in E'\setminus E$. See \cref{fig:xp*} for an illustration.
	
	\begin{figure}[t]
		\centering
		\includegraphics[page=8]{figures/algorithms}
		\caption{Illustration for the proof of \cref{cor:xp}. Sources are drawn as squares, added edges are drawn dashed.}
		\label{fig:xp*}
	\end{figure}
	
    For each source~$s$ and each of the $\deg(s) \le n$ incidences of $s$ to a face $f_s$, we branch over $|\partial f_s|$ choices 
    for adding an incoming edge to $s$ inside $f_s$, where $|\partial f_s|$ denotes the number of corners (i.e., vertex occurrences, not distinct vertices) along the boundary of $f_s$.
    
	We prove by induction on the number of edges of the current graph that $|\partial f_s| \leq 2n$ for $n\geq 1$; recall that the graph consists of $G$ plus $r$ and, thus, has $n+1$ vertices. If~$f_s$ is the only face, then the current graph is a forest and $|\partial f_s| \leq 2n$ by the Handshaking Lemma. Assume now that $e$ is an edge incident to $f_s$ and another face. We apply the inductive hypothesis to the graph $G''$ obtained after removing $e$. Let $f$ be the face of $G''$ containing $f_s$. Then $|\partial f_s| \le |\partial f| \leq 2n$.
	
	Thus, in total, we obtain at most $2n^2$ choices for each source and, hence, in total $(2n^{2})^z$ branches.  
	For each branch, we invoke \cref{le:single-source-plane}, which requires $O(n)$ time.
	This yields an overall running time of $O(2^{z} n^{2z+1})$.

We now show that $G$ has span at most $k$ if and only if at least one branch leads to a yes-instance. One direction is obvious, since $G \subseteq G'$. 
For the other direction, suppose that $G$ admits an embedding-preserving drawing $\Gamma$ of span at most $k$.
As in the proof of \cref{le:single-source-upward}, we can assume w.l.o.g\ that $\Gamma$ has height at most $n-1$ and thus every edge has span at most $n-1$.
Insert the super-source $r$ on the outer face of $\Gamma$ one level below the lowest vertex in $\Gamma$.

Consider any source $s\in S$. There is a unique face $f_s$ that has a large angle ($>\pi$) at $s$. If $f_s$ is an internal face, then there must be at least one vertex $v_s$ below $s$ that is visible from $s$ in $f_s$, that is, we can add the directed edge $(v_s,s)$ in $f_s$ as a straight-line segment without crossings. 
If $f_s$ is the outer face, there are two possibilities.
If $s$ is not one of the lowest vertices among the sources on the outer face in $\Gamma$, then we can find a vertex $v_s$ on $f_s$ such that we can add the straight-line edge $(v_s,s)$ with span at most $n$ as in the previous case. Otherwise, we have $y(s)-y(r)=1$, so we can add $(r,s)$ as a straight-line segment of span 1 on the outer face.

In all cases, the added edge can be drawn entirely within the chosen face without crossings and with span at most $n$.
Therefore, we obtain a drawing of the resulting augmented graph $G'$ where each edge $e$ has span at most $\sigma(e)$.
Hence, the branch that guesses exactly these choices is a yes-instance.
Therefore, $G$ has span at most $k$ if and only if at least one of the at most 
single-source instances is a yes-instance.
\end{proof}

\begin{figure}[t]
    \centering
    \includegraphics[page=8]{figures/algorithms}
    \caption{Illustration for the proof of \cref{cor:xp}. %
    Added edges are drawn dashed.}
    \label{fig:xp}
\end{figure}

Finally, we turn our attention to single-source outerplanar graphs.

\begin{theorem}
\label{th:outerplanar}
$k$-\textsc{span upward planarity} can be solved in linear time for 
single-source outerplanar graphs.
\end{theorem}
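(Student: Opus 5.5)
The plan is to reduce to \cref{le:single-source-plane}: for a single-source outerplanar graph, find one planar embedding that is as good as any other, and then run the linear-time test for single-source plane graphs on it. The natural candidate is an outerplanar embedding, in which all vertices lie on the outer face. I would argue that if $G$ has an upward-planar layered drawing $\Gamma$ of span at most $k$, then $G$ also has one of span at most $k$ whose planar embedding is a fixed, canonically computed embedding.

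First I would reduce to the biconnected case. Blocks of an outerplanar graph meet at cut vertices. In a single-source graph, each block has a single source: its source is either the global source or the cut vertex through which the block is reached from $s$. I would handle blocks in a block-cut tree traversal rooted at the block containing $s$. The key claim is that a child block $B$ attached at cut vertex $c$ (the source of $B$) can always be drawn with its own optimal span drawing, squeezed into a thin wedge emanating upward from $c$. This works because $c$ is a source of $B$, so $B$ lies above $c$. Choosing the angle of $c$ in which to insert the wedge does not interact with layers: it only needs a free upward direction at $c$, and such a direction exists since we may pick any face incident to $c$ and route $B$ arbitrarily close to an edge leaving $c$ upward. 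If $c$ has no outgoing edge in the parent, we use the space directly above $c$. So the span of $G$ is the maximum of the spans of its blocks, and blocks can be treated independently.

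For a biconnected outerplanar block, the outerplanar embedding is unique up to reflection, since the Hamiltonian outer cycle is unique. The remaining issue is that a span-optimal drawing might use a non-outerplanar embedding, i.e., a different outer face. Here I would use the structure of single-source upward drawings: in any upward-planar drawing the source $s$ must lie on the outer face. In a biconnected outerplanar graph, a different planar embedding arises by choosing an internal face $f$ of the outerplanar embedding as the outer face, with the same rotation system. The plan is to show that, given a span-$k$ drawing with outer face $f\neq$ the Hamiltonian cycle face, we can redraw it with the Hamiltonian face outside while keeping all $y$-coordinates. If that fails in general, the fallback is to try each of the $O(n)$ faces containing $s$ as the outer face, and then bound this total cost.

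The main obstacle is exactly this last step: keeping linear time rather than $O(n)$ calls of \cref{le:single-source-plane}. I would first try to prove directly that the outerplanar face can be taken as the outer face. The idea is that all sinks must have large angles in the outer face, or, for inner faces, all but one must. Since the vertex set is on the outer cycle and $s$ is the unique bottom, the layering of a span-$k$ drawing can be kept while the drawing is re-embedded by pulling chords inward, via level-planarity of a cycle with chords whose levels are fixed. If that argument breaks, I would instead use the flow formulation of \cref{th:st-plane} across the faces incident to $s$ incrementally, for total linear work.
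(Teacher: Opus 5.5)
\paragraph{The biconnected case is not proved.} Your block-cut-tree reduction and the final call to \cref{le:single-source-plane} on the outerplanar embedding match the paper. The step that carries the theorem, however, is left as a hope: that a span-$k$ drawing of a biconnected single-source outerplanar graph can be turned into one with the outerplanar embedding. ``Pulling chords inward via level-planarity of a cycle with chords'' is not an argument. Its premise that ``the vertex set is on the outer cycle'' already assumes the embedding you are trying to reach.

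Your fallback does not repair this. It is false that every planar embedding of a biconnected outerplanar graph is the outerplanar rotation system (or its mirror) with another outer face. Take the $6$-cycle $v_1\cdots v_6$ with chords $v_1v_3$ and $v_4v_6$, and put $v_2$ inside the quadrilateral $v_1v_3v_4v_6$ but $v_5$ outside. The face sizes become $3,5,3,5$ instead of $3,4,3,6$, so no choice of outer face yields this embedding. Enumerating faces at $s$ therefore does not cover all embeddings and cannot justify correctness. It would also cost $O(n^2)$, and the ``incremental flow'' remark is unsupported.

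\paragraph{The missing idea: a span-preserving flip at a chord.} Let $e=(u,v)$ be a chord of the outerplanar embedding that lies on the outer face of the current drawing. Let $A\ni s$ and $B$ be the two components of $G-\{u,v\}$. Every vertex of $B$ has all its in-neighbours in $B\cup\{u,v\}$ and is not a source, and $v$ has in-edge $e$. Hence $u$ is the unique source of $G[B\cup\{u,v\}]$, and $B$ lies entirely above $u$. Since $s$ is the only global source, the outer boundary of $A$ also cannot fold back over the outer side of $e$. So $B$ can be mirrored to the other side of $e$ with every $y$-coordinate kept: vertices between $y(u)$ and $y(v)$ go close to $e$, higher ones far away. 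Each flip decreases the number of chords of the outerplanar embedding that lie on the outer face. After finitely many flips you have the outerplanar embedding with the same span, and one linear-time test suffices.

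\paragraph{A smaller issue in the gluing step.} Your justification there (``any face incident to $c$'', ``the space directly above $c$'') is too weak. A cut vertex enclosed by bounded faces of the parent drawing, for example a sink whose large angle lies in an internal face, may have only a few free layers above it. Gluing becomes sound once each block is drawn outerplanarly. Then $c$ is on the outer face, and in a single-source drawing every sink-switch of the outer face has its large angle there. So one can walk $y$-monotonically from $c$ to infinity inside the outer face and squeeze the child drawing along that curve.
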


\begin{proof}
Let $s$ be the unique source of $G$.
Compute an outerplanar embedding $\mathcal{E}$ of $G$ in $O(n)$ time.
We claim that $G$ has an upward-planar layered drawing of span at most $k$
if and only if it has an embedding-preserving such drawing with respect to $\mathcal{E}$.
Assume first that $G$ is biconnected, so it has a unique outerplanar embedding.
The if-direction is immediate.

For the other direction, suppose that $G$ admits an upward-planar layered drawing $\Gamma$ of span at most $k$
with a non-outerplanar embedding $\mathcal{E}'$.
We transform $\mathcal{E}'$ into $\mathcal{E}$ by repeatedly applying the following flip operation; see ; see \cref{fig:outerplanar-splitpair}.
Let $e=(u,v)$ be an edge that is \emph{internal} (a chord) in $\mathcal{E}$ but lies on the outer face in the current embedding.
Since $G$ is biconnected and outerplanar, $e$ is a separation pair: the graph $G-\{u,v\}$ has exactly two connected components,
say $A$ and $B$. Assume w.l.o.g.\ that $s\in A$. We now \emph{flip} $B$ along $e$, maintaining planarity and the $y$-coordinates
of all vertices, thus preserving the span of every edge: we mirror $B$ and move it sufficiently far to the other side of $e$ such that it does not intersect any part of $A$. Since $G$ is biconnected and has a single source, $u$ is the single source of $G[B\cup\{u,v\}]$, so the vertices in $B$ cannot have any $y$-coordinate lower than $y(u)$. Assume that $e$ is drawn such that the outer face lies to the right of $e$ and consider $G[A\cup\{u,v\}]$. If we walk on the outer facial cycle of $G[A\cup\{u,v\}]$ in counter-clockwise direction, starting from $v$, until we reach (one of the) vertices in the drawing with the largest $y$-coordinate, and any edge that goes downward must be directed to the left, as otherwise a global source would be required to ``turn around''. Thus, no part of $B$ can ``stick out'' to the bottom, and no part of $A$ can ``stick in'' from the top to the area that we want to draw $B$ in after flipping it. Thus, there is enough space to place the vertices of $B$ on the other side of $e$ such that we can still draw its edges as $y$-monotone curves without intersection. In particular, this can be achieved by moving the vertices of $B$ that lie on the $y$-coordinates between $y(u)$ and $y(v)$ to be very close to $e$, while moving the vertices that lie above $v$ very far away.%

Each flip strictly decreases the number of edges that are internal in $\mathcal{E}$ but currently lie on the outer face.
Hence, after finitely many flips we obtain an embedding that coincides with $\mathcal{E}$,
and we have transformed $\Gamma$ into an embedding-preserving drawing with respect to $\mathcal{E}$ of the same span.
Since the outer face is fixed, there is a unique upward-planar embedding~\cite{BertolazziBMT98}, so we can
use \cref{th:single-source-upward} to decide whether $G$ has span at most $k$ in $O(n)$ time.

Assume now that $G$ is not biconnected and consider the block-cut tree of $G$.
Choose an arbitrary block $B_0$ that contains the global source $s$ as the root block.
For every other block $B\neq B_0$, let $c(B)$ be the unique cut vertex through which $B$ is connected to its parent block.
Then $c(B)$ is the unique source in the subgraph induced by $B$; otherwise, there would be another global source in $B$ or in one of its child blocks.
Therefore, every block $B$ is a single-source biconnected outerplanar graph, so it has an upward-planar drawing with span at most $k$
if and only if it has an outerplanar one, which we can decide for each block independently in $O(n)$ time as outlined above.
If any block is a no-instance, then $G$ is a no-instance.

Assume now that every block admits an upward-planar layered drawing of span at most $k$.
We construct a drawing of $G$ by gluing the block drawings at their shared cutvertices.
For every cut vertex $x$, there is exactly one block $B_x$ that has $x$ as a local sink, and all other blocks incident to $x$ (children of $B_x$ in the block-cut tree) have $x$ as their (unique) local source. Hence, we can first place the drawing of $B_x$, and then the drawings of the other blocks incident to $x$ above $x$ next to each other in any order.
This preserves upward planarity and the span of each edge in every block; see \cref{fig:outerplanar-cutvertex}.
Hence, $G$ has span at most $k$ if and only if all of its blocks pass the test.
\end{proof}

\begin{figure}[h]
    \centering
    \subcaptionbox{\label{fig:outerplanar-splitpair}}{\includegraphics[page=9]{figures/algorithms}}
    \hfil
    \subcaptionbox{\label{fig:outerplanar-cutvertex}}{\includegraphics[page=10]{figures/algorithms}}
    \caption{Illustration for the proof of \cref{th:outerplanar}: (a) the flipping operation, and (b) laying out biconnected components around a cutvertex.}
    \label{fig:outerplanar}
\end{figure}

\section{Parameterized Algorithms}\label{sec:fpt}

We first recall some structural graph parameters.  Let~$G=(V,E)$ be a graph.
The \emph{vertex cover number} $\vc(G)$ is the size of a minimum \emph{vertex cover} of $G$, i.e., a set of vertices such that each edge is incident to at least one of them. The \emph{vertex integrity} $\vi(G)$ is the smallest number~$k$ such that there is a set~$M \subseteq V$ such that $|M|$ plus the size of the largest connected component of~$G-M$ is at most~$k$.  A \emph{treedepth decomposition} of $G$ is a rooted tree~$T=(V,E_T)$ on the vertex set of~$G$ such that for each edge in $E$ between vertices $u$ and $v$, we have that~$u$ is an ancestor of~$v$ in~$T$ or vice versa.  The \emph{treedepth} $\td(G)$ is the minimum height of any treedepth decomposition of~$G$.  Note that~$\td(G) \le \vi(G) \le \vc(G)$.

Bekos et al.~\cite{bekos_etal_span:gd24} establish a paradigm for constructing FPT
algorithms for computing the span of undirected graphs with respect to structural graph parameters.  To show that testing whether~$\spn(G) \le s$ is FPT with respect to a graph parameter $k$,
they establish (i) that the problem admits a kernel with respect
to~$k+s$ and (ii) that $s$ can be bounded by a function of~$k$.
Together, this yields a kernel whose size is bounded by a function
of~$k$ alone.  Bekos et al.~\cite{bekos_etal_span:gd24} apply this method to obtain FPT algorithms for the span of undirected graphs parameterized by the vertex cover number, the vertex integrity, and the treedepth.
We show in \Cref{sec:fpt-algor-treed} that their techniques for step (i) can be extended to the directed case.%

\subsection{Parameterization by Treedepth plus Span}
\label{sec:fpt-algor-treed}

We start by recalling some definition from~\cite{bekos_etal_span:arxiv}.
Let~$G=(V,E)$ be a graph and let~$M \subseteq V$ be a subset of~$k$
vertices.  Consider a connected component~$C$ of~$G-M$.  A vertex
of~$M$ that is adjacent to a vertex in~$C$ is called an
\emph{attachment of~$C$}.  We use~$\att(C)$ to denote the set of
attachments of~$C$.  The $M$-bridge~$B$ of~$C$ is the subgraph formed by
the vertices in~$V(C) \cup \att(C)$ together with all edges that have
at least one endpoint in~$C$.  We also refer to~$\att(C)$ as the
attachment of~$B$.

Planarity implies that $G-M$ has at most~$2k-4$ $M$-bridges with
three or more attachments and at most~$3k-6$ pairs~$\{u,v\} \subseteq M$ such that
there is an $M$-bridge with attachments~$u$ and~$v$~\cite[Lemma 10]{bekos_etal_span:arxiv}.  
Two bridges~$B$ and~$B'$ are \emph{equivalent} if there is an isomorphism between them
that leaves the attachments fixed.  We note that, while Bekos et
al. consider an isomorphism between undirected graphs, we require an
isomorphism as directed graphs.  In particular, two components that are equivalent according to our definition are also equivalent in the undirected setting and hence the results from~\cite{bekos_etal_span:arxiv} apply.

Bekos et al.~\cite[Lemma 23]{bekos_etal_span:arxiv} show that if there are more than~$(4s+4)b$ equivalent components of size~$b$ with a single
attachment~$v$, then reducing their number to $(4s+4)b$ yields an
equivalent instance.  Similarly, if there are more than~$(8s+8)(b+1)$
equivalent components of size~$b$ with two attachments, then reducing their number
to~$(8s+8)(b+1)$ yields an equivalent instance.  The proof is based on
the fact that these numbers guarantee that, in any layered planar drawing
with span at most~$s$, one of these components is drawn entirely above
or entirely below its attachment (in the case of a single attachment)
or strictly between its two attachments (in case of two attachments),
and thus an arbitrary number of additional isomorphic copies can be
reinserted next to it in a drawing.  It is readily seen that such a
reinsertion of copies is also possible in the upward-planar layered setting, and thus these
reduction rules are also safe in the directed case.
We are now ready to prove \Cref{thm:fpt-para+span}.

\begin{theorem}
  \label{thm:fpt-para+span}  
  The problem of computing the span of a DAG $G$ admits a kernel with respect to~$s + k$, where $s$ is the span and $k = \vc(G)$, $k=\vi(G)$, or~$k=\td(G)$. In case of the vertex cover number, the size of the kernel is $O(s \cdot k)$.
\end{theorem}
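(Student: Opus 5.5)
The plan is to follow the paradigm of Bekos et al.\ step by step, with the directed reduction rules justified in the paragraph preceding the statement. First fix a modulator $M$ for each parameter, and then bound the total number of non-equivalent $M$-bridges together with their sizes.

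In the vertex cover case, let $M$ be a vertex cover with $|M|=k$. A $2$-approximation can be computed in linear time, and this changes the bounds only by constant factors. Every component of $G-M$ is then a single vertex $v$, and its bridge is determined by $\att(v)$ together with the directions of the edges between $v$ and $\att(v)$.
\begin{itemize}
\item By the planarity bound~\cite[Lemma 10]{bekos_etal_span:arxiv}, at most $2k-4$ bridges have three or more attachments.
\item At most $3k-6$ pairs of attachments occur, and each pair admits only $4$ orientation types. After applying the reduction rule, each type keeps at most $(8s+8)\cdot 2$ copies. This gives $O(sk)$ vertices.
\item Each single attachment admits $2$ orientation types, and each type keeps at most $(4s+4)$ copies. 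This gives $O(sk)$ vertices.
\end{itemize}
Together with $M$ itself, the reduced graph has $O(sk)$ vertices. Safety of each deletion follows from the argument recalled above: in any upward-planar layered drawing of span at most $s$, one of the remaining copies is drawn above or below its attachment, or strictly between its two attachments, and the deleted copies can be reinserted next to it. Deleting copies can only decrease the span, so the two instances are equivalent.

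For vertex integrity, take $M$ with $|M|$ plus the maximum component size at most $k$; such an $M$ can be found in FPT time. Every bridge then has at most $k$ non-attachment vertices. There are only $f(k)$ isomorphism classes of directed bridges with fixed attachments, with two further ingredients:
\begin{itemize}
\item bridges with three or more attachments number at most $2k-4$ and each has bounded size;
\item attachment sets of size one or two come from $O(k)$ choices, and the reduction rules cap each class at $O(sb)$ copies with $b\le k$.
\end{itemize}
This yields a kernel of size $f(k)\cdot s$.

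For treedepth, the plan is to mimic the recursive argument of Bekos et al.\ along a treedepth decomposition of height $k$, processing it bottom-up. At each node $v$ whose root path is $M_v$ (of size at most $k$), the subtrees hanging below $v$ are $M_v$-bridges. Once these have been reduced inductively, they have size bounded by a function of $s+k$. Applying the reduction rules to equivalent bridges with at most two attachments, and the planarity bound to those with more, bounds the number of children of $v$ in the decomposition. By induction on height, the total size is bounded by $g(s,k)$.

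I expect this treedepth step to be the main obstacle. The bridges at a node can attach to up to $k$ ancestors, so one must check that the planarity bounds and the reinsertion argument of~\cite{bekos_etal_span:arxiv} apply to bridges with many attachments in the directed setting. I would try to reuse their lemma verbatim, noting that directed equivalence refines undirected equivalence. The only directed-specific fact needed is that reinserting a copy next to an existing one preserves y-monotonicity, and this holds because the copy reuses the same layer assignment.
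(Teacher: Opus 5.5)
Your proposal is correct and follows the paper's proof essentially step by step. Like the paper, you transfer the reduction rules of Bekos et al.\ to the directed setting (directed equivalence refines undirected equivalence, and reinserted copies keep their layers), apply them with $b=|M|=k$ for vertex integrity, rely on their bottom-up processing of the treedepth decomposition, and for a vertex cover use that single-vertex components fall into a constant number of orientation classes per attachment set, giving $O(s\cdot k)$; your explicit count (two types per single attachment, four per pair) is a more detailed version of the paper's remark.
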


\begin{proof}
We note that applying the above reductions with~$b = |M| = k$ immediately yields a kernel with respect to vertex-integrity plus span.  Moreover, Bekos et al.~\cite[Section 4.3]{bekos_etal_span:arxiv} show that, with suitably chosen sets~$M$ and
sizes~$b$, these reductions can be used to process a treedepth
decomposition in a bottom-up fashion so that the degree of the nodes
in the decomposition trees becomes bounded by a function of the
treedepth $\td$ and the span $s$.  As argued above, the reductions also
yield an equivalent instance in the directed setting.  Thus, the
kernelization algorithm of Bekos et al. also works in the directed
setting and their analysis of the size of the resulting kernel
applies. 

In the special case where $M$ is a vertex cover of size~$k$, every
component of~$G-M$ has size~$b=1$ and any two components, which
consist of a single vertex, are equivalent if and only if they have
the same attachments and the edges connecting to these attachments
have the same directions.  Thus, the number of equivalence classes for vertices of degree at most~$2$ with the same attachments is bounded by a constant, and we obtain a kernel of size~$O(k \cdot s)$.
\end{proof}

\subsection{Parameterization by Vertex Cover}
\label{sec:fpt-vc}

As mentioned above, Bekos et al. in step (ii) of their paradigm extend their FPT algorithms to depend on the structural parameter $k$ only by bounding the span~$s$ in terms $k$.  We show that, in the directed case, it is not possible to remove the dependency on $s$ in a similar fashion.  Namely, the following observation shows that their step (ii) fails for directed graphs.

\begin{observation}
For every $s\in \mathbb N$, there is a DAG $G_s$ with $\vc(G_s) = 2$ and $\spn(G_s) \ge s$.
\end{observation}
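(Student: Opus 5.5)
The plan is to build a DAG with two cover vertices $a$ and $b$ in which, in every upward-planar layered drawing, many edges at $a$ are forced to have pairwise distinct spans. Put $b$ above $a$ by adding the edge $(a,b)$. For $i=1,\dots,N$, add vertices $x_i$ with edges $(a,x_i)$ and $(x_i,b)$. Every such $x_i$ is a length-two directed $a$--$b$ path, so it lies strictly between the layers of $a$ and $b$. Since all these paths are internally disjoint and share their endpoints, the graph is a planar st-graph. Its vertex cover number is $2$, because $\{a,b\}$ covers all edges and there is a triangle $a,x_1,b$.

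This alone does not force a large span. In an upward-planar drawing, the paths $a\,x_i\,b$ are nested like a parallel bundle, and several $x_i$ may share a layer. So $y(b)-y(a)=2$ suffices, with all $x_i$ on the middle layer. The graph needs extra structure that forbids two $x_i$ on the same layer.

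The next step is to attach pendant vertices at $a$ that must sit between consecutive paths. For each $i$, add a vertex $z_i$ with the edge $(z_i,a)$, hung off $a$ downwards. I would first try hanging these pendants inside the bundle instead. The candidate is $(a,z_i)$ and $(z_i,b)$ for alternating $i$, which keeps the vertex cover at $\{a,b\}$. However, all such vertices are again interchangeable members of the same parallel family, so they do not fix a layer order.

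A cleaner idea is to exploit the upward embedding with sources and sinks. Take $N$ vertices $x_i$ with edges $(x_i,a)$ and $(x_i,b)$, and $N$ vertices $w_i$ with edges $(a,w_i)$ and $(b,w_i)$. Keep the edge $(a,b)$, so $a$ lies below $b$. Now $G_s-\{a,b\}$ is independent, so the vertex cover number is $2$.

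The claim to prove is the following. In any upward-planar drawing, the $4$-cycles $a\,x_i\,b\,x_j$ formed by the sources are nested, and likewise those formed by the sinks. Planarity with both $a$ and $b$ adjacent to every $x_i$ then forces the sources $x_i$, which lie below $a$, to have their edges to $b$ nested around $a$. I would use this to show that the sources must lie on pairwise distinct layers. The argument is the same kind as in the proof of \cref{prop:binaryTree}: an inner source $x_j$ is enclosed between the two edges from an outer source $x_i$ to $a$ and to $b$, and must lie strictly above $x_i$ to pass on the correct side of $a$. Granting this, the $N$ sources occupy $N$ distinct layers below $a$. The lowest of them therefore has an edge to $b$ of span at least $N$, and choosing $N=s$ gives $\spn(G_s)\ge s$.

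The main obstacle is proving this nesting and distinct-layer claim rigorously. Upward planarity of $K_{2,N}$ plus the edge $(a,b)$ constrains the rotation at $a$: all incoming edges from sources are consecutive, and all outgoing edges are consecutive. I would argue that between consecutive incoming edges $(x_i,a)$ and $(x_j,a)$ the face is a $4$-cycle $x_i\,a\,x_j\,b$, and that this face has two sources. Applying the large-angle counting from the definition of an upward-planar embedding, one of $x_i,x_j$ is then forced to lie inside the other's region, which yields a strict vertical order between them. If this counting turns out not to force strictness, I would fall back to the spiral idea of \cref{prop:Pn}. Concretely, I would replace the bundle by the alternating path of \cref{prop:Pn}, with $a$ adjacent to all its odd vertices and $b$ to all its even vertices, so that the embedding of the path is forced as in \cref{cor:3-connected}. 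This keeps the vertex cover small and inherits the span bound $n/2$.
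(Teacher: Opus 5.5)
Your final graph (sources $x_1,\dots,x_N$ and sinks $w_1,\dots,w_N$, each adjacent to both $a$ and $b$, plus the edge $(a,b)$) does have span at least $N$. However, your proposal never proves the claim everything rests on, namely that the $x_i$ occupy pairwise distinct layers; you explicitly only grant it, and the justifications you sketch do not deliver it.

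\textbf{The nesting heuristic.} The idea that planarity plus the common neighbours $a,b$ forces an inner source strictly above an outer one cannot work by itself. It applies verbatim to $K_{2,N}$ with the $x_i$ as sources and $a,b$ as sinks, and there two sources can share a layer: the arch $x_i a x_j$ simply nests inside the arch $x_i b x_j$. What excludes this in your graph is a sink. If $x_i,x_j$ share a layer and $y(a)<y(b)$, then $a$ lies inside the region bounded by the arch $x_i b x_j$ and the horizontal segment between $x_i$ and $x_j$. The highest point of that region is $b$. The $y$-monotone edge $(a,w_1)$ can cross neither the arch nor the segment, so it cannot reach $w_1$ above $b$.

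\textbf{The large-angle counting.} This is closer. Each face $x_i a x_j b$ between consecutive in-edges at $a$ must be internal, with its single large angle at $x_i$ or $x_j$, since $a,b$ are not sinks. This strictly orders \emph{consecutive} sources. It still allows a zigzag in the rotation at $a$, with, e.g., $x_1$ and $x_3$ on the same layer. To rule that out you must add that every source has exactly one large angle and only two incident faces. Then the height sequence has no interior local maximum or minimum and is monotone.

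\textbf{The fallback.} This fails outright: joining $a$ and $b$ to the vertices of the alternating path leaves all path edges uncovered, so the vertex cover number grows like $n/2$.

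\textbf{The fix.} The simplest repair, which is the paper's proof, is to drop the requirement of pairwise distinct layers. The paper takes $G_s=K_{2,2s}$ with all edges directed into the degree-$2$ vertices $W$, i.e., your sinks alone, without the $x_i$ and without $(a,b)$. It shows that no three vertices of $W$ share a layer. Each of the three would lie outside the $4$-cycle through $u$, $v$ and the other two, since its upward vertical ray misses that cycle. But in any planar drawing of a theta graph, one of the three $u$--$v$ paths lies inside the cycle formed by the other two. Hence the $2s$ vertices of $W$ use at least $s$ layers above $u$, and some edge at $u$ has span at least $s$. The same argument applied to your sources with $N=2s$ completes your proof, and shows that your sinks and the edge $(a,b)$ are unnecessary.
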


\begin{proof}
  Let~$G_s$ be $K_{2,2s}$ with all edges directed towards the vertices of degree~$2$; see \Cref{fig:small_cover_span}. Let~$u,v$ denote the two vertices of degree~$2s$ and $W$ denote the set of vertices of degree~$2$.  Note that~$\{u,v\}$ is a vertex cover of size~$2$. In an upward-planar layered drawing of~$G_s$, there are at most two vertices of~$W$ on each layer; see \Cref{fig:upward-g3}.  Thus, the height of the drawing is at least~$s$.  Since~$u$ lies below all vertices of~$W$, one of its edges has span at least~$s$.
\end{proof}

\begin{figure}
    \centering
    \begin{subfigure}[b]{.4\textwidth}
    \centering
    \includegraphics[page=1]{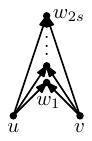}
    \subcaption{}\label{fig:small_cover_span}
    \end{subfigure}
    \hfil
    \begin{subfigure}[b]{.4\textwidth}
    \centering
    \includegraphics[page=2]{figures/small_cover_span}
    \subcaption{}\label{fig:upward-g3}
    \end{subfigure}
    \caption{(a) Graph $G_s$ and  (b) an upward drawing of~$G_s$.}
    \label{fig:span-family}
\end{figure}

Nevertheless, we show in the following that computing the span of a directed graph is FPT with respect to the vertex cover number alone.  To this end, let~$G=(V,E)$ be a planar directed acyclic graph, let~$M$ be a vertex cover of size~$k$ in~$G$ and consider the problem of testing whether~$G$ has span at most~$s$.

Again, planarity implies that there are at most~$2k$ vertices in $G-M$ with degree at
least~$3$.  We thus focus on the vertices of~$G-M$ with degree at
most~$2$.  It is readily seen that for each vertex~$v \in M$, it suffices to keep
one upper and lower degree-1 neighbor of~$v$ (if it even has one) and to remove the remaining ones.
Thus, we can assume that there are at most~$2k$ vertices of degree~$1$
in~$G-M$.  Similarly, it suffices to keep one transversal vertex~$w$
with neighbors~$u$ and~$v$ and to remove the remaining ones; this limits their number to~$3k$ by
\cite[Lemma 10]{bekos_etal_span:arxiv}.  The issue is that for a pair of vertices~$u,v \in M$ there
may be an unbounded number of upper or lower $uv$-ears.

Consider an upward-planar layered drawing~$\Gamma$ of~$G$ with
corresponding leveling~$\ell$.  Let~$w$ be a $uv$-ear, where~$\ell(u) \le \ell(v)$ and let~$\gamma$ be a~$y$-monotone curve from~$u$ to~$v$ that does not cross any edge incident to~$w$
in~$\Gamma$.  Consider the closed curve~$\rho$ formed by~$\gamma$
together with the two edges incident to~$w$ in~$\Gamma$.  If the region that lies
to the right of~$\rho$ when traversing it in clockwise direction is
bounded, we call the ear~$w$ \emph{clockwise}; otherwise, the region on the
other side is bounded, and we call $w$ \emph{counterclockwise}; see \Cref{fig:cw_ccw_ears}.  Note
that these notions only depend on~$\Gamma$ and not on the choice of the curve~$\gamma$.

We call a set~$X$ of clockwise upper $\{u,v\}$-ears \emph{consecutive}
in~$\Gamma$ if for any pair of distinct vertices~$w_1,w_2\in X$, the
set of vertices enclosed by the 4-cycle formed by~$uw_1vw_2$ is a
subset of~$X$; see \Cref{fig:consecutive-ears} for an illustration.  We define consecutivity in an analogous fashion for
counterclockwise upper ears and for (clockwise/counterclockwise) lower
ears.

\begin{figure}
    \begin{subfigure}[b]{.5\textwidth}
    \centering
    \includegraphics[page=1]{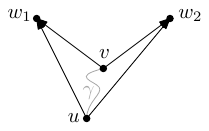}
    \subcaption{}
    \label{fig:cw_ccw_ears}
    \end{subfigure}
    \begin{subfigure}[b]{.5\textwidth}
    \centering
    \includegraphics[page=2]{figures/cw_ccw_ears}
    \subcaption{}
    \label{fig:consecutive-ears}
    \end{subfigure}
    \caption{(a) Two upper~$uv$-ears~$w_1$ and~$w_2$.  The ear~$w_1$ is counterclockwise and~$w_2$ is clockwise. (b) Clockwise upper $uv$-ears~$x,y,z$.  The set~$\{x,y\}$ is consecutive, but~$\{x,y,z\}$ is not due to the white vertices.}
\end{figure}

\begin{lemma}
  \label{lem:ears-consecutive}
  Let~$G=(V,E)$ be a DAG with a vertex cover $M$ of
  size~$k$ that admits an upward-planar layered
  drawing~$\Gamma$ with span~$s$.  Then~$G$ admits an upward-planar layered drawing~$\Gamma'$ with span~$s$ where for each
  pair of vertices~$u,v \in M$ all upper/lower
  clockwise/counterclockwise~$\{u,v\}$-ears are consecutive.
\end{lemma}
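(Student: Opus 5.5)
We argue by an exchange argument that keeps all $y$-coordinates fixed and only redraws ears. Among all upward-planar layered drawings of $G$ with span $s$, take $\Gamma'$ that minimizes the total number of pairs $(w,x)$ such that $w$ and $x$ are ears of the same type (same pair $\{u,v\}$, both upper or both lower, same rotation) and $x$ is separated from the rest of that class by the 4-cycle of $w$ and another ear. Here a vertex $x$ is enclosed by a 4-cycle $uw_1vw_2$ while $x$ is not an ear of that class. We show that if some class is not consecutive, we can strictly decrease this count.

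Fix $u,v\in M$ with $\ell(u)\le\ell(v)$ and, say, the clockwise upper $\{u,v\}$-ears. An upper ear $w$ is a sink with $\ell(w)>\ell(v)$. Its edges $uw$ and $vw$ leave $u$ and $v$ upward, so $y$-monotonicity confines them to the strip above $\ell(u)$. Choose an ear $w^*$ of this class that is innermost, i.e., whose region bounded by $\rho$ (for a fixed $\gamma$) contains no other ear of the class. We then remove all other ears of the class and reinsert them as copies of $w^*$. Each copy $w'$ is placed on the same level as $w^*$, immediately next to it inside a thin neighborhood of the path $u\,w^*\,v$ on the side facing the bounded region. Its two edges are drawn parallel and arbitrarily close to $uw^*$ and $vw^*$. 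The result is upward and planar, since the thin neighborhood meets no other edges. Each new edge has the same span as the corresponding edge of $w^*$, so the span stays at most $s$. Each copy is clockwise because it lies on the same side as $w^*$.

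After this step, the ears of the class form a nested family of 4-cycles with nothing else in between. Hence any 4-cycle $uw_1vw_2$ encloses only copies, and the class is consecutive. Removing and reinserting degree-2 vertices adjacent only to $u,v$ does not affect any other ear class. The only exception is a class with the same pair $\{u,v\}$ whose ears happened to lie between the old positions. Those ears are untouched, and the removal can only reduce what other 4-cycles enclose, so other classes stay consecutive if they were and their count does not increase. The same construction works for counterclockwise and lower ears by symmetry. Iterating over all at most $4\cdot(3k-6)$ class/pair combinations (bounded by \cite[Lemma 10]{bekos_etal_span:arxiv}) gives $\Gamma'$.

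The main obstacle is justifying that ``all ears are isomorphic'' really lets us move each ear to the position of $w^*$. An ear's identity is irrelevant, but the graph must remain $G$. Removing ear $x$ from its old position could release a region that encloses other vertices, and that is harmless. However, we must check that $w^*$'s edges exist with span at most $s$ for every copy, which holds since the copies have identical endpoints and levels. We must also check that the thin-neighborhood placement is consistent with the clockwise orientation; this is the case if we place copies on the bounded side of $\rho_{w^*}$. A further subtlety is that the edges of $w^*$ might bound a face that is also incident to a different-class ear of the same pair. Placing copies on the inner side adjacent to the edges still keeps them in a face incident to $u$ and $v$, so it avoids crossings, and this is the point I would verify most carefully.
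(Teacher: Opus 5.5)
\textbf{There is a genuine gap.} The central step, placing every copy $w'$ on the level of $w^*$ on the side of $u\,w^*\,v$ facing the bounded region of $\rho_{w^*}$, cannot be carried out.

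Since $w^*$ is an upper ear, it is a sink whose two edges arrive from below. So $w^*$ is the unique point of $\rho_{w^*}$ on level $\ell(w^*)$, and $\rho_{w^*}$ lies weakly below that level. Every other point of that horizontal line therefore connects to the unbounded side, so the bounded region of $\rho_{w^*}$ lies strictly below $\ell(w^*)$. Near $w^*$, the ``inner'' side of $u\,w^*\,v$ is the wedge below $w^*$, which contains no point of level $\ell(w^*)$.

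Putting $w'$ beside $w^*$ on the other side does not help. Then $uw'$ or $vw'$ either crosses an edge of $w^*$, or has to run around $u$ or $v$, which makes $w'$ an ear of the opposite rotation.

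More generally, two ears of the same pair, type and rotation are always nested: one path $u\,w\,v$ lies inside the bounded region of the other. Hence they occupy pairwise distinct levels, and a consecutive class of $m$ ears needs $m$ distinct levels. With all $y$-coordinates fixed there is in general no room for this. Below the innermost (that is, lowest) ear there may be no free level at all, e.g.\ if $\ell(w^*)=\ell(v)+1$. The levels above it are exactly those interleaved with the enclosed non-ear vertices that cause the non-consecutivity in the first place. Consequently your span and planarity claims, and the conclusion that the class becomes ``a nested family with nothing in between'', rest on a placement that does not exist. The potential function you set up at the start is also never actually used.

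\textbf{What is missing} is an operation that changes levels to make room, together with a span bound that justifies it. The paper proceeds as follows.
\begin{itemize}
\item It takes two ears $w_1,w_2$ of the class with $\ell(w_1)<\ell(w_2)$ whose 4-cycle encloses vertices outside the class, choosing them to minimize $\ell(w_2)-\ell(w_1)$. Then the enclosed set $Y$ contains no ear of the class.
\item Because $w_1,w_2$ have no other neighbours, $Y$ attaches to the rest of the graph only through $u$ and $v$, and $\ell(u)<\ell(y)<\ell(w_2)$ for all $y\in Y$.
\item It shifts all of $Y$ up by one level and moves $w_2$ down to level $\ell(w_1)+1$, re-embedding $uw_2$ and $vw_2$ right next to $uw_1$ and $vw_1$.
\item The edges of $w_2$ only get shorter. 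The only edges whose span can grow are edges $xy$ with $x\in\{u,v\}$ and $y\in Y$, and for these $\ell(y)+1-\ell(x)\le \ell(w_2)-\ell(u)=\spn_{\Gamma}(uw_2)\le s$.
\end{itemize}
Iterating this removes all foreign vertices between ears of a class. Your proposal contains neither the lifting of the enclosed vertices nor this bound, and without them the lemma does not follow.
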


\begin{proof}
    Let~$X$ be the set of clockwise upper~$uv$-ears and assume that~$X$ is not consecutive in~$\Gamma$.  Let~$\ell$ denote the corresponding leveling and without loss of generality assume~$\ell(u) \le \ell(v)$.
    Then there exist clockwise upper ears~$w_1,w_2$ with~$\ell(w_1) < \ell(w_2)$ so that the cycle~$uw_1vw_2$ encloses at least one vertex that does not lie in~$X$.  By choosing~$w_1$ and~$w_2$ so that~$\ell(w_2) - \ell(w_1)$ is minimized, we may further assume that the set~$Y$ of vertices enclosed by the cycle is disjoint from~$X$; see \Cref{fig:non-consec-ears}.  Observe that, since $u$ and~$v$ are the only neighbors of~$w_1$ and~$w_2$, every edge that connects a vertex of~$Y$ to a vertex outside of~$Y$ ends either at~$u$ or at~$v$.  Moreover, $\ell(u) < \ell(y) < \ell(w_2)$ for all~$y \in Y$.
    
    We obtain a new upward-planar layered drawing by (i) shifting all vertices~$Y$ one level up and (ii) moving~$w_2$ to the level~$\ell(w_1)+1$ and changing the embedding so that the edge~$uw_2$ immediately succeeds the edge~$uw_1$ in the clockwise order of the edges around~$u$ and, likewise, the edge~$vw_2$ immediately precedes the edge~$vw_1$ in the clockwise order of the edges around~$v$; see \Cref{fig:make-consec}.  Let~$\Gamma'$ denote the new drawing and let~$\ell'$ denote the new leveling.  
    
    Note that~$\Gamma'$ is again an upward-planar layered drawing; the edges within~$Y$ can be moved up rigidly, the fact hat~$w_2$ does not have neighbors other than~$u$ and~$v$ guarantee sufficient space on level~$\ell(w_2)$ to position the vertices of~$y \in Y$ with~$\ell(y) = \ell(w_2)-1$, and edges connecting~$uy$ or~$vy$ can be extended without introducing crossings.  Moreover, the edges~$uw_2$ and~$vw_2$ can be drawn along the edges~$uw_1$ and~$vw_1$ and then on to the level above, where there is space for~$w_2$ between the two vertices or edges left and right of~$w_1$ on level~$\ell(w_1)$ in~$\Gamma$.

    Moreover, we claim that~$\spn(\Gamma') \le \spn(\Gamma)$.  To see this, observe that the only edges whose span changes are edges incident to~$u$ or~$v$.  The edges incident to~$w_2$ even decrease their span.  Thus, if the span increases, it must be due to an edge that connects~$u$ or~$v$ to an vertex of~$Y$.  Let~$xy$ with~$x \in \{u,v\}$ and~$y \in Y$ be an edge.  Then~$\spn_{\Gamma'}(xy) = \ell'(y) - \ell'(x) = \ell(y)+1 - \ell(x) \le \ell(w_2) - \ell(u) = \spn_{\Gamma}(uw_2)$.   Thus~$\spn(\Gamma') \le \spn(\Gamma)$.

    By repeatedly applying this operation, we may assume that all upper clockwise $uv$-ears are consecutive in~$X$.  Note further that this does not invalidate the consecutivity of counterclockwise upper $uv$-ears or clockwise/counterclockwise lower $uv$-ears, or any kinds of $u'v'$-ears with~$\{u,v\} \ne \{u',v'\}$.  Moreover, with an analogous operation also counterclockwise upper $uv$-ears as well as clockwise and counterclockwise lower~$uv$-ears can be made consecutive. Applying this modification for each pair of vertices~$u,v \in M$ that has $uv$-ears thus yields the desired drawing.
\end{proof}

\begin{figure}
    \centering
    \begin{subfigure}[b]{.3\textwidth}
    \centering
    \includegraphics[page=1]{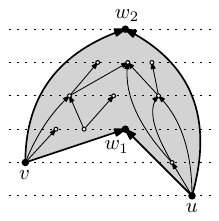}
    \subcaption{}\label{fig:non-consec-ears}
    \end{subfigure}
    \begin{subfigure}[b]{.3\textwidth}
    \centering
    \includegraphics[page=2]{figures/consecutive_ears}
    \subcaption{}\label{fig:make-consec}
    \end{subfigure}
    \caption{(a) An initial drawing, where the clockwise upper $uv$-ears~$w_1$ and~$w_2$ are not consecutive.  The area enclosed by them is shaded gray and the enclosed vertices are drawn as small white disks.  (b) A modified drawing, where~$w_1$ and~$w_2$ are consecutive, obtained by shifting the enclosed vertices one level up and moving~$w_2$ down to the level above~$w_1$.}
\end{figure}

It follows that we may assume that for each pair of vertices~$\{u,v\}$
the upper/lower clockwise/counterclockwise ears appear consecutively in an upward-planar layered drawing~$\Gamma$ of $G$.
For a pair~$u,v \in M$, let~$U_{u,v}$ and~$L_{u,v}$ denote the set of upper and ears in $G$, respectively.  If~$|U_{u,v}| > 6$,
let~$U_{u,v}' = \{w_1^{u,v},\dots ,w_6^{u,v}\}$ be a subset of six
vertices.  We then remove from~$G$ all vertices
of~$U_{u,v} \setminus U_{u,v}'$ and add the \emph{upper block replacement edges}~$u_1^{u,v} = w_1^{u,v}w_3^{u,v}$
and~$u_2^{u,v} = w_2^{u,v}w_4^{u,v}$.  Similarly, if~$|L_{u,v}| > 6$, we keep a set~$L'_{u,v} \subseteq L_{u,v}$, remove the remaining ears and add corresponding \emph{lower block replacement} edges~$l_1^{u,v}$ and~$l_2^{u,v}$.  Let~$G' = (V',E')$ be obtained
by applying these reductions to all pairs~$u,v \in M$ and let~$S \subseteq E'$ be
the set of (upper and lower) block replacement edges.  We call an upward-planar layered drawing of~$G'$ \emph{$s$-good} if (i) each edge in~$E' \setminus S$ has span at most~$s$ and (ii) for each pair~$u,v \in M$ for which $G'$ contains an upper block replacement edge, we have $\spn(u_1^{u,v}) + \spn(u_2^{u,v}) \ge |U_{u,v}| - 4$ and (ii) for each pair~$u,v \in M$ for which $G'$ contains a lower block replacement edge, we have $\spn(l_1^{u,v}) + \spn(l_2^{u,v}) \ge |L_{u,v}| - 4$.

\begin{lemma}
  \label{lem:ears-replaced}
  The graph $G$ has an upward-planar layered drawing with span~$s$ if and only if~$G'$ has an upward-planar layered drawing that is $s$-good.
\end{lemma}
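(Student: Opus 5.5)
The plan is to prove both directions by local surgery around the ears of a single pair $u,v\in M$, handling each pair independently (and the lower ears symmetrically). Throughout I read an upper $uv$-ear $w$ as a degree-$2$ vertex of $G-M$ with both neighbors $u,v$ lying below it. I assume $\ell(u)\le\ell(v)$, so in any drawing $\ell(w)>\ell(v)$. The structural fact I will use first is the following. Among the clockwise upper ears, any two ears $w,w'$ are nested: one lies inside the $4$-cycle $uwvw'$ formed with the other, and the inner one is on a strictly lower level. Hence the clockwise ears form a chain with pairwise distinct levels, and so do the counterclockwise ones. Consequently the upper ears occupy two chains, and an arbitrary number of additional ears can be accommodated only by spending levels.

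For the direction $G\Rightarrow G'$, I start from a drawing of span $s$ and apply \cref{lem:ears-consecutive}, so that each chain is consecutive. Listing the clockwise chain bottom-to-top as $c_1,\dots,c_p$ and the counterclockwise chain as $c'_1,\dots,c'_q$, I choose the six kept ears so that the removed ears form a contiguous middle segment of each chain. I take $w_1,w_3$ to be the ears of the clockwise chain directly below and directly above its removed segment, and $w_2,w_4$ analogously in the counterclockwise chain. The remaining two ears $w_5,w_6$ are spent on the degenerate cases: a chain with too few ears to have both a lower and an upper neighbour of a removed segment, and fixing the identities of the kept ears so that the reduction does not depend on the drawing. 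Here I rely on the reduction being insensitive to which six ears are kept, since all upper $uv$-ears are equivalent. After deleting the removed ears, consecutiveness guarantees that the region between the cycles through $w_1$ and $w_3$ contains no other vertex. I can therefore route the edge $w_1w_3$ upward through this emptied region without crossings, and likewise $w_2w_4$. Its span is $\ell(w_3)-\ell(w_1)$, which exceeds the number of removed clockwise ears, because these ears occupied distinct levels strictly between $w_1$ and $w_3$. Summing over both chains gives $\spn(u_1^{u,v})+\spn(u_2^{u,v})\ge (|U_{u,v}|-6)+2=|U_{u,v}|-4$. All other edges keep their spans, so the drawing is $s$-good.

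For the direction $G'\Rightarrow G$, I take an $s$-good drawing. I reinsert removed ears in a thin strip next to $u_1^{u,v}$ and $u_2^{u,v}$, placing one ear on each intermediate level strictly between $\ell(w_1)$ and $\ell(w_3)$. The $s$-goodness condition guarantees $(\spn(u_1^{u,v})-1)+(\spn(u_2^{u,v})-1)\ge |U_{u,v}|-6$, so the two strips together offer enough free levels. The two edges of each reinserted ear run alongside the edges $uw_1,vw_1$ and then climb within the strip, nested between the cycles $uw_1vw_1$ and $uw_3vw_3$. Hence they cross nothing, and their spans are bounded by the spans of $uw_3$ and $vw_3$, which are at most $s$. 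Finally, I delete the replacement edges. These have no span constraint, so their removal only helps.

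I expect the main obstacle to be the forward direction's case analysis. The difficulty is fixing once and for all which six ears are kept and guaranteeing the right orientation, namely that $w_1$ is below $w_3$ in the same chain, and similarly for $w_2,w_4$. This must hold even when one chain holds almost all ears or when the drawing mixes the chains unexpectedly. I would first try to handle this with the isomorphism and relabeling freedom among equivalent ears, using $w_5,w_6$ as the spare top elements of the two chains.
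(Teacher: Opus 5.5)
Your plan follows the paper's proof in both directions. Forward, you use \cref{lem:ears-consecutive}, the exchangeability of ears, and the fact that ears of one orientation lie on pairwise distinct levels, and you route each replacement edge through the emptied region. Backward, you reinsert $\mathrm{span}(u_i^{u,v})-1$ ears alongside $uw_1$, $vw_1$ and the replacement edge. The paper does this iteratively: it subdivides the replacement edge at the lowest level it crosses and discards the lower piece.

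The loose end you flag is real as you state it. With one replacement edge per chain and the spares on top of the two chains, the construction breaks as soon as one chain has at most one ear (e.g.\ all upper ears are clockwise), which can certainly happen. The paper avoids this by not tying $u_1^{u,v}$ and $u_2^{u,v}$ to different chains:
\begin{itemize}
\item Since $|U_{u,v}|\ge 7$, the union of the two consecutive chains can always be split into consecutive pieces $X_1,X_2,X_3,X_4$, each inside one chain, with $|X_1|=|X_2|=1$ and $|X_3|,|X_4|\ge 2$. A chain with at most one ear simply becomes one of the singletons.
\item One then relabels so that $w_5,w_6$ are the singletons, and $w_1,w_3$ (resp.\ $w_2,w_4$) are the lowest and highest ears of $X_3$ (resp.\ $X_4$). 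The removed ears are exactly the interiors of $X_3$ and $X_4$, and your span count then goes through unchanged.
\end{itemize}

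Separately, in the backward direction your remark that the reinserted ears are ``nested between the cycles'' of $w_1$ and $w_3$ is not justified for an arbitrary $s$-good drawing of $G'$. There $w_1$ and $w_3$ may even lie on different sides of $\{u,v\}$, one clockwise and one counterclockwise. The remark is not needed, though. The thin-strip routing along $uw_1$, $vw_1$ and then along $u_1^{u,v}$ is crossing-free regardless. The span bound only uses $\ell(w)<\ell(w_3)$ together with $\mathrm{span}(uw_3),\mathrm{span}(vw_3)\le s$.
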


\newcommand{\Xcw}{X_{\mathrm{cw}}}
\newcommand{\Xccw}{X_{\mathrm{ccw}}}

\begin{proof}
  Let~$\Gamma'$ be an $s$-good drawing of $G'$.  We show how to insert the removed upper $uv$-ears from~$U_{u,v} \setminus U'_{u,v}$ into~$\Gamma'$.  Doing this for all removed upper and lower ears for all pairs~$u,v \in M$ then yields the desired drawing~$\Gamma$ of~$G$. 
  
  Consider a block replacement edge $e=xy$ connecting two upper $uv$-ears~$x$ and~$y$ and let $\ell$ denote the lowest level that is crossed by $e$. Position a new vertex~$w$ by subdividing the edge~$e$ at the point where it crosses~$\ell$.  The edges~$uw$ and~$vw$ can be inserted as $y$-monotone non-crossing curves by first following~$ux$ and~$vx$, respectively, and then following the curve~$e$ until~$w$; see \Cref{fig:ear-reinsertion}.  Afterwards, we remove the edge from~$xw$.  This effectively reinserts the $uv$-ear~$w$ and replaces the block replacement edge~$xy$ by the upper block replacement edge~$wy$ whose span is one less than the span of~$xy$.  Note further that the span of~$uw$ is less than the span of~$uy$ and likewise the span of~$vw$ is less than the span of~$vy$.  Thus, this modification does not increase the span of the drawing.  By iterating this process with the shortened upper block replacement edge~$wy$, we can thus insert as many~$uv$-ears as levels are crossed by the initial upper block replacement edge, i.e.,~$\spn(xy)-1$ $uv$-ears.  The same can be done for the other upper block replacement edge of the pair~$u,v$.  Since~$\Gamma'$ is $s$-good, this suffices to remove all of the removed upper~$uv$-ears.  As mentioned above, we can independently perform this reinsertion for removed upper and lower ears and thus obtain the desired drawing~$\Gamma$ of~$G$.

  Conversely, let~$\Gamma$ be an upward-planar layered drawing of~$G$ with span~$s$.  By \Cref{lem:ears-consecutive}, we may assume that all the upper/lower clockwise/counterclockwise $uv$-ears are consecutive for each pair~$u,v \in M$.  We construct an $s$-good drawing~$\Gamma'$ of~$G'$ as follows.  Consider a pair~$u,v \in M$ that has more than six upper ears and let~$\Xcw$ and~$\Xccw$ denote the set of clockwise and counterclockwise upper $uv$-ears, respectively, and note that~$\Xcw$ and~$\Xccw$ are consecutive.  Since~$|\Xcw| + |\Xccw| \ge 7$, we can partition~$\Xcw \cup \Xccw$ into consecutive sets~$X_1,\dots,X_4$ with~$|X_1|=|X_2|=1$ and~$|X_3|,|X_4| \ge 2$.  As the upper $uv$-ears are completely exchangable, after renaming we may assume that~$X_1 = \{w_5^{u,v}\}$, $X_2 = \{w_6^{u,v}\}$, that the lowest vertices in~$X_3$ and~$X_4$ are~$w_1^{u,v}$ and~$w_2^{u,v}$, respectively, and that the highest vertices in~$X_3$ and~$X_4$ are~$w_3^{u,v}$ and~$w_4^{u,v}$, respectively; see \Cref{fig:ear-partition}.  Due to the consecutivity of~$X_3$ and~$X_4$, after removing all other $uv$-ears, we can insert the upper block replacement edges~$u_1^{u,v}$ and~$u_2^{u,v}$ connecting the lowest and the highest vertex of~$X_3$ and of~$X_4$, respectively, in an upward fashion without crossings; see \Cref{fig:block-replacement}.  Note that the the vertices in~$X_3$ and in~$X_4$ lie on pairwise distinct levels, and the respective block replacement edge crosses all such levels.  Thus, in the modified drawing, the sum of their spans exceeds the number of removed upper $uv$-ears by at least~$2$.  Observe further that the spans of all other edges do not change.  
  We treat the lower $uv$-ears similarly, if there are more than six of them.  Applying this modification to each pair~$u,v$ that has more than six upper or lower $uv$-ears, yields an $s$-good drawing of~$G'$.
\end{proof}

\begin{figure}
    \centering
    \begin{subfigure}[b]{.3\textwidth}
    \centering
    \includegraphics[page=5]{figures/consecutive_ears}
    \subcaption{}\label{fig:ear-reinsertion}
    \end{subfigure}
    \begin{subfigure}[b]{.3\textwidth}
    \centering
    \includegraphics[page=3]{figures/consecutive_ears}
    \subcaption{}\label{fig:ear-partition}
    \end{subfigure}
    \begin{subfigure}[b]{.3\textwidth}
    \centering
    \includegraphics[page=4]{figures/consecutive_ears}
    \subcaption{}\label{fig:block-replacement}
    \end{subfigure}
    \caption{(a) Reinsertion of a removed upper ear on an upper block replacement edge. (b) Consecutive sets of upper clockwise and counterclockwise $uv$-ears. (c) Replacement of all but six of the $uv$-ears by two upper block replacement edges (drawn red).}
\end{figure}

We are now ready to prove our main result.

\begin{theorem}
  \label{thm:fpt-vc}
    Computing the span of a DAG~$G$ is FPT with respect to $\vc(G)$.
\end{theorem}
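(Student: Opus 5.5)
The plan is to reduce $G$ to the graph $G'$ of \cref{lem:ears-replaced}, whose size is bounded in $k=\vc(G)$, and then decide the existence of an $s$-good drawing of $G'$ by brute force. First compute a vertex cover $M$ of size $k$ in $O(2^k n)$ time. Then apply the reduction rules described before \cref{lem:ears-consecutive}. For each $v\in M$ keep at most one upper and one lower degree-1 neighbour. For each pair $u,v$ keep at most one transversal vertex. Note that $G-M$ has at most $2k$ vertices of degree at least three. After these rules, the only unbounded part of $G$ consists of the ears.

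Next, for each pair $u,v\in M$ with more than six upper (or lower) ears, replace all but six of them by the two block replacement edges, obtaining $G'$. By \cite[Lemma 10]{bekos_etal_span:arxiv}, only $O(k)$ pairs carry ears, so $G'$ has $O(k)$ vertices and edges. By \cref{lem:ears-replaced}, $G$ has span at most $s$ if and only if $G'$ has an $s$-good drawing. We will test this for the relevant values of $s$. Since the span is at most $n-1$, it suffices to run over $s\le n$, or to use binary search.

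To test whether $G'$ has an $s$-good drawing, enumerate all candidate combinatorial layer structures of $G'$. We may compress the drawing so that no layer is empty except where a replacement edge needs length. Concretely, guess the vertical order of the vertices of $G'$ with ties, which gives $f(k)$ choices, and guess an upward-planar embedding compatible with it, again $f(k)$ choices. For a fixed order the actual $y$-coordinates are integer variables subject to the following constraints: ordering and equality constraints, $y(h)-y(t)\le s$ for original edges, and $\spn(u_1^{u,v})+\spn(u_2^{u,v})\ge |U_{u,v}|-4$ (similarly for lower ears). This is a system of difference constraints plus a few sum constraints. It has $O(k)$ variables and constraints, so it can be solved by an integer linear program with $O(k)$ variables in time $f(k)\cdot\mathrm{poly}(\log n)$ by Lenstra's algorithm. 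Level planarity for a fixed vertical order is checked in time depending only on $k$. In this check, the long replacement edges are handled as single $y$-monotone edges: a layered drawing with the same order can be stretched vertically without affecting planarity.

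The main obstacle is this last step. I must justify that planarity depends only on the vertical order of vertices and not on the exact gaps, so that the problem decouples into a finite combinatorial guess plus an integer program. Gaps matter only through spans, because edges are $y$-monotone curves and inserting empty layers preserves upward planarity. I would prove this by a stretching argument and then apply Lenstra's algorithm. Alternatively, since the constraints are a difference system with two extra sum constraints per pair, I would guess the distribution of the span budget among $O(1)$ critical gaps, which reduces the problem to shortest-path feasibility.
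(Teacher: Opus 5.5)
Your proof is correct. Up to the construction of $G'$ and the appeal to \cref{lem:ears-replaced} it coincides with the paper's; only the final test for an $s$-good drawing takes a different route.

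The paper branches over st-planar augmentations $X$ of $G'$ and planar embeddings $\mathcal E$ of $G'+X$. For each branch it writes the dual circulation of \cref{th:st-plane} as an ILP in the edge spans, with augmentation edges unbounded, and adds the sum constraints for the block replacement edges. Planarity and span feasibility are then certified by a single object.

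You instead branch over the weak vertical order of $V(G')$ and test level planarity of that order. You then write the ILP directly in the $y$-coordinates: difference constraints plus the sum constraints. The fact you rely on is that level planarity depends only on the weak order. This holds because a strictly increasing homeomorphism of the $y$-axis preserves $y$-monotonicity and planarity. Hence an integer solution realizing a level-planar order is an $s$-good drawing, and conversely every $s$-good drawing induces such an order and solution.

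Your route is more self-contained. It does not need the fact that every $s$-good drawing extends to an st-planar drawing with the guessed embedding, and it uses only an off-the-shelf level-planarity test on $O(k)$ vertices. The paper's route reuses the flow model of \cref{sec:algorithms} and keeps the embedding explicit.

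Two minor remarks:
\begin{itemize}
\item Guessing an upward-planar embedding in addition to the order is superfluous, since neither the level-planarity test nor your ILP uses it.
\item Your fallback of guessing how the span budget is distributed over critical gaps is unnecessary. As stated it would give $n^{O(k)}$ branches rather than an FPT bound, since budgets range up to $n$, so stick with Lenstra's algorithm.
\end{itemize}
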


\begin{proof}
Given our planar input graph~$G=(V,E)$ together with a vertex cover~$M$ of size~$k$, we first construct the graph~$G'=(V',E')$ by (i) removing all but one upper and one lower degree-$1$-neighbor for each vertex in~$M$ (at most $2k$ vertices in total), (ii) removing all but one transversal degree-$2$-vertex for each pair~$u,v \in M$ (at most $3k$ vertices in total), and (iii) reducing for each pair~$u,v \in M$ the number of upper and lower~$uv$-ears to at most~$6$ by introducing upper and lower block replacement edges (at most~$12 \cdot 3k$ vertices).  Together with the $k$ vertices from~$M$ and the at most~$2k$ vertices of~$G-M$ that have degree at least~$3$, we thus get that~$G'$ has at most~$44k$ vertices.

By \Cref{lem:ears-replaced}, $G$ has an upward-planar layered drawing with span~$s$ if and only if $G'$ has an $s$-good drawing.  To test the latter, we exploit the fact that the size of~$G'$ is bounded.  As the number of ways in which $G'$ can be augmented to be $st$-planar as well as the number of ways the augmented graph can be embedded planarly is bounded by a function of~$k$, we can branch over all choices of an augmentation edges~$X$ and an embedding~$\mathcal E$ in FPT time.  
It thus suffices to test whether~$G'+X$
admits an $s$-good drawing with the given embedding~$\mathcal E$, where the span of the edges in~$X$ may be arbitrarily large.  Recall from \Cref{sec:algorithms} that testing whether~$G'+X$ has an upward-planar layered drawing with span~$s$ with embedding~$\mathcal E$, where the edges in~$X$ may have arbitrarily large span, can be modeled as a network flow problem, and thus also as an integer linear program (ILP) $I$ whose number of variables and clauses is bounded linearly in~$k$.  It is now straightforward to add the missing conditions for an $s$-good drawing (i.e., the lower bounds on the sum of spans for the pairs of block replacement edges) to this ILP to obtain an ILP $I'$ whose number of variables and clauses is still bounded linearly in~$k$ and that has a solution if and only if~$G'+X$ has an $s$-good drawing with embedding~$\mathcal E$, where edges in~$X$ may have arbitrary span.  This can be tested in FPT time with respect to~$k$, since ILPs can be solved in FPT time with respect to the number of variables~\cite{lenstra-ipfnv-83}.
\end{proof}

\section{Conclusion}\label{sec:conclusion}

In this paper, we studied the span of directed graphs. Our research opens up many questions. From a combinatorial perspective, \cref{prop:binaryTree,cor:Tk} rule out the possibility of an upper bound for the span of $n$-vertex degree-$d$ directed trees as a function of the length $\ell$ of the longest directed path only, or of $n+d$ with a sub-linear dependence on~$n$. While our $O(d^{\ell-1})$ bound (\cref{th:ub-path-degree}) is almost tight (\cref{cor:Tk}), the $O(\ell\cdot n^{0.695})$ bound of \cref{th:ub-longest-path} leaves room for improvement: Can the exponent of~$n$ be lowered via a tree decomposition better than the one obtained by repeatedly splitting the tree along greedy paths? Can a sub-linear function of~$n+\ell$ be achieved? From an algorithmic perspective, \cref{thm:nph-single-source,thm:nph-trees} rule out the possibility of solving $k$-\textsc{span upward planarity} efficiently (unless P=NP), even for $k=2$ and for directed trees or single-source graphs. However, the problem might be tractable for instances with a fixed planar embedding, e.g., for triconnected graphs. Also, it would be interesting to provide an FPT algorithm parameterized by the treedepth, thus strengthening our results in %
\cref{thm:fpt-vc,thm:fpt-para+span}.

\bibliography{references}
\end{document}